\documentclass[twoside]{article}

\usepackage{amsmath}
\usepackage{amsfonts}
\usepackage{amssymb}
\usepackage{amsthm}
\usepackage{bm}
\usepackage{booktabs}
\usepackage{graphicx}
\usepackage{subcaption}
\usepackage[dvipsnames]{xcolor}

\usepackage{algorithm}
\usepackage{algpseudocode}

\usepackage{hyperref}

\usepackage[capitalise]{cleveref}

\hypersetup{
    pdftitle={SLICE: Sequential Likelihood-Free Inference via Conformal E-Values},
    pdfauthor={Stefano Cortinovis, Laura Battaglia, Fran\c{c}ois Caron},
    linkcolor=blue,
    filecolor=magenta,
    urlcolor=cyan,
}

\newtheorem{theorem}{Theorem}[section]
\newtheorem{proposition}[theorem]{Proposition}

\newtheorem{corollary}[theorem]{Corollary}

\theoremstyle{definition}

\theoremstyle{remark}

\newcommand{\PP}{\mathbb P}
\newcommand{\EE}{\mathbb E}
\newcommand{\dd}{\mathrm d}

\newcommand{\calB}{{\mathcal{B}}}

\newcommand{\calF}{{\mathcal{F}}}

\newcommand{\calY}{{\mathcal{Y}}}

\newcommand{\thetatrue}{\theta_\ast}

\DeclareMathOperator*{\argmin}{arg\,min}

\DeclareMathOperator{\KL}{KL}

\def\[#1\]{\begin{align}#1\end{align}}

\usepackage[inline]{enumitem}
\newlist{enuminline}{enumerate*}{1}
\setlist[enuminline]{label=\textbf{\arabic*})}

\usepackage[preprint]{aistats2027}

\usepackage[numbers, compress]{natbib}

\begin{document}

\runningauthor{Cortinovis, Battaglia, Caron}

\twocolumn[

\aistatstitle{SLICE: Sequential Likelihood-Free Inference via Conformal E-Values}

\aistatsauthor{ Stefano Cortinovis$^{*}$ \And Laura Battaglia$^{*}$ \And Fran\c{c}ois Caron }

\aistatsaddress{} ]

{\renewcommand{\thefootnote}{}\footnotetext{$^{*}$Equal contribution.
All authors: Department of Statistics, University of Oxford.
Correspondence to: Stefano Cortinovis \texttt{<cortinovis@stats.ox.ac.uk>}.}}

\begin{abstract}
    Many scientific models can be simulated but have intractable likelihoods, making frequentist uncertainty quantification difficult. We construct exact anytime-valid confidence sequences for the parameters of such models. At each observation, fresh simulations from each candidate null conformalize a positive predictable score into a conditional e-value; their product is a test martingale, which we invert over the parameter. Validity holds for any such score, whatever its accuracy. We show that the oracle likelihood-ratio score maximizes expected log growth, which motivates approximating it with an amortized neural ratio estimator. On synthetic benchmarks, an audit of model routing in an LLM API, and microlensing data, the resulting confidence sequences stay valid under continuous monitoring, where fixed-sample methods do not, and neural scores remain competitive with exact likelihood ratios and application-specific scores.
\end{abstract}

\section{INTRODUCTION}

Many scientific models are easiest to express as simulators.
Given a parameter $\theta$, a computer program can generate synthetic observations $Y\sim P_\theta$, but evaluating the likelihood $f_\theta(y)$ may be impossible or prohibitively expensive.
This setting, often called likelihood-free or simulation-based inference, occurs in population genetics, ecology, epidemiology, cosmology or climate science \citep{Cranmer2020}.
Inference replaces likelihood evaluation with repeated simulation, using methods such as neural density estimation or classification-based likelihood-ratio estimation. These methods can be effective, but their uncertainty quantification is usually approximate or designed for a prespecified sample size
\citep{Hermans2020,Thomas2022,Miller2022,Dalmasso2020,Dalmasso2024}.

We study \emph{anytime-valid frequentist uncertainty quantification} in likelihood-free models. For observations $Y_1,Y_2,\ldots \overset{\mathrm{i.i.d.}}{\sim} P_{\thetatrue}$ arriving sequentially, we seek a sequence of random sets $(C_{n,\alpha})_{n\ge 1}$ forming a $(1-\alpha)$ confidence sequence (CS)~\citep{Lai1976,Howard2021}, i.e., satisfying
\begin{equation}
    \PP_{\thetatrue}\!\left(\thetatrue\in C_{n,\alpha}\text{ for all }n\ge 1\right)
    \ge 1-\alpha
    \label{eq:cs-definition}
\end{equation}
for every data generating parameter $\thetatrue$. Unlike a
collection of fixed-sample confidence sets, a CS remains valid under
continuous monitoring and at
data-dependent stopping times. Existing frequentist approaches to
likelihood-free inference primarily construct and calibrate
fixed-sample tests or confidence regions
\citep{Dalmasso2020,Dalmasso2024}. Obtaining the time-uniform guarantee
in \eqref{eq:cs-definition} requires a sequentially valid construction.
When likelihoods are available, confidence sequences can be obtained, via Ville's inequality \citep{Ville1939},
by inverting likelihood-ratio test martingales~\citep{Ramdas2023,Ramdas2025}.
In implicit models the ratios cannot be evaluated, and simply substituting a neural estimate
generally destroys the martingale property on which the guarantee relies.

We develop \textbf{SLICE}: Sequential Likelihood-Free Inference
via Conformal E-Values. SLICE turns amortized neural ratio estimates into anytime-valid confidence sequences. It builds on normalized-score conformal e-values
\citep{Gammerman1998,Balinsky2024,Vovk2025ConformalE}
and their sequential or simulation-based extensions
\citep{Gauthier2025,dombowsky2026besag}.

For each candidate parameter, we evaluate the incoming observation
and fresh simulations from the candidate model using the same
positive score function, chosen using only past information.
Dividing the observed score by the average score across the
observation and its references gives a \emph{normalized-score
e-value}. Multiplying these factors over time provides a test
martingale, whose inversion yields a confidence sequence. Coverage holds for any positive predictable score and any number of reference observations.

The score nevertheless matters for efficiency. We assess it
through expected log growth, a measure of how quickly evidence
accumulates against false candidates. Specializing the log-optimality
results of \citet{Koning2026}, we identify the likelihood ratio as
an optimal score within the normalized-score construction.
A classification-based proof connects this result to contrastive
learning \citep{Oord2018,Durkan2020} and bounds the efficiency lost to finitely many references. In practice, we approximate the optimal score with an amortized neural ratio estimator \citep{Hermans2020,Miller2022}.

We evaluate SLICE on Gaussian and $g$-and-$k$ location models and the SLCP benchmark, where neural scores yield confidence sequences close to conformalized likelihood ratios. We then apply it to a semi-synthetic audit of model routing in an LLM API, using real model replies, and to inference of the stellar initial mass function from OGLE-IV microlensing events. In both, SLICE stays valid under continuous monitoring, unlike fixed-sample intervals, with neural score confidence sequences performing comparably to those using application-specific scores. 

\section{RELATED WORK}
\label{sec:related}

\textbf{Simulation-based conformal e-values.}
Normalizing scores over an exchangeable batch yields conformal
e-values \citep{Vovk2025ConformalE}.
Closely related, \citet{dombowsky2026besag} develop
simulation-based normalized-score e-values, confidence-set
inversion, and sequential products, emphasizing
Besag--Clifford MCMC while also treating independent null
simulation. Our oracle-score result specializes the
log-optimality of normalized likelihood ratios under
permutation invariance \citep{Koning2026}, and our
finite-reference bound specializes the log-growth comparison
of \citet{dombowsky2026besag}.
SLICE builds on these by combining direct null
simulation with predictable alternatives and amortized neural
ratio scores for sequential likelihood-free parameter
inference.

\textbf{Frequentist simulation-based inference.}
ACORE and LF2I construct fixed-sample tests and confidence
regions using simulation-based calibration
\citep{Dalmasso2020,Dalmasso2024}.
WALDO similarly calibrates a Wald-type statistic built from
prediction algorithms or posterior estimators
\citep{Masserano2023}.
TRUST and TRUST\texttt{++} use conformal calibration of statistical
confidence sets \citep{Cabezas2026}.
These approaches address fixed-sample inference.
SLICE instead uses fresh simulations from each candidate null
to normalize the score of every incoming observation,
then inverts the resulting test martingales.
Under the stated sampling conditions, this gives time-uniform
coverage for any positive predictable score and reference-batch size, without estimating critical values.

\textbf{Neural ratio estimation.}
Our scores build on classification-based amortized ratio
estimation and contrastive learning
\citep{Hermans2020,Durkan2020,Miller2022}.
We use a likelihood-to-evidence network to construct ratios
between a predictable alternative and each candidate null.
The resulting neural score need not be an exact likelihood
ratio or itself a valid e-value: conditional exchangeability
and score normalization supply validity, while score quality
affects efficiency. This permits approximate neural models to
support exact sequential inference.
Further connections to posterior calibration, density-ratio
estimation, and sequential testing are discussed in
\cref{app:related}.

\textbf{Concurrent work.}
After the submission of an earlier version of this manuscript, an independent concurrent work by \citet{Forre2026} appeared on arXiv, studying likelihood-free anytime-valid hypothesis testing.
Both works use the normalized-score construction, building on earlier literature \cite[cf.~Lemma 3.1 and Eqs.~(56)--(58)]{Forre2026}, and establish closely related oracle optimality and finite-reference log-growth bounds (our \cref{prop:optimal-score}; \cite[Proposition 3.2]{Forre2026}).
\citet{Forre2026} additionally provide theoretical guarantees for approximate growth optimality of empirically trained scores.
Our emphasis is on parameter confidence sequences obtained by test inversion, parameter-amortized neural ratio scores with predictable adaptive alternatives, and extensive empirical evaluation on synthetic benchmarks and applications.

\section{PRELIMINARIES}

\paragraph{Notation and statistical setting.}

Let $\{P_\theta:\theta\in\Theta\}$ be a family of
probability distributions on $(\calY,\calB)$, dominated
by a common $\sigma$-finite measure $\mu$, with densities
$f_\theta=\dd P_\theta/\dd\mu$.
The observations are $Y_1,Y_2,\ldots\overset{\mathrm{i.i.d.}}{\sim}P_{\thetatrue}$, where $\thetatrue\in\Theta$ is fixed and unknown.
For each $\theta\in\Theta$, write $\PP_\theta$ for the
joint law of the observations and all auxiliary randomness
under $H_\theta:\thetatrue=\theta$, and $\EE_\theta$
for expectation under this law.

Let $(\calF_n)_{n\geq0}$ be the filtration generated by
the observations and auxiliary information revealed
through time $n$ which, in the context of this paper, will include simulator draws.
The initial sigma-field $\calF_0$ contains pretraining
and initialization information. We assume that $Y_i$ is
independent of $\calF_{i-1}$ for every $i\geq1$.

Finally, let $r_{\theta_1,\theta_0}(y)=f_{\theta_1}(y)/f_{\theta_0}(y)$ be the likelihood ratio, where $f_{\theta_0}(y)>0$.

\subsection{Test martingales and confidence sequences}
\label{sec:backgroundCS}

For a candidate parameter $\theta$, a nonnegative, adapted, integrable process
$(M_n^{(\theta)})_{n\geq0}$, initialized at $M_0^{(\theta)}=1$,
is a \emph{test supermartingale} under $H_\theta$ if
\begin{equation*}
\EE_\theta
\!\left[M_n^{(\theta)}\mid\mathcal F_{n-1}\right]
\leq M_{n-1}^{(\theta)}
\quad \PP_\theta\text{-a.s.},\qquad n\geq1.
\end{equation*}
It is a \emph{test martingale} when equality holds \citep{Shafer2011}.

A useful construction multiplies \emph{conditional e-values}:
nonnegative, $\calF_n$-measurable factors $E_n^{(\theta)}$
satisfying
\begin{equation}
\EE_\theta\!\left[
    E_n^{(\theta)}\mid\calF_{n-1}
\right]\leq1.\label{eq:conditionalevalue}
\end{equation}
Their product $M_n^{(\theta)}=\prod_{i=1}^n E_i^{(\theta)}$ is a test supermartingale, and is a test martingale if each
factor has conditional expectation one.

For every $\alpha\in(0,1)$, Ville's inequality
\citep{Ville1939} gives
\begin{equation}
\PP_\theta
\!\left(
\sup_{n\geq0}M_n^{(\theta)}\geq\alpha^{-1}
\right)
\leq\alpha.
\label{eq:ville}
\end{equation}
Consequently, if $(M_n^{(\theta)})_{n\geq0}$ is a test
supermartingale under $\PP_\theta$ for every $\theta\in\Theta$,
then $C_{n,\alpha}=\{\theta\in\Theta:M_n^{(\theta)}<\alpha^{-1}\}$, $n\geq1$, forms a $(1-\alpha)$ confidence sequence \citep{Lai1976,Howard2021}, that is, satisfies \eqref{eq:cs-definition} for every $\thetatrue\in\Theta$.
See \cref{app:coverage} for the coverage argument.

\paragraph{Likelihood-ratio test martingale.}
For a fixed alternative $\vartheta\in\Theta$, taking
$E_i^{(\theta)}=r_{\vartheta,\theta}(Y_i)$ gives
\begin{equation}
M_n^{(\theta)}
=
\prod_{i=1}^n r_{\vartheta,\theta}(Y_i),
\qquad M_0^{(\theta)}=1.
\label{eq:lr-martingale}
\end{equation}
Under $H_\theta$, each factor has conditional
expectation one given $\calF_{i-1}$, so this process
is a test martingale \citep{Ramdas2025}.

The choice $\vartheta=\thetatrue$ gives the
\emph{oracle likelihood-ratio martingale}, whose
expected log growth is
\begin{equation*}
\frac1n\EE_{\thetatrue}\!
\left[\log M_n^{(\theta)}\right]
=
\KL(P_{\thetatrue}\Vert P_\theta),
\qquad n\geq1,
\end{equation*}
provided this divergence is finite. This oracle choice maximizes expected
log growth under $\PP_{\thetatrue}$ among all
$(\calF_n)$-adapted test martingales under $\PP_\theta$
initialized at one \citep[Theorem~7.11]{Ramdas2025}. This benchmark requires the unknown parameter
$\thetatrue$ and likelihood ratios that are unavailable
in our setting.
Substituting learned ratios generally loses the
conditional expectation property \eqref{eq:conditionalevalue}, motivating the
normalization developed in
\cref{sec:likelihood-free-martingale}.

\subsection{Neural ratio estimation}
\label{sec:backgroundNRE}

Neural ratio estimation (NRE) learns likelihood ratios from
simulator draws by classification, without evaluating the
underlying densities \citep{Cranmer2015,Hermans2020}.
 In amortized NRE, a single
network $s_\phi(\theta,y)$ takes a simulator parameter value
$\theta$ and an observation $y$ as inputs; $\phi$ denotes
its trainable weights.
Given a training density $q$ over $\Theta$, define the marginal density $m(y)=\int_\Theta f_\vartheta(y)q(\vartheta)\,\dd\vartheta$.

We use NRE-C \citep{Miller2022}, which trains a multiclass
classifier to identify which candidate parameter generated
a simulated observation, with an additional class in which
the observation is independent of all candidates.
Its ideal population target is the log
likelihood-to-evidence ratio
\begin{equation}
s^\star(\theta,y)
=
\log\frac{f_\theta(y)}{m(y)}.
\label{eq:lte-target}
\end{equation}
Consequently, differences of network outputs provide
likelihood-ratio estimates:
\begin{equation*}
\exp\!\left\{
s_\phi(\theta_1,y)-s_\phi(\theta_0,y)
\right\}
\approx
r_{\theta_1,\theta_0}(y)
=
\frac{f_{\theta_1}(y)}{f_{\theta_0}(y)},
\end{equation*}
since the marginal density cancels.
This is the construction used for the neural scores in
\cref{sec:neural-scores}.
The training objectives and alternative estimators,
including NRE-A and NRE-B, are discussed in
\cref{app:nre-losses}; implementation details are given
in \cref{app:training}. 
\section{SEQUENTIAL UNCERTAINTY QUANTIFICATION FOR LIKELIHOOD-FREE MODELS}
\label{sec:sequential-uq}

We first construct a likelihood-free test martingale by
normalizing a predictable score using fresh null simulations,
and invert these martingales to obtain confidence sequences.
We then characterize the score that maximizes expected log
growth and use this characterization to motivate a practical
implementation based on amortized neural ratio estimation.

\subsection{A likelihood-free test martingale}
\label{sec:likelihood-free-martingale}

Fix a candidate $\theta\in\Theta$ and an integer $K\geq1$.
At time $i$, choose a positive measurable score
\[
T_i^{(\theta)}:\calY\to(0,\infty)
\]
using only $\calF_{i-1}$, before observing $Y_i$ or its reference simulations. We call such a score \emph{predictable}.

Generate a reference batch
\begin{equation}
\widetilde Y_{i1}^{(\theta)},\ldots,
\widetilde Y_{iK}^{(\theta)}
\overset{\mathrm{i.i.d.}}{\sim}P_\theta,
\label{eq:referencesimulations}
\end{equation}
independently of $(\calF_{i-1},Y_i)$. All reference batches generated through time $i$, across all
candidate parameters, are included in the auxiliary randomness
recorded in $\calF_i$. Under $H_\theta$, conditional on $\calF_{i-1}$, the observation
and its reference batch are therefore exchangeable draws from
$P_\theta$.

Following the simulation-based normalized-score construction
\citep{Vovk2025ConformalE,dombowsky2026besag}, define
\begin{align}
E_i^{(\theta)}
&=
\frac{(K+1)T_i^{(\theta)}(Y_i)}
{T_i^{(\theta)}(Y_i)
+\sum_{j=1}^K
T_i^{(\theta)}(\widetilde Y_{ij}^{(\theta)})},
\label{eq:main-evalue}\\
M_0^{(\theta)}&=1,
\qquad
M_n^{(\theta)}=\prod_{i=1}^n E_i^{(\theta)}.
\nonumber
\end{align}
Each factor compares the observed score with the average
score across the observation and its references.
Conditional exchangeability makes this factor a conditional
e-value, as established below.

\begin{proposition}[Exact validity]
\label{prop:exact-martingale}
For every $\theta\in\Theta$ and every sequence of positive
predictable scores $(T_i^{(\theta)})_{i\geq1}$, the process
$(M_n^{(\theta)})_{n\geq0}$ defined by
\eqref{eq:referencesimulations} and \eqref{eq:main-evalue} is a test martingale under $H_\theta$.
\end{proposition}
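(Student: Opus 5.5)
The plan is to show that each factor $E_i^{(\theta)}$ is a conditional e-value with conditional expectation exactly one under $H_\theta$. The product structure then gives the martingale property, as noted after \eqref{eq:conditionalevalue}. The remaining requirements are routine. Nonnegativity holds because $T_i^{(\theta)}>0$, so each $E_i^{(\theta)}\in(0,K+1]$. Adaptedness holds because $Y_i$ and the $i$-th reference batch are recorded in $\calF_i$ and $T_i^{(\theta)}$ is $\calF_{i-1}$-measurable. Integrability follows from boundedness: $0<M_n^{(\theta)}\le (K+1)^n$. Once $\EE_\theta[E_i^{(\theta)}\mid\calF_{i-1}]=1$ is established, we have $\EE_\theta[M_n^{(\theta)}\mid\calF_{n-1}]=M_{n-1}^{(\theta)}\EE_\theta[E_n^{(\theta)}\mid\calF_{n-1}]=M_{n-1}^{(\theta)}$.

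For the key identity, write $Z_0=Y_i$ and $Z_j=\widetilde Y_{ij}^{(\theta)}$ for $j=1,\dots,K$, and abbreviate $T=T_i^{(\theta)}$. Under $H_\theta$, $Y_i$ is independent of $\calF_{i-1}$ with law $P_\theta$. The references are i.i.d.\ $P_\theta$ and independent of $(\calF_{i-1},Y_i)$. Hence, conditional on $\calF_{i-1}$, the vector $(Z_0,\dots,Z_K)$ is i.i.d.\ $P_\theta$, in particular exchangeable. Meanwhile $T$ is a fixed (conditionally deterministic) measurable positive function.

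Define $S=\sum_{k=0}^K T(Z_k)>0$ and $W_j=(K+1)T(Z_j)/S$, so that $E_i^{(\theta)}=W_0$. Since $S$ is a symmetric function of the $Z_k$, exchangeability gives $\EE_\theta[W_0\mid\calF_{i-1}]=\EE_\theta[W_j\mid\calF_{i-1}]$ for every $j$. Averaging over $j$ and using $\sum_{j=0}^K W_j=K+1$ then yields $\EE_\theta[W_0\mid\calF_{i-1}]=1$.

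The main obstacle is making the conditioning step rigorous: $T$ is itself $\calF_{i-1}$-measurable and random, so we cannot simply fix it. I would handle this with a regular conditional distribution, or more simply a freezing lemma. Write $T(y)=\tau(\omega_{i-1},y)$ with $\tau$ jointly measurable in $(\calF_{i-1}$-information$,y)$. Since $(Z_0,\dots,Z_K)$ is independent of $\calF_{i-1}$, we have $\EE_\theta[g(\tau(\cdot,Z))\mid\calF_{i-1}]=G(\omega_{i-1})$, where $G(t)=\EE[g(\tau(t,Z))]$ is evaluated at the frozen value. The exchangeability computation then applies for each fixed $t$. Boundedness of $W_0$ guarantees that all expectations exist, so no further integrability conditions are needed.
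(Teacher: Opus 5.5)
Your proposal is correct and follows essentially the same route as the paper: conditional exchangeability of $Y_i$ and its $K$ references given $\calF_{i-1}$, together with the fact that the normalized scores have equal conditional expectations and sum to a constant, gives $\EE_\theta[E_i^{(\theta)}\mid\calF_{i-1}]=1$, and the product construction then yields the test martingale. Your freezing-lemma treatment of the $\calF_{i-1}$-measurable random score, and your explicit checks of adaptedness and boundedness ($M_n^{(\theta)}\le (K+1)^n$), supply details the paper leaves implicit.
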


\begin{proof}
Fix $\theta$ and condition on $\mathcal F_{i-1}$ under
$H_\theta$. Set $Z_{i0}=Y_i$ and
$Z_{ij}=\widetilde Y_{ij}^{(\theta)}$ for $j=1,\ldots,K$.
The batch is conditionally exchangeable, and
$T_i^{(\theta)}$ is fixed given the past. Hence the
normalized scores
\begin{equation*}
W_{ij}
=
\frac{T_i^{(\theta)}(Z_{ij})}
{\sum_{\ell=0}^K T_i^{(\theta)}(Z_{i\ell})},
\qquad j=0,\ldots,K,
\end{equation*}
have equal conditional expectations and sum to one.
Therefore,
\begin{equation*}
\EE_\theta[E_i^{(\theta)}\mid\mathcal F_{i-1}]
=
(K+1)\EE_\theta[W_{i0}\mid\mathcal F_{i-1}]
=1.
\end{equation*}
The product construction in \cref{sec:backgroundCS}
gives the test-martingale property.
\end{proof}

Consequently, for every $\alpha\in(0,1)$,
\begin{equation}
C_{n,\alpha}
=
\{\theta\in\Theta:M_n^{(\theta)}<\alpha^{-1}\}
\label{eq:likelihood-free-cs}
\end{equation}
satisfies the time-uniform coverage guarantee
in \eqref{eq:cs-definition}. The reference simulations
may share randomness across candidate parameters:
the proof requires conditional exchangeability for
each candidate under its own null, rather than
independence across candidates.

Validity holds for every positive predictable score.
The score determines expected log growth against false
candidates, which we study next.

\subsection{The optimal score}
\label{sec:optimal-score}

We assess a score through the expected log growth of
its test martingale under $P_{\thetatrue}$. For a fixed positive measurable
score $T$, define
\begin{equation}
\Gamma_K^{(\theta)}(T)
=
\EE\!\left[
\log
\frac{(K+1)T(X_0)}
{\sum_{j=0}^K T(X_j)}
\right],
\label{eq:finite-k-growth}
\end{equation}
where $X_0\sim P_{\thetatrue}$ and
$X_1,\ldots,X_K\overset{\mathrm{i.i.d.}}{\sim}P_\theta$
are mutually independent. If the same score $T$ is used
at every step, then
\begin{equation*}
\frac1n\EE_{\thetatrue}[\log M_n^{(\theta)}]
=
\Gamma_K^{(\theta)}(T).
\end{equation*}
Within the simulation-based normalized score construction \eqref{eq:main-evalue}, an \emph{oracle score} maximizes this quantity for testing
$H_\theta$ against $P_{\thetatrue}$.

The oracle characterization below specializes the
log-optimality of normalized likelihood ratios for
permutation invariance \citep{Koning2026}.
We give its classification proof to make the connection
with InfoNCE explicit \citep{Oord2018,Poole2019}.
The finite-$K$ bound also follows by specializing the
log-growth comparison of \citet{dombowsky2026besag}.

\begin{proposition}[Oracle score and finite-$K$ gap]
\label{prop:optimal-score}
Assume
$\KL(P_{\thetatrue}\Vert P_\theta)<\infty$.
For every positive measurable score $T$ with
$\Gamma_K^{(\theta)}(T)>-\infty$,
\begin{align*}
\Gamma_K^{(\theta)}(T)
&\leq
\Gamma_K^{(\theta)}(r_{\thetatrue,\theta})\leq
\KL(P_{\thetatrue}\Vert P_\theta).
\end{align*}
The first inequality is an equality whenever
$T=c\,r_{\thetatrue,\theta}$ for some $c>0$. Define
\begin{equation*}
\Delta_K^{(\theta)}
=
\KL(P_{\thetatrue}\Vert P_\theta)
-
\Gamma_K^{(\theta)}(r_{\thetatrue,\theta}).
\end{equation*}
 $\Delta_K^{(\theta)}$ measures the loss in expected
log growth relative to the oracle likelihood-ratio
martingale of \cref{sec:backgroundCS}
If
\begin{equation*}
\chi^2(P_{\thetatrue}\Vert P_\theta)
:=
\int
\{r_{\thetatrue,\theta}(y)-1\}^2
\,P_\theta(\dd y)
<\infty,
\end{equation*}
then
\begin{equation}
\begin{aligned}
0\leq\Delta_K^{(\theta)}
&\leq
\log\!\left(
1+\frac{\chi^2(P_{\thetatrue}\Vert P_\theta)}{K+1}
\right)\\
&\leq
\frac{\chi^2(P_{\thetatrue}\Vert P_\theta)}{K+1}.
\end{aligned}
\label{eq:finite-k-chi-square-bound}
\end{equation}
In particular, under this moment condition,
$\Gamma_K^{(\theta)}(r_{\thetatrue,\theta})
\to\KL(P_{\thetatrue}\Vert P_\theta)$ as $K\to\infty$.
\end{proposition}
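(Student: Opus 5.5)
The plan is to use the classification (InfoNCE) view. Let $J$ be uniform on $\{0,\ldots,K\}$. Given $J=j$, draw $X_j\sim P_{\thetatrue}$ and the other $K$ coordinates i.i.d.\ from $P_\theta$. This produces a joint law $Q$ on $(J,X_0,\ldots,X_K)$. By symmetry, $\Gamma_K^{(\theta)}(T)=\EE_Q[\log\{(K+1)T(X_J)/\sum_\ell T(X_\ell)\}]$. Any positive $T$ induces a conditional distribution $q_T(j\mid x)=T(x_j)/\sum_\ell T(x_\ell)$ on the label, so $\Gamma_K^{(\theta)}(T)=\log(K+1)+\EE_Q[\log q_T(J\mid X)]$.

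Bayes' rule gives the true posterior $Q(J=j\mid x)\propto f_{\thetatrue}(x_j)\prod_{\ell\ne j}f_\theta(x_\ell)\propto r_{\thetatrue,\theta}(x_j)$, so it equals $q_{r}$ with $r=r_{\thetatrue,\theta}$. Gibbs' inequality, i.e.\ nonnegativity of the expected conditional KL, then gives $\EE_Q[\log q_T(J\mid X)]\le\EE_Q[\log q_r(J\mid X)]$. This yields the first inequality, with equality for $T=c\,r$ because $q_{cr}=q_r$.

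The integrability needs some care. Finiteness of $\KL(P_{\thetatrue}\Vert P_\theta)$ makes $\Gamma_K(r)$ well defined and bounded above. The hypothesis $\Gamma_K(T)>-\infty$ ensures the difference of the two expectations is a legitimate expected conditional KL. The sets where $r=0$ or $f_\theta=0$ are handled by absolute continuity, which the finite KL implies.

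For the upper bound, write $\log\{(K+1)r(X_0)/\sum_j r(X_j)\}=\log r(X_0)-\log\{\frac{1}{K+1}\sum_j r(X_j)\}$. Taking expectations gives $\Gamma_K(r)=\KL-\Delta_K$ with
\[
\Delta_K=\EE\Bigl[\log\Bigl(\tfrac{1}{K+1}\textstyle\sum_{j=0}^K r(X_j)\Bigr)\Bigr].
\]
For $\Delta_K\ge0$ I would \emph{not} use Jensen directly, since it gives the wrong direction. Instead I note that $\Gamma_K(r)\le\KL$ is exactly the statement that $\EE_Q[\log\{\prod_\ell f_\theta(X_\ell)\cdot\text{avg}_j r(X_j)\}/\{f_{\thetatrue}(X_0)\prod_{\ell\ge1}f_\theta(X_\ell)\}]\le0$. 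Under $X_0\sim P_{\thetatrue}$ this is a log-likelihood ratio of the symmetrized (mixture) law $\frac1{K+1}\sum_j P^{(j)}$ against $P^{(0)}$, where $P^{(j)}$ places $P_{\thetatrue}$ at coordinate $j$ and $P_\theta$ elsewhere. Its expectation is $-\KL(P^{(0)}\Vert\bar P)\le0$, so $\Delta_K=\KL(P^{(0)}\Vert \bar P)\ge0$.

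For the upper bound on $\Delta_K$, Jensen (concavity of $\log$) gives $\Delta_K\le\log\EE[\frac{1}{K+1}\sum_j r(X_j)]$. Here $\EE r(X_0)=\int r^2\,dP_\theta=1+\chi^2$ and $\EE r(X_j)=1$ for $j\ge1$. Hence $\Delta_K\le\log(1+\chi^2/(K+1))\le\chi^2/(K+1)$, and letting $K\to\infty$ gives the limit.

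I expect the main obstacle to be the measure-theoretic bookkeeping rather than the inequalities. The delicate points are justifying the split into $\KL$ and $\Delta_K$ (each term finite, which uses $\KL<\infty$ together with the Jensen upper bound and the lower bound $\Delta_K\ge0$) and making the Gibbs step valid when $\Gamma_K(T)$ may not be finite from above. I would handle the latter by working with $\EE_Q[\log(q_T/q_r)]$ directly and applying Jensen to $\EE_Q[q_T/q_r\mid X]\le 1$.
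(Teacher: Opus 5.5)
\paragraph{Verdict.} There is a genuine gap in the step $\Delta_K\ge0$. Your proof of the first inequality (the Bayes posterior equals $\pi_r$, then Gibbs) is correct and matches the paper's argument. So is your bound $\Delta_K\le\log(1+\chi^2/(K+1))$ via Jensen on $\log$.

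\paragraph{The gap.} The step $\Delta_K\ge0$ is equivalent to the second inequality $\Gamma_K(r)\le\KL(P_{\thetatrue}\Vert P_\theta)$, and there you have identified the wrong divergence. Write $S_K=\frac1{K+1}\sum_{j=0}^K r(X_j)$. The densities give $\dd\bar P/\dd P^{(0)}=S_K/r(X_0)$, so $\KL(P^{(0)}\Vert\bar P)=\EE_{P^{(0)}}[\log\{r(X_0)/S_K\}]=\Gamma_K(r)$, not $\Delta_K$. Correspondingly, the expectation you show to be $\le0$ equals $\EE[\log S_K]-\KL(P_{\thetatrue}\Vert P_\theta)=-\Gamma_K(r)$. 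What your argument actually proves is therefore $\Gamma_K(r)\ge0$, not $\Gamma_K(r)\le\KL(P_{\thetatrue}\Vert P_\theta)$. Two sanity checks confirm this:
\begin{itemize}
\item As $K\to\infty$, your divergence tends to $\KL(P_{\thetatrue}\Vert P_\theta)$, whereas $\Delta_K$ should tend to $0$.
\item At $K=0$, $\bar P=P^{(0)}$ and the divergence vanishes, whereas $\Delta_0=\KL(P_{\thetatrue}\Vert P_\theta)$.
\end{itemize}
As written, three things remain unproved: the second inequality of the main display, the lower bound $0\le\Delta_K$, and hence the two-sided squeeze behind the limit $K\to\infty$.

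\paragraph{The missing idea: symmetry.} Because $S_K$ is a symmetric function of $X_{0:K}$, its law is the same under every $P^{(j)}$. Hence $\EE_{P^{(0)}}[\log S_K]=\EE_{\bar P}[\log S_K]$. Moreover $S_K=\dd\bar P/\dd P_\theta^{\otimes(K+1)}$, so $\Delta_K=\KL(\bar P\Vert P_\theta^{\otimes(K+1)})\ge0$. The complete picture is the decomposition $\KL(P_{\thetatrue}\Vert P_\theta)=\KL(P^{(0)}\Vert P_\theta^{\otimes(K+1)})=\KL(P^{(0)}\Vert\bar P)+\KL(\bar P\Vert P_\theta^{\otimes(K+1)})=\Gamma_K(r)+\Delta_K$. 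Your divergence is the first term, not the second.

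\paragraph{Relation to the paper.} This is the paper's argument in KL language. The paper changes measure to the all-null law, $\Delta_K=\EE_{\mathrm{null}}[r(X_0)\log S_K]=\EE_{\mathrm{null}}[S_K\log S_K]$, and concludes by convexity of $x\log x$ with $\EE_{\mathrm{null}}[S_K]=1$. So Jensen does give the right direction here, provided it is applied to $x\log x$ under the null rather than to $\log$ under $P^{(0)}$.

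\paragraph{A side remark on finiteness.} Finiteness of $\Gamma_K(r)$ needs neither the KL assumption nor $\Delta_K\ge0$, contrary to your bookkeeping remark. It follows from $\Gamma_K(r)=\log(K+1)-H(J\mid X_{0:K})\in[0,\log(K+1)]$.
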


\begin{proof}
Write $r=r_{\thetatrue,\theta}$.
Consider a classification problem with
$J$ uniform on $\{0,\ldots,K\}$. Conditional on $J$,
draw $X_J$ from $P_{\thetatrue}$ and all remaining
points independently from $P_\theta$. For a score $T$,
define
\begin{equation*}
\pi_T(j\mid X_{0:K})
=
\frac{T(X_j)}{\sum_{\ell=0}^K T(X_\ell)}.
\end{equation*}
By symmetry,
\begin{equation*}
\Gamma_K^{(\theta)}(T)
=
\log(K+1)+\EE[\log\pi_T(J\mid X_{0:K})].
\end{equation*}
Bayes' rule gives $\pi_r$ as the true conditional
distribution of $J$. Consequently,
\begin{align*}
&\Gamma_K^{(\theta)}(r)-\Gamma_K^{(\theta)}(T)\\
&\quad=
\EE\!\left[
\KL\!\left(
\pi_r(\cdot\mid X_{0:K})
\Vert
\pi_T(\cdot\mid X_{0:K})
\right)
\right]\geq0.
\end{align*}
Multiplying $r$ by a positive constant leaves $\pi_r$
unchanged. The oracle objective is finite because
the conditional entropy of $J$ lies between zero
and $\log(K+1)$.

For the gap, return to the sampling scheme in
\eqref{eq:finite-k-growth} and set
\begin{equation*}
S_K=\frac1{K+1}\sum_{j=0}^K r(X_j).
\end{equation*}
Write $\EE_{\mathrm{null}}$ for expectation when all
$K+1$ points are instead independent draws from
$P_\theta$. A change of measure and symmetry give
\begin{align*}
\Delta_K^{(\theta)}
&=\EE[\log S_K]\\
&=\EE_{\mathrm{null}}[r(X_0)\log S_K]\\
&=\EE_{\mathrm{null}}[S_K\log S_K]\geq0.
\end{align*}
The final inequality follows from convexity and
$\EE_{\mathrm{null}}[S_K]=1$.
If the chi-square divergence is finite, then under
the original sampling scheme,
\begin{equation*}
\EE[S_K]
=
1+\frac{\chi^2(P_{\thetatrue}\Vert P_\theta)}{K+1}.
\end{equation*}
Jensen's inequality gives
$\Delta_K^{(\theta)}\leq\log\EE[S_K]$.
The remaining bound follows from $\log(1+x)\leq x$.
\end{proof}

The classification interpretation explains why the
likelihood ratio is optimal for every $K$: it assigns
the correct probability to each point being the draw
from $P_{\thetatrue}$. The same principle underlies
contrastive ratio estimation \citep{Durkan2020}.
Moreover, $0\leq\Gamma_K^{(\theta)}(r_{\thetatrue,\theta})\leq\log(K+1)$, so a finite reference batch limits the expected
log evidence obtainable from each observation.
Equation~\eqref{eq:finite-k-chi-square-bound} quantifies the resulting gap.

\subsection{Neural ratio scores}
\label{sec:neural-scores}

The oracle score depends on the unknown $\thetatrue$
and on likelihoods that cannot be evaluated.
We approximate it in two steps: replace $\thetatrue$
by an estimate based on past information, and estimate
the resulting likelihood ratio with a neural network.
Specifically, define
\begin{equation}
T_i^{(\theta)}(y)
=
\exp\!\left\{
s_{\phi_{i-1}}(\widehat\theta_{i-1},y)
-
s_{\phi_{i-1}}(\theta,y)
\right\},
\label{eq:neural-score}
\end{equation}
where $s_\phi$ is the likelihood-to-evidence network
of \cref{sec:backgroundNRE}, and both
$\widehat\theta_{i-1}$ and $\phi_{i-1}$ are
$\mathcal F_{i-1}$-measurable.

At the population target \eqref{eq:lte-target},
the term $-\log m(y)$ cancels in the score difference.
Thus the target of \eqref{eq:neural-score} is
$r_{\widehat\theta_{i-1},\theta}(y)$. 

We train the likelihood-to-evidence network
$s_\phi$ using the NRE-C objective \citep{Miller2022}
on independent simulations before monitoring, and then
keep it fixed. At each step, we choose the alternative
$\widehat\theta_{i-1}$ by minimizing the accumulated
evidence over the evaluation grid. This choice is related
to ME-estimation and e-posterior methods
\citep{Wang2026,Gruenwald2023}.
Other predictable estimates and network updates are also permitted.

Table~\ref{tab:evalue-constructions} summarizes the
proposed conditional e-value and the likelihood-ratio based
benchmarks used in our experiments.

\begin{table}[t]
\centering
\small
\setlength{\tabcolsep}{3pt}
\caption{
Conditional e-values for candidate $\theta$.
Write $\widetilde Y_{i0}^{(\theta)}=Y_i$;
the remaining $K$ observations are fresh reference
draws from $P_\theta$.
The neural score $T_i^{(\theta)}$ is defined in
\eqref{eq:neural-score}.
Oracle LR and oracle score are the oracle variants of Direct and Conformal LR, respectively.
}
\label{tab:evalue-constructions}
\begin{tabular}{@{}ll@{}}
\toprule
Construction & Conditional e-value $E_i^{(\theta)}$ \\
\midrule

Oracle LR (optimal)
&
$\displaystyle r_{\thetatrue,\theta}(Y_i)$
\\[1ex]

Direct LR
&
$\displaystyle r_{\widehat\theta_{i-1},\theta}(Y_i)$
\\[1ex]

Oracle score
&
$\displaystyle
\frac{(K+1)r_{\thetatrue,\theta}(Y_i)}
{\sum_{j=0}^{K}
 r_{\thetatrue,\theta}
 (\widetilde Y_{ij}^{(\theta)})}
$
\\[2ex]

Conformal LR
&
$\displaystyle
\frac{(K+1)r_{\widehat\theta_{i-1},\theta}(Y_i)}
{\sum_{j=0}^{K}
 r_{\widehat\theta_{i-1},\theta}
 (\widetilde Y_{ij}^{(\theta)})}
$
\\[2ex]

Proposed (Conformal NRE)
&
$\displaystyle
\frac{(K+1)T_i^{(\theta)}(Y_i)}
{\sum_{j=0}^{K}
 T_i^{(\theta)}
 (\widetilde Y_{ij}^{(\theta)})}
$
\\[1ex]

\bottomrule
\end{tabular}
\end{table}

\subsection{Overall procedure}

\begin{algorithm}[t]
\caption{Likelihood-free confidence sequence with neural ratio scores}
\label{alg:likelihood-free-cs}
\begin{algorithmic}[1]
\Require Simulator for $P_\theta$; fixed finite grid
$\Theta_{\mathrm{eval}}$; pretrained network $s_\phi$;
number of reference draws $K\geq 1$; level $\alpha\in(0,1)$.

\State Initialize $M_0^{(\theta)}\gets 1$ for every
$\theta\in\Theta_{\mathrm{eval}}$, and
$\bar C_{0,\alpha}\gets\Theta_{\mathrm{eval}}$.

\For{$i=1,2,\ldots$}
    \State Using the previous martingale values, choose
    \begin{equation*}
        \widehat\theta_{i-1}
        \in \arg\min_{\vartheta\in\Theta_{\mathrm{eval}}}
        M_{i-1}^{(\vartheta)}.
    \end{equation*}

    \State Before observing $Y_i$ or drawing the current
    reference samples, fix the score functions
    \begin{equation*}
        T_i^{(\theta)}(y)
        \gets
        \exp\!\left\{
            s_\phi(\widehat\theta_{i-1},y)
            -s_\phi(\theta,y)
        \right\},
        ~~ \theta\in\Theta_{\mathrm{eval}}.
    \end{equation*}

    \State Observe $Y_i$.

    \ForAll{$\theta\in\Theta_{\mathrm{eval}}$}
        \State Draw a fresh reference batch
        $\widetilde Y_{i1}^{(\theta)},\ldots,
        \widetilde Y_{iK}^{(\theta)}
        \overset{\mathrm{i.i.d.}}{\sim}P_\theta$.

        \State Compute the one-step e-value \eqref{eq:main-evalue}

        \State Update
        $M_i^{(\theta)}
        \gets M_{i-1}^{(\theta)}E_i^{(\theta)}$.
    \EndFor

    \State Update the running intersection
    \begin{equation*}
        \bar C_{i,\alpha}
        \gets
        \bar C_{i-1,\alpha}
        \cap
        \left\{
            \theta\in\Theta_{\mathrm{eval}}:
            M_i^{(\theta)}<\alpha^{-1}
        \right\}.
    \end{equation*}
    \State Output $\bar C_{i,\alpha}$.
\EndFor
\end{algorithmic}
\end{algorithm}

Algorithm~\ref{alg:likelihood-free-cs} summarizes our
implementation. We update the martingales over the entire grid, including
previously excluded candidates, because the next alternative
is selected by minimizing over this grid.

The output satisfies
$\bar C_{n,\alpha}
=\Theta_{\mathrm{eval}}\cap\bigcap_{i=1}^n C_{i,\alpha}$.
Consequently, by \cref{prop:exact-martingale}, $(\bar C_{n,\alpha})$ satisfies \eqref{eq:cs-definition} for every $\thetatrue\in\Theta_{\mathrm{eval}}$.
Coverage beyond the evaluation grid requires additional
control between grid points; see \cref{app:continuous-parameters}.

\Cref{app:extensions} discusses extensions to non-i.i.d.\
observations, nuisance parameters, and averaging over reference batches to reduce simulation noise.
\section{EXPERIMENTS}
\label{sec:experiments}

We first use synthetic experiments to assess the efficiency costs of conformalization, 
learning likelihood ratios, and estimating the alternative parameter. 
Analytical or numerical likelihood evaluation makes these comparisons possible, 
with the true parameter providing an oracle alternative.
We then turn to applications with intractable likelihoods, moving from a semi-synthetic 
audit of model substitution in an LLM API using real model replies,
to stellar mass-function 
inference from real microlensing observations. Beyond assessing coverage where the truth is known, 
we examine how the confidence sequences narrow as observations accumulate, 
comparing neural and application-specific scores within the SLICE construction.

Throughout, we set $\alpha=0.1$ and follow \cref{alg:likelihood-free-cs},
evaluating running-intersection confidence sets on finite candidate grids and using
the minimum-evidence alternative by default. Learned likelihood-ratio estimators are fitted on independent 
data before monitoring and held fixed thereafter. Where the true parameter is known, 
we report cumulative miscoverage: the fraction of repetitions 
in which the method has excluded it by time $n$.
We assess efficiency through confidence-set widths and candidate retention. 
Reported widths and endpoints are averaged over repetitions with nonempty sets. 
Implementation details, score diagnostics and additional experiments are provided in 
\crefrange{app:implementation}{app:microlensing}.

\paragraph{Synthetic experiments.}

\begin{figure*}[t]
  \centering
  \includegraphics[width=\textwidth]{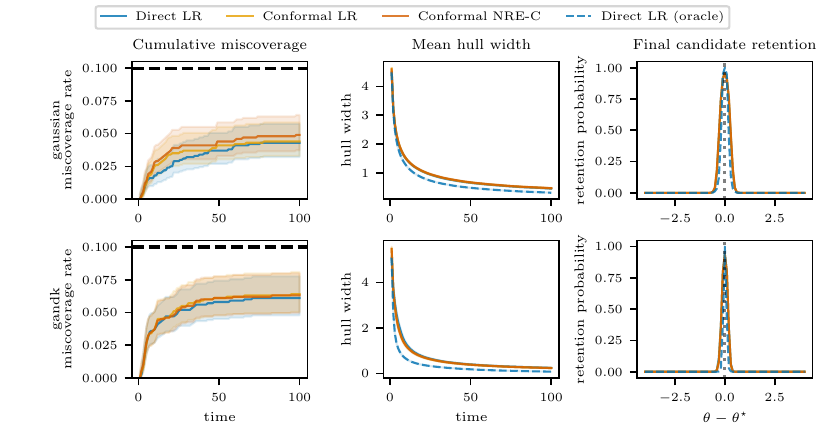}
\caption{Confidence sequences for the Gaussian (top) and
$g$-and-$k$ (bottom) models at $\thetatrue=0$.
Left: cumulative miscoverage (Wilson $95\%$ bands;
horizontal dashed line: $\alpha=0.1$).
Center: mean hull width over repetitions with nonempty sets.
Right: final-time retention probability for each candidate.
Solid curves: min-evidence alternative;
dashed curves: oracle alternative.}
  \label{fig:main-experiment}
\end{figure*}

We consider unit-variance Gaussian and likelihood-intractable
$g$-and-$k$ location models, each with a scalar parameter
$\theta\in\Theta=[-4,4]$, and SLCP, a five-parameter SBI benchmark
with a complex, multimodal posterior on $\Theta=[-3,3]^5$
\citep{papamakarios2019sequential}.
For benchmarking, likelihoods are evaluated in closed form for
Gaussian and SLCP, and by numerical inversion for $g$-and-$k$.
We use an NRE-C score and compare against direct and conformalized
likelihood-ratio confidence sequences, together with infeasible
oracle variants using $\thetatrue$ as the alternative.
All three models use $n=100$ observations over $1{,}000$ repetitions
 and $K=15$ reference draws
per candidate per step. \looseness=-1

For the two location models, we use $81$ equally spaced candidates.
\Cref{fig:main-experiment} shows results at $\thetatrue=0$;
other true parameters and ablations are reported in \cref{app:ablations}.
Empirical cumulative miscoverage remains below $\alpha$,
consistent with the theoretical guarantee.
Direct and conformalized LR confidence sequences have nearly
identical hull widths and candidate retention, indicating little
efficiency loss from conformalization.
Replacing the likelihood ratio with the NRE-C score yields
essentially the same results, while using the true parameter
as the alternative gives narrower sets.

For SLCP, we use $5{,}120$ scrambled Sobol candidates on
$\Theta=[-3,3]^5$, including $\thetatrue=(0,0,2.2,2.2,0)$
and its three sign variants, which give the same distribution because
$\theta_3$ and $\theta_4$ enter only through their squares.
Because the fixed initial alternative is far from $\thetatrue$ in this model,
we use the first two observations only to choose the alternative and start
the e-processes at $n=3$ (\cref{app:running-cs}).
\Cref{fig:slcp-main} shows cumulative miscoverage below $\alpha$
and similar rates of candidate exclusion for conformal NRE-C
and conformal LR, with direct LR shrinking faster.
\begin{figure}[h!]
  \centering
  \includegraphics[width=\columnwidth]{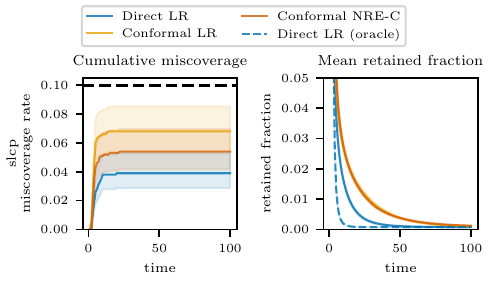}
\caption{Confidence sequences for SLCP.
Left: cumulative miscoverage (Wilson $95\%$ bands;
horizontal dashed line: $\alpha=0.1$; oracle omitted).
Right: mean fraction of candidates retained.
Solid curves: min-evidence alternative;
dashed curves: oracle alternative.}
  \label{fig:slcp-main}
\end{figure}
Here conformalization has a visible cost, the price of obtaining validity
from simulations rather than from the likelihood, whereas learning the ratio
again costs little.
The retained fractions level off as only a few candidates remain,
and the gap to the oracle narrows.
At $n=100$, conformal NRE-C contains $\thetatrue$ in $94.6\%$ of
repetitions, and keeps $\thetatrue$ while dropping at least one of its three
mirror images in only $1.2\%$, preserving the model's multiple modes
(\cref{app:slcp}).

\paragraph{Auditing model routing in an LLM API.}
We illustrate our method in a semi-synthetic audit of an API that may
substitute a cheaper language model $R$ for its advertised model $P$
(\citealp{Gao2025,Cai2025}; IRIS, \citealp{Zhang2026IRIS}). The auditor can sample from both
models but cannot evaluate either likelihood. The aim is to estimate
the substitution fraction $\theta$ while allowing the auditor to monitor
incoming replies and stop adaptively. We take $P=\texttt{qwen3:8b}$, one of IRIS's reference models,
and $R=\texttt{qwen3:1.7b}$, and use two of IRIS's prompts requesting random outputs
whose model-specific reply distributions make substitution detectable. For each
prompt, each query independently yields a reply from $R$ with probability
$\theta$ and $P$ otherwise, giving the reply pmf
$f_\theta=(1-\theta)p+\theta r$, where $p$ and $r$ are the models'
respective reply pmfs. Likelihood ratios between candidate mixtures
depend only on $\rho=r/p$, which we estimate from reply counts or a non-amortized
binary NRE classifier distinguishing $P$ from $R$, and use to construct the
scores $T_i^{(\theta)}$ (\cref{app:routing-scores}). 
We construct $200$ streams of $n=300$ observations per prompt
at each $\thetatrue\in\{0,0.1,0.3\}$ by mixing held-out replies
from $R$ and $P$ with probabilities $\thetatrue$ and $1-\thetatrue$.
We compute confidence sequences using \cref{alg:likelihood-free-cs},
with $K=10$ references per step from separate held-out reply pools.
This semi-synthetic design preserves real model outputs while
allowing coverage assessment at known substitution fractions.
We compare with our implementation of IRIS's fixed-time confidence interval (IRIS), 
which estimates
$\theta$ from the share of replies our count-ratio estimate flags as atypical
for $P$, and with an extension of IRIS's betting test to a confidence sequence
for $\theta$ (Betting IRIS), which is only approximately valid because it uses
estimated indicator rates (\cref{app:routing-baselines}).
\Cref{fig:routing-main} shows results at $\thetatrue=0.1$ for the digit prompt.
The conformal sequences and Betting IRIS keep cumulative miscoverage below $0.1$, whereas
IRIS exceeds this level under continuous
monitoring (panel~a).
\begin{figure}[h!]
  \centering
  \includegraphics[width=\columnwidth]{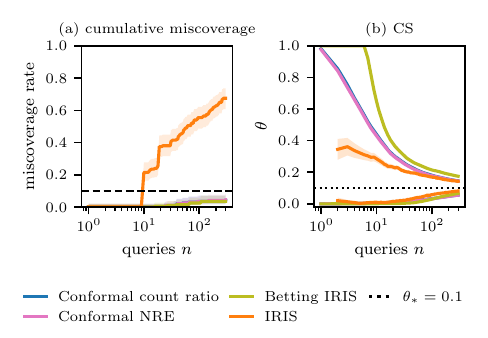}
  \caption{LLM routing audit at $\thetatrue=0.1$, 8 random digits prompt
($200$ repetitions). (a) Cumulative miscoverage (Wilson $95\%$ bands; dashed
$\alpha=0.1$). (b) Mean lower and upper endpoints of the confidence sets over
repetitions with nonempty sets, with $95\%$ intervals for the means. Both
prompts, and the IRIS conventions, are in \cref{fig:routing-theta01}.}
  \label{fig:routing-main}
\end{figure}
\begin{figure*}
  \centering
  \includegraphics[width=\textwidth]{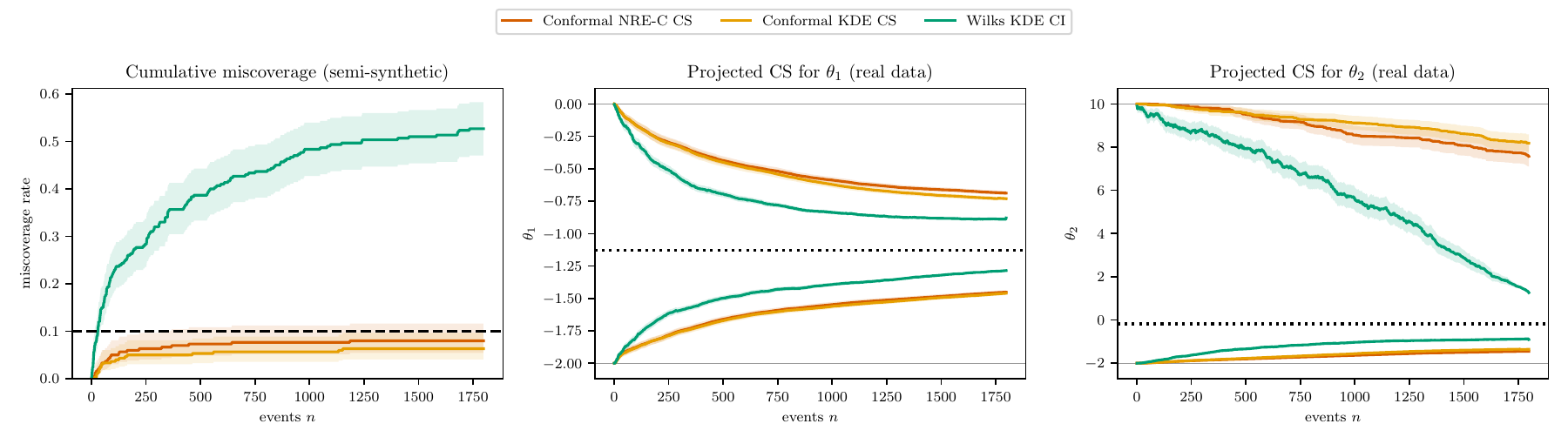}
  \caption{OGLE-IV microlensing. Left: cumulative miscoverage of the
semi-synthetic runs (Wilson $95\%$ bands; dashed $\alpha=0.1$). Center and right: projected
confidence sequences for $\theta_1$ and $\theta_2$ on the real data (mean
endpoints over random orderings, with $95\%$ intervals for the mean); grey lines mark the box and dotted lines the published
$\theta_1=-1.13$ and $\theta_2=-0.18$ \citep{Koshimoto2021}.}
  \label{fig:microlensing}
\end{figure*}
The conformal count-ratio and NRE sequences have
nearly identical CS hulls, both narrower than Betting IRIS (panel~b).
They also detect substitution sooner, as measured by the number of
queries needed to exclude $\theta=0$ (\cref{tab:routing}). The number prompt
behaves alike (\cref{fig:routing-theta01}), and results at
$\thetatrue\in\{0,0.3\}$ show consistent coverage for the conformal
sequences and Betting IRIS, with faster detection also at $0.3$, where Betting IRIS ends with
slightly narrower sets (\cref{fig:routing-theta0,fig:routing-theta03}).

\paragraph{Inferring the initial mass function from microlensing events.}
Gravitational microlensing occurs when a foreground object, the lens,
magnifies a background star, the source; the event's duration carries
information about the lens mass, even when the lens is too faint to
observe directly. We infer the slopes $\theta=(\theta_1,\theta_2)$ of
the initial mass function of low-mass stars ($\theta_1$) and brown dwarfs
($\theta_2$) from the timescales of $1{,}798$ OGLE-IV microlensing events
toward the Galactic bulge \citep{Mroz2019}, conditionally on each event's
field and measurement error (\cref{cor:conditional-validity}). 
The simulator is the Galactic model of \citet{Koshimoto2021}, as implemented in
\texttt{genulens} \citep{KoshimotoRanc2021genulens}; its likelihood integrates
over the distances, velocities, and masses of lenses and sources, and is intractable.
A held-out half of the survey seasons is used to fit the model,
which then passes a goodness-of-fit test on the analysed half (\cref{app:microlensing-fit-check}),
and simulation cost is reduced by reweighting per-field banks
(\cref{app:microlensing-reweighting}).
The reweighting is based on the simulator's exact density ratio,
and reweighted draws agree with direct simulation across the
parameter range.
We use $G=2{,}048$ candidates on the box $[-2,0]\times[-2,10]$
and $K=15$ references.
\Cref{fig:microlensing} compares the conformal NRE-C CS with two methods based
on an accurate kernel density estimate (KDE) of the timescale's likelihood,
possible because the timescale is one-dimensional: the conformal KDE CS, which
uses the KDE likelihood ratio as its score, and the fixed-sample Wilks KDE interval,
recomputed at every $n$.
In $300$ semi-synthetic repetitions, which simulate new timescales for the
real events' covariates at the true value $\thetatrue=(-1.1,1.0)$ (left), both conformal CSs keep the cumulative
miscoverage below $\alpha$, whereas the monitored Wilks interval soon exceeds
it. On the real data, we show the projections of the sets 
(\cref{app:nuisance-projections})
onto $\theta_1$ (center) and $\theta_2$ (right), averaged over
$200$ random orderings of the events. In both settings the NRE-C score is almost
as efficient as the KDE score, and the Wilks interval is narrower,
consistent with the additional cost of anytime-valid inference
(\cref{tab:microlensing-semi}). 
For $\theta_1$, both sequences shrink steadily
with $n$ and contain the published value $-1.13$ \citep{Koshimoto2021} in $98$--$99\%$ of orderings. For
$\theta_2$, they exclude low values but not high ones: a high $\theta_2$ means
few brown dwarfs, whose short events low-cadence fields rarely detect, so the
data say little against it.
\section{DISCUSSION}
\label{sec:discussion}

SLICE decouples validity from efficiency: normalizing any positive
predictable score against fresh simulations yields exact likelihood-free
confidence sequences, while the score only determines how fast they shrink.
Neural ratio scores with a minimum-evidence alternative make a solid
application-agnostic default: they came close to exact likelihood ratios on
synthetic benchmarks and to application-specific scores in two applications
with real data.
Limitations include the reliance on a grid of candidates, the randomness that reference sampling adds to the sequences, and the need for a good predictable alternative to reach the best efficiency, and, as for any model-based method, a correctly specified simulator. Extensions addressing the first two are discussed in \cref{app:extensions}, while addressing the third, for example by incorporating prior information, is a valuable direction for future work.

\subsection*{Acknowledgements}

We would like to thank David T. Frazier and Jack Jewson for the helpful
discussions. SC is supported by the EPSRC Centre for Doctoral Training in Modern
Statistics and Statistical Machine Learning (EP/S023151/1). LB is supported by
the Clarendon Funds Scholarship.

\bibliographystyle{abbrvnat}
\bibliography{neuralratioanytimevalid}

@inproceedings{Balinsky2024,
  author    = {Balinsky, Alexander A. and
               Balinsky, Alexander David},
  title     = {Enhancing Conformal Prediction Using
               {E}-Test Statistics},
  booktitle = {Proceedings of the Thirteenth Symposium on
               Conformal and Probabilistic Prediction
               with Applications},
  series    = {Proceedings of Machine Learning Research},
  volume    = {230},
  pages     = {65--72},
  publisher = {PMLR},
  year      = {2024}
}

@misc{Gauthier2025,
  author        = {Gauthier, Etienne and Bach, Francis and
                   Jordan, Michael I.},
  title         = {{E}-Values Expand the Scope of
                   Conformal Prediction},
  year          = {2025},
  eprint        = {2503.13050},
  archivePrefix = {arXiv},
  primaryClass  = {stat.ML},
  doi           = {10.48550/arXiv.2503.13050}
}

@InProceedings{Masserano2023,
  author    = {Masserano, Luca and Dorigo, Tommaso and
               Izbicki, Rafael and Kuusela, Mikael and Lee, Ann},
  title     = {Simulator-Based Inference with {WALDO}:
               Confidence Regions by Leveraging Prediction
               Algorithms and Posterior Estimators for
               Inverse Problems},
  booktitle = {Proceedings of the 26th International Conference
               on Artificial Intelligence and Statistics},
  series    = {Proceedings of Machine Learning Research},
  volume    = {206},
  pages     = {2960--2974},
  publisher = {PMLR},
  year      = {2023},
  url       = {https://proceedings.mlr.press/v206/masserano23a.html}
}

@Article{Cabezas2026,
  author  = {Cabezas, Luben M. C. and Soares, Guilherme P. and
             Ramos, Thiago R. and Stern, Rafael B. and
             Izbicki, Rafael},
  title   = {Conformal Calibration of Statistical Confidence Sets},
  journal = {Transactions on Machine Learning Research},
  year    = {2026},
  issn    = {2835-8856},
  url     = {https://openreview.net/forum?id=J4lK62PVE6}
}

@InProceedings{PodkopaevRamdas2023,
  author    = {Podkopaev, Aleksandr and Ramdas, Aaditya},
  title     = {Sequential Predictive Two-Sample and
               Independence Testing},
  booktitle = {Advances in Neural Information Processing Systems},
  volume    = {36},
  year      = {2023},
}

@Article{FischerRamdas2025,
  author  = {Fischer, Lasse and Ramdas, Aaditya},
  title   = {Sequential {Monte Carlo} Testing by Betting},
  journal = {Journal of the Royal Statistical Society
             Series B: Statistical Methodology},
  volume  = {87},
  number  = {4},
  pages   = {1200--1220},
  year    = {2025},
  doi     = {10.1093/jrsssb/qkaf014}
}

@inproceedings{SotoudehThakur2019,
  author    = {Sotoudeh, Matthew and Thakur, Aditya V.},
  title     = {Computing Linear Restrictions of Neural Networks},
  booktitle = {Advances in Neural Information Processing Systems},
  volume    = {32},
  year      = {2019},
  url       = {https://arxiv.org/abs/1908.06214}
}

@article{rayner2002numerical,
  author  = {Rayner, G. D. and MacGillivray, H. L.},
  title   = {Numerical maximum likelihood estimation for the
             {g}-and-{k} and generalized {g}-and-{h} distributions},
  journal = {Statistics and Computing},
  year    = {2002},
  volume  = {12},
  number  = {1},
  pages   = {57--75},
  doi     = {10.1023/A:1013120305780}
}

@Article{Vovk2025ConformalE,
author  = {Vovk, Vladimir},
title   = {Conformal E-Prediction},
journal = {Pattern Recognition},
year    = {2025},
volume  = {166},
pages   = {111674},
doi     = {10.1016/j.patcog.2025.111674},
}

@Article{Vovk2025ConformalTesting,
author  = {Vovk, Vladimir and Nouretdinov, Ilia and Gammerman, Alex},
title   = {Conformal E-Testing},
journal = {Pattern Recognition},
year    = {2025},
volume  = {168},
pages   = {111841},
doi     = {10.1016/j.patcog.2025.111841},
}

@Article{Hermans2022,
  author  = {J. Hermans and A. Delaunoy and F. Rozet and A. Wehenkel and V. Begy and G. Louppe},
  journal = {Transactions on Machine Learning Research},
  title   = {A Crisis In Simulation-Based Inference? Beware, Your Posterior Approximations Can Be Unfaithful},
  year    = {2022},
  issn    = {2835-8856},
  url     = {https://openreview.net/forum?id=LHAbHkt6Aq},
}

@Article{Falkiewicz2023,
  author  = {Falkiewicz, M. and Takeishi, N. and Shekhzadeh, I. and Wehenkel, A. and Delaunoy, A. and Louppe, G. and Kalousis, A.},
  journal = {Advances in Neural Information Processing Systems},
  title   = {Calibrating neural simulation-based inference with differentiable coverage probability},
  year    = {2023},
  pages   = {1082--1099},
  volume  = {36},
}

@Article{Delaunoy2022,
  author  = {Delaunoy, A. and Hermans, J. and Rozet, F. and Wehenkel, A. and Louppe, G.},
  journal = {Advances in Neural Information Processing Systems},
  title   = {Towards reliable simulation-based inference with balanced neural ratio estimation},
  year    = {2022},
  pages   = {20025--20037},
  volume  = {35},
}

@InProceedings{Patel2024,
  author       = {Patel, Y. and Mcnamara, D. and Loper, J. and Regier, J. and Tewari, A.},
  booktitle    = {International Conference on Machine Learning},
  title        = {Variational Inference with Coverage Guarantees in Simulation-Based Inference},
  year         = {2024},
  organization = {PMLR},
  pages        = {39861--39888},
}

@InProceedings{Dalmasso2020,
  author       = {Dalmasso, N. and Izbicki, R. and Lee, A.},
  booktitle    = {International Conference on Machine Learning},
  title        = {Confidence sets and hypothesis testing in a likelihood-free inference setting},
  year         = {2020},
  organization = {PMLR},
  pages        = {2323--2334},
}

@Article{Miller2022,
  author  = {Miller, B. K. and Weniger, C. and Forr{\'e}, P.},
  journal = {Advances in Neural Information Processing Systems},
  title   = {Contrastive neural ratio estimation},
  year    = {2022},
  pages   = {3262--3278},
  volume  = {35},
}

@InProceedings{Hermans2020,
  author       = {Hermans, J. and Begy, V. and Louppe, G.},
  booktitle    = {International conference on machine learning},
  title        = {Likelihood-free mcmc with amortized approximate ratio estimators},
  year         = {2020},
  organization = {PMLR},
  pages        = {4239--4248},
}

@Article{Thomas2022,
  author    = {Thomas, O. and Dutta, R. and Corander, J. and Kaski, S. and Gutmann, M. U.},
  journal   = {Bayesian Analysis},
  title     = {Likelihood-free inference by ratio estimation},
  year      = {2022},
  number    = {1},
  pages     = {1--31},
  volume    = {17},
  publisher = {International Society for Bayesian Analysis},
}

@Article{Gutmann2012,
  author    = {Gutmann, M. U. and Hyv{\"a}rinen, A.},
  journal   = {The journal of machine learning research},
  title     = {Noise-contrastive estimation of unnormalized statistical models, with applications to natural image statistics},
  year      = {2012},
  number    = {1},
  pages     = {307--361},
  volume    = {13},
  publisher = {JMLR. org},
}

@InProceedings{Poole2019,
  author       = {Poole, Ben and Ozair, Sherjil and van den Oord, A{\"a}ron and Alemi, Alexander A. and Tucker, George},
  booktitle    = {Proceedings of the 36th International Conference on Machine Learning},
  title        = {On Variational Bounds of Mutual Information},
  year         = {2019},
  organization = {PMLR},
  pages        = {5171--5180},
}

@InProceedings{Gammerman1998,
  author    = {Gammerman, Alex and Vovk, Vladimir and Vapnik, Vladimir},
  booktitle = {Proceedings of the Fourteenth Conference on Uncertainty in Artificial Intelligence},
  title     = {Learning by Transduction},
  year      = {1998},
  pages     = {148--155},
  publisher = {Morgan Kaufmann},
}

@book{Ramdas2025,
  title = {Hypothesis {{Testing}} with {{E-values}}},
  author = {Ramdas, Aaditya and Wang, Ruodu},
  year = 2025,
  publisher = {{Foundations and Trends in Statistics, Vol. 1: No. 1-2, pp 1-390}},
  langid = {english}
}

@Article{Dalmasso2024,
  author    = {Dalmasso, N. and Masserano, L. and Zhao, D. and Izbicki, R. and Lee, A. B.},
  journal   = {Electronic Journal of Statistics},
  title     = {Likelihood-free frequentist inference: Bridging classical statistics and machine learning for reliable simulator-based inference},
  year      = {2024},
  number    = {2},
  pages     = {5045--5090},
  volume    = {18},
  publisher = {The Institute of Mathematical Statistics and the Bernoulli Society},
}

@Article{Oord2018,
  author  = {van den Oord, A. and Li, Y. and Vinyals, O.},
  journal = {arXiv preprint arXiv:1807.03748},
  title   = {Representation Learning with Contrastive Predictive Coding},
  year    = {2018},
}

@Article{Wang2026,
  author  = {Wang, H. and Ramdas, A.},
  journal = {arXiv preprint arXiv:2605.30485},
  title   = {M-estimation with e-statistics},
  year    = {2026},
}

@Article{Gruenwald2023,
  author    = {Gr{\"u}nwald, P. D.},
  journal   = {Philosophical Transactions of the Royal Society A},
  title     = {The e-posterior},
  year      = {2023},
  number    = {2247},
  pages     = {20220146},
  volume    = {381},
  publisher = {The Royal Society},
}

@InProceedings{Gutmann2010,
  author    = {Gutmann, M. and Hyv{\"a}rinen, A.},
  booktitle = {Proceedings of the Thirteenth International Conference on Artificial Intelligence and Statistics},
  title     = {Noise-Contrastive Estimation: A New Estimation Principle for Unnormalized Statistical Models},
  year      = {2010},
  editor    = {Teh, Yee Whye and Titterington, Mike},
  pages     = {297--304},
  publisher = {PMLR},
  series    = {Proceedings of Machine Learning Research},
  volume    = {9},
}

@Article{Cranmer2015,
  author        = {Cranmer, K. and Pavez, J. and Louppe, G.},
  journal       = {arXiv preprint arXiv:1506.02169},
  title         = {Approximating Likelihood Ratios with Calibrated Discriminative Classifiers},
  year          = {2015},
  archiveprefix = {arXiv},
  eprint        = {1506.02169},
}

@Article{Howard2021,
  author  = {Howard, S. R. and Ramdas, A. and McAuliffe, J. and Sekhon, J.},
  journal = {The Annals of Statistics},
  title   = {Time-uniform, nonparametric, nonasymptotic confidence sequences},
  year    = {2021},
  number  = {2},
  volume  = {49},
}

@Book{Ville1939,
  author    = {Ville, J.},
  publisher = {Gauthier-Villars Paris},
  title     = {Etude critique de la notion de collectif},
  year      = {1939},
}

@Article{Lai1976,
  author    = {Lai, Tze Leung},
  journal   = {The Annals of Statistics},
  title     = {On {{Confidence Sequences}}},
  year      = {1976},
  issn      = {0090-5364, 2168-8966},
  number    = {2},
  pages     = {265--280},
  volume    = {4},
  date      = {1976},
  doi       = {10.1214/aos/1176343406},
  publisher = {Institute of Mathematical Statistics},
  urldate   = {2025-03-26},
}

@article{Shafer2011,
  author = {Glenn Shafer and Alexander Shen and Nikolai Vereshchagin and Vladimir Vovk},
  title = {{Test Martingales, Bayes Factors and p-Values}},
  volume = {26},
  journal = {Statistical Science},
  number = {1},
  publisher = {Institute of Mathematical Statistics},
  pages = {84 -- 101},
  year = {2011},
  doi = {10.1214/10-STS347},
  URL = {https://doi.org/10.1214/10-STS347}
}

@Article{Koning2026,
  author    = {Koning, Nick},
  journal   = {Annals of Statistics},
  title     = {Measuring evidence against exchangeability and group invariance with e-values},
  year      = {2026},
  publisher = {Institute of Mathematical Statistics},
}

@Article{Beaumont2002,
author = {Beaumont, Mark A. and Zhang, Wenyang and Balding, David J.},
title = {Approximate Bayesian Computation in Population Genetics},
journal = {Genetics},
year = {2002},
volume = {162},
number = {4},
pages = {2025--2035},
doi = {10.1093/genetics/162.4.2025},
}

@Article{Wood2010,
author = {Wood, Simon N.},
title = {Statistical Inference for Noisy Nonlinear Ecological Dynamic Systems},
journal = {Nature},
year = {2010},
volume = {466},
number = {7310},
pages = {1102--1104},
doi = {10.1038/nature09319},
}

@Article{Cranmer2020,
author = {Cranmer, Kyle and Brehmer, Johann and Louppe, Gilles},
title = {The Frontier of Simulation-Based Inference},
journal = {Proceedings of the National Academy of Sciences},
year = {2020},
volume = {117},
number = {48},
pages = {30055--30062},
doi = {10.1073/pnas.1912789117},
}

@InProceedings{Durkan2020,
author = {Durkan, Conor and Murray, Iain and Papamakarios, George},
title = {On Contrastive Learning for Likelihood-Free Inference},
booktitle = {Proceedings of the 37th International Conference on Machine Learning},
year = {2020},
volume = {119},
series = {Proceedings of Machine Learning Research},
pages = {2771--2781},
publisher = {PMLR},
}

@Book{Sugiyama2012,
author = {Sugiyama, Masashi and Suzuki, Taiji and Kanamori, Takafumi},
title = {Density Ratio Estimation in Machine Learning},
publisher = {Cambridge University Press},
year = {2012},
doi = {10.1017/CBO9781139035613},
}

@Article{VovkWang2021,
author = {Vovk, Vladimir and Wang, Ruodu},
title = {{E}-Values: Calibration, Combination, and Applications},
journal = {The Annals of Statistics},
year = {2021},
volume = {49},
number = {3},
pages = {1736--1754},
doi = {10.1214/20-AOS2020},
}

@Article{Ramdas2023,
  author  = {Ramdas, Aaditya and Gr{\"u}nwald, Peter and Vovk, Vladimir and Shafer, Glenn},
  journal = {Statistical Science},
  title   = {Game-Theoretic Statistics and Safe Anytime-Valid Inference},
  year    = {2023},
  number  = {4},
  pages   = {576--601},
  volume  = {38},
  doi     = {10.1214/23-STS894},
}

@Article{Wasserman2020,
author = {Wasserman, Larry and Ramdas, Aaditya and Balakrishnan, Sivaraman},
title = {Universal Inference},
journal = {Proceedings of the National Academy of Sciences},
year = {2020},
volume = {117},
number = {29},
pages = {16880--16890},
doi = {10.1073/pnas.1922664117},
}

@article{dombowsky2026besag,
  title={Besag-{C}lifford e-values for unnormalized testing},
  author={Dombowsky, Alexander and Engelhardt, Barbara E and Ramdas, Aaditya},
  journal={arXiv preprint arXiv:2603.15845},
  year={2026}
}

@Article{Paczynski1986,
  author  = {Paczy{\'n}ski, B.},
  journal = {The Astrophysical Journal},
  title   = {Gravitational microlensing by the galactic halo},
  year    = {1986},
  pages   = {1--5},
  volume  = {304},
}

@Article{Mroz2019,
  author  = {Mr{\'o}z, P. and Udalski, A. and Skowron, J. and Szyma{\'n}ski, M. K. and Soszy{\'n}ski, I. and Wyrzykowski, {\L}. and Pietrukowicz, P. and Koz{\l}owski, S. and Poleski, R. and Ulaczyk, K. and Rybicki, K. and Iwanek, P.},
  journal = {The Astrophysical Journal Supplement Series},
  title   = {Microlensing optical depth and event rate toward the {G}alactic bulge from 8 yr of {OGLE-IV} observations},
  year    = {2019},
  number  = {2},
  pages   = {29},
  volume  = {244},
}

@Article{Koshimoto2021,
  author  = {Koshimoto, N. and Baba, J. and Bennett, D. P.},
  journal = {The Astrophysical Journal},
  title   = {A parametric {G}alactic model toward the {G}alactic bulge based on {G}aia and microlensing data},
  year    = {2021},
  number  = {2},
  pages   = {78},
  volume  = {917},
}

@Article{Zhang2026IRIS,
  author  = {Zhang, Yuewei and Zhang, Zhi-Hai and Qin, Hanzhang},
  journal = {arXiv preprint arXiv:2607.20860},
  title   = {Which Model Is Actually Serving You? {IRIS}: Budgeted Black-Box Auditing of Model Substitution and Routing Dilution in {LLM} Gateways},
  year    = {2026},
}

@InProceedings{Gao2025,
  author    = {Gao, Irena and Liang, Percy and Guestrin, Carlos},
  booktitle = {International Conference on Learning Representations},
  title     = {Model Equality Testing: Which Model Is This {API} Serving?},
  year      = {2025},
}

@Article{Cai2025,
  author  = {Cai, Will and Shi, Tianneng and Zhao, Xuandong and Song, Dawn},
  journal = {arXiv preprint arXiv:2504.04715},
  title   = {Are You Getting What You Pay For? {A}uditing Model Substitution in {LLM} {API}s},
  year    = {2025},
}

@Article{WaudbySmith2024,
  author  = {Waudby-Smith, Ian and Ramdas, Aaditya},
  journal = {Journal of the Royal Statistical Society Series B: Statistical Methodology},
  title   = {Estimating Means of Bounded Random Variables by Betting},
  year    = {2024},
  number  = {1},
  pages   = {1--27},
  volume  = {86},
}

@Article{Yang2025Qwen3,
  author  = {Yang, An and others},
  journal = {arXiv preprint arXiv:2505.09388},
  title   = {{Qwen3} Technical Report},
  year    = {2025},
}

@Misc{KoshimotoRanc2021genulens,
  author       = {Koshimoto, Naoki and Ranc, Cl{\'e}ment},
  title        = {genulens: A Tool for Gravitational Microlensing Events Simulation},
  year         = {2021},
  howpublished = {Computer software, Zenodo},
  doi          = {10.5281/zenodo.4784948},
  url          = {https://doi.org/10.5281/zenodo.4784948},
}

@inproceedings{papamakarios2019sequential,
  author    = {Papamakarios, George and Sterratt, David and Murray, Iain},
  title     = {Sequential Neural Likelihood: Fast Likelihood-free Inference
               with Autoregressive Flows},
  booktitle = {Proceedings of the Twenty-Second International Conference on
               Artificial Intelligence and Statistics},
  series    = {Proceedings of Machine Learning Research},
  volume    = {89},
  pages     = {837--848},
  year      = {2019},
}

@inproceedings{lueckmann2021benchmarking,
  author    = {Lueckmann, Jan-Matthis and Boelts, Jan and Greenberg, David and
               Goncalves, Pedro and Macke, Jakob},
  title     = {Benchmarking Simulation-Based Inference},
  booktitle = {Proceedings of The 24th International Conference on Artificial
               Intelligence and Statistics},
  series    = {Proceedings of Machine Learning Research},
  volume    = {130},
  pages     = {343--351},
  year      = {2021},
}

@article{sobol1967distribution,
  author  = {Sobol', I. M.},
  title   = {On the distribution of points in a cube and the approximate
             evaluation of integrals},
  journal = {USSR Computational Mathematics and Mathematical Physics},
  volume  = {7},
  number  = {4},
  pages   = {86--112},
  year    = {1967},
  doi     = {10.1016/0041-5553(67)90144-9},
}

@Article{Forre2026,
  author  = {P. Forr\'e and L. Brenner},
  journal = {arXiv:2610.08210},
  title   = {Anytime-valid simulation-based hypothesis testing},
  year    = {2026},
  month   = oct,
}

\clearpage
\appendix
\thispagestyle{empty}

\onecolumn
\aistatstitle{Sequential Likelihood-Free Inference via Conformal E-Values:\\Supplementary Materials}

\section{ADDITIONAL RELATED WORK}
\label{app:related}

\paragraph{Simulation-based inference and posterior calibration.}
Classical likelihood-free methods include approximate Bayesian
computation, which compares observed and simulated summaries,
and synthetic likelihood, which models the distribution of
summaries using repeated simulation
\citep{Beaumont2002,Wood2010}.
Neural simulation-based inference learns posterior,
likelihood, or density-ratio surrogates, often amortizing
inference across observations or parameter values
\citep{Cranmer2020}.
Finite-budget posterior approximations can be overconfident
or otherwise poorly calibrated \citep{Hermans2022}.
Proposed remedies include balanced neural ratio estimation
\citep{Delaunoy2022}, differentiable coverage calibration
\citep{Falkiewicz2023}, and conformalized posterior regions
\citep{Patel2024}.
In particular, CANVI provides marginal coverage under the
simulator-induced joint distribution of parameters and
observations. This differs from coverage at each fixed
data-generating parameter, and from coverage that holds
simultaneously over observation times.

\paragraph{Density-ratio and contrastive estimation.}
Direct density-ratio estimation avoids separately estimating
two densities \citep{Sugiyama2012}.
Noise-contrastive estimation learns unnormalized models by
discriminating data from noise
\citep{Gutmann2010,Gutmann2012}; related classifier
constructions estimate likelihood or posterior ratios from
simulator output \citep{Cranmer2015,Thomas2022,Hermans2020}.
Contrastive formulations connect neural posterior and ratio
estimation and provide multi-sample training objectives
\citep{Durkan2020,Miller2022}.
Our classification interpretation of expected log growth is
also connected to InfoNCE
\citep{Oord2018,Poole2019}.
These connections motivate efficient scores; validity follows
from the subsequent normalization, independently of whether
training recovers the population-optimal ratio.

\paragraph{Calibration of frequentist confidence regions.}
ACORE estimates odds ratios with probabilistic classifiers
and calibrates tests through simulated critical values
\citep{Dalmasso2020}.
LF2I provides a modular framework for confidence-set
construction and coverage diagnostics
\citep{Dalmasso2024}.
WALDO uses prediction algorithms or posterior estimators to
construct a Wald-type statistic, followed by simulated
quantile calibration \citep{Masserano2023}.
For methods using learned critical values, coverage accuracy
depends on the calibration step as well as the statistic.
The TRUST framework distinguishes finite-sample local
coverage from parameter-conditional coverage obtained
asymptotically as the simulation budget increases, under
appropriate assumptions \citep{Cabezas2026}.
These guarantees concern a fixed observed sample size.
SLICE instead constructs conditional e-values at each
observation and combines them into test martingales.

\paragraph{E-values and confidence sequences.}
Test martingales and Ville's inequality provide classical
foundations for sequential inference \citep{Ville1939}.
Modern treatments develop confidence sequences, e-values,
and guarantees under continuous monitoring and optional
stopping \citep{Howard2021,VovkWang2021,Ramdas2023}.
Universal inference constructs finite-sample-valid
likelihood-ratio-type tests and confidence sets through
sample splitting, with extensions to sequential inference
\citep{Wasserman2020}.
SLICE shares the use of nonnegative evidence processes and
test inversion, obtaining the required conditional
expectation property through simulator draws and
exchangeability.

\paragraph{Exchangeability and neighboring sequential methods.}
Normalized scores have precedents in early transductive
inference \citep{Gammerman1998}, conformal e-prediction and
e-testing
\citep{Vovk2025ConformalE,Vovk2025ConformalTesting}, and
testing under group invariance \citep{Koning2026}.
Sequential predictive two-sample tests also use learned
classifiers to construct anytime-valid evidence
\citep{PodkopaevRamdas2023}.
SLICE applies sequential testing separately to candidate
simulator distributions and inverts the tests for parameter
inference.
A different sequential setting is studied by
\citet{FischerRamdas2025}, who develop anytime-valid Monte
Carlo tests as additional resamples are generated for a fixed
observed dataset. In SLICE, the observed data themselves
arrive sequentially, with fresh reference simulations at each
round, as in the simulation-based sequential construction of
\citet{dombowsky2026besag}.

\section{BACKGROUND}

\subsection{Coverage of the confidence sequence}
\label{app:coverage}

The argument is standard \citep{Howard2021,Ramdas2023}; we include it for
completeness. Fix $\thetatrue\in\Theta$. By definition of $C_{n,\alpha}$,
\[
\{\exists n\geq 1:\thetatrue\notin C_{n,\alpha}\}
=\{\exists n\geq 1:M_n^{(\thetatrue)}\geq\alpha^{-1}\}
\subseteq\{\textstyle\sup_{n\geq 0}M_n^{(\thetatrue)}\geq\alpha^{-1}\}.
\]
Since $(M_n^{(\thetatrue)})_{n\geq 0}$ is a test supermartingale under
$P_{\thetatrue}$, Ville's inequality \citep{Ville1939} gives
\[
\PP_{\thetatrue}\!\left(
    \exists n\geq 1:\thetatrue\notin C_{n,\alpha}
\right)
\leq
\PP_{\thetatrue}\!\left(
    \sup_{n\geq 0}M_n^{(\thetatrue)}\geq\alpha^{-1}
\right)
\leq\alpha.
\]
As $\thetatrue$ was arbitrary,
$\PP_{\theta}(\forall n\geq 1:\theta\in C_{n,\alpha})\geq 1-\alpha$ for
every $\theta\in\Theta$, which is the claimed coverage.

\subsection{Neural ratio estimation losses}
\label{app:nre-losses}

We use $q$, $m$, and $s_\phi$ as defined in
\cref{sec:backgroundNRE}. Probabilities and expectations
below refer to the specified auxiliary classification
experiments. Population minimizers are characterized over
unrestricted measurable scores; in practice, the score is
restricted to the network family $s_\phi$.
For these characterizations, assume that
$f_\theta(y)/m(y)$ is positive and finite for
$q(\theta)m(y)\,\dd\theta\,\mu(\dd y)$-almost every
$(\theta,y)$. Identities between population targets are
understood almost everywhere under this measure.

\paragraph{Binary logistic loss (NRE-A).}
Introduce a label $C\in\{0,1\}$ with
$\PP(C=1)=\tfrac12$, and let $(\theta,Y)$ have
conditional density $q(\theta)f_\theta(y)$ given $C=1$
and $q(\theta)m(y)$ given $C=0$
\citep{Hermans2020}.
A classifier with logit $s(\theta,Y)$ models the
conditional probability of class $1$ by
$\sigma(s(\theta,Y))$, where
$\sigma(u)=(1+e^{-u})^{-1}$.
Its population cross-entropy, dropping the overall
factor $\tfrac12$, is
\[
\begin{aligned}
\mathcal L_{\mathrm{bin}}(s)
={}&-\EE_{\mathrm{joint}}
    [\log \sigma(s(\theta,Y))]\\
   &-\EE_{\mathrm{marg}}
    [\log\{1-\sigma(s(\theta,Y))\}],
\end{aligned}
\]
where the expectations correspond to the joint density
$q(\theta)f_\theta(y)$ and the product of marginals
$q(\theta)m(y)$, respectively.
Pointwise in $(\theta,y)$ on the common support,
the integrand is minimized at
\[
\sigma(s^\star(\theta,y))
=\frac{f_\theta(y)}{f_\theta(y)+m(y)},
\quad\text{i.e.}\quad
s^\star(\theta,y)=\log\frac{f_\theta(y)}{m(y)},
\]
which agrees with \eqref{eq:lte-target}.
Under unequal sampling with $\PP(C=1)=\pi\in(0,1)$,
the same computation shows that the minimizer of the
correspondingly weighted cross-entropy picks up the
additive constant $\log\{\pi/(1-\pi)\}$.

\paragraph{InfoNCE/NRE-B loss.}
An InfoNCE-type loss arises from a $(K+1)$-way
classification problem with one positive and $K$
negatives \citep{Oord2018,Poole2019}.
Fix $\theta_0,\theta_1$ such that $P_{\theta_0}$ and
$P_{\theta_1}$ are mutually absolutely continuous.
Draw $J\sim\operatorname{Unif}\{0,\ldots,K\}$;
conditional on $J$, draw $X_J\sim P_{\theta_1}$ and
$X_\ell\sim P_{\theta_0}$ independently for $\ell\ne J$.
Bayes' rule gives
\[
\PP(J=j\mid X_{0:K})
=\frac{r_{\theta_1,\theta_0}(X_j)}
       {\sum_{\ell=0}^K r_{\theta_1,\theta_0}(X_\ell)}.
\]
Consider a conditional class distribution parametrized
by $h$,
\[
q_h(j\mid X_{0:K})
=\frac{e^{h(X_j)}}{\sum_{\ell=0}^K e^{h(X_\ell)}}.
\]
The associated population cross-entropy is
\[
\mathcal L_{\mathrm{NCE},K}(h)
=-\EE\!\left[\log q_h(J\mid X_{0:K})\right],
\]
which is minimized by $q_h$ equal to the true conditional
class distribution. Matching the two softmax forms yields
$h^\star(y)=\log r_{\theta_1,\theta_0}(y)+a$
for some constant $a$, almost everywhere under
$P_{\theta_0}$.
This is the classification problem behind the growth
criterion \eqref{eq:finite-k-growth}, with observations
as the contrasted items.

NRE-B \citep{Durkan2020} instead contrasts parameters
at a fixed observation.
Writing $K_{\mathrm{tr}}\geq2$ for the number of
candidate parameters, draw
$\vartheta_1,\ldots,\vartheta_{K_{\mathrm{tr}}}
\stackrel{\mathrm{i.i.d.}}{\sim}q$
and, independently,
$J\sim\operatorname{Unif}\{1,\ldots,K_{\mathrm{tr}}\}$.
Conditional on $J$ and
$\vartheta_{1:K_{\mathrm{tr}}}$, draw
$Y\sim P_{\vartheta_J}$.
Bayes' rule now gives
\[
\PP(J=j\mid\vartheta_{1:K_{\mathrm{tr}}},Y)
=\frac{f_{\vartheta_j}(Y)/m(Y)}
       {\sum_{\ell=1}^{K_{\mathrm{tr}}}
        f_{\vartheta_\ell}(Y)/m(Y)}.
\]
The corresponding model is
\[
q_s(j\mid\vartheta_{1:K_{\mathrm{tr}}},Y)
=\frac{e^{s(\vartheta_j,Y)}}
       {\sum_{\ell=1}^{K_{\mathrm{tr}}}
        e^{s(\vartheta_\ell,Y)}},
\]
with population loss
\[
\mathcal L_{\mathrm B}(s)
=-\EE\!\left[
    \log q_s(J\mid\vartheta_{1:K_{\mathrm{tr}}},Y)
\right].
\]
The same argument shows that its minimizers are
\[
s^\star(\theta,y)
=\log\frac{f_\theta(y)}{m(y)}+a(y),
\]
where $a$ is an arbitrary measurable real-valued
function: matching the two softmax forms at each fixed
$Y$ determines $s$ only up to a per-observation shift,
which enters numerator and denominator as a common
factor.
Score differences
$s^\star(\theta_1,y)-s^\star(\theta_0,y)$
nonetheless recover $\log r_{\theta_1,\theta_0}(y)$,
as the same function of $y$ cancels in the difference.

\paragraph{Multiclass NRE-C loss.}
NRE-C \citep{Miller2022} removes the indeterminacy in
$y$ by adding a class in which the observation is
independent of all candidate parameters.
Fix $K_{\mathrm{tr}}\geq1$ and $\gamma>0$.
Draw $\vartheta_1,\ldots,\vartheta_{K_{\mathrm{tr}}}$
independently from $q$ and, independently, draw a label
$C\in\{0,\ldots,K_{\mathrm{tr}}\}$ with
\[
\PP(C=0)=\frac{1}{1+\gamma},
\qquad
\PP(C=j)=\frac{\gamma}
{K_{\mathrm{tr}}(1+\gamma)},
\quad j=1,\ldots,K_{\mathrm{tr}}.
\]
Conditional on the candidates and $C=j\geq1$, draw
$Y\sim P_{\vartheta_j}$.
For $C=0$, draw $Y$ independently of the candidates
with density $m$.
Thus $\gamma$ is the prior odds of a nonzero class
versus class zero.

The model class probabilities are
\[
\begin{aligned}
\pi_{s,\gamma}(0\mid\vartheta_{1:K_{\mathrm{tr}}},y)
&=
\frac{K_{\mathrm{tr}}}
{K_{\mathrm{tr}}+
 \gamma\sum_{\ell=1}^{K_{\mathrm{tr}}}
 e^{s(\vartheta_\ell,y)}},
\\
\pi_{s,\gamma}(j\mid\vartheta_{1:K_{\mathrm{tr}}},y)
&=
\frac{\gamma e^{s(\vartheta_j,y)}}
{K_{\mathrm{tr}}+
 \gamma\sum_{\ell=1}^{K_{\mathrm{tr}}}
 e^{s(\vartheta_\ell,y)}},
\qquad j=1,\ldots,K_{\mathrm{tr}},
\end{aligned}
\]
and the population loss is
\[
\mathcal L_{\mathrm C}(s)
=-\EE\!\left[
    \log\pi_{s,\gamma}
    (C\mid\vartheta_{1:K_{\mathrm{tr}}},Y)
\right].
\]
Bayes' rule gives the true conditional class
probabilities by replacing
$e^{s(\vartheta_j,y)}$ with $f_{\vartheta_j}(y)/m(y)$.
Matching the odds of each nonzero class against class
zero therefore identifies
$s^\star(\theta,y)=\log\{f_\theta(y)/m(y)\}$
uniquely, as in \eqref{eq:lte-target}, without an
additive ambiguity.
NRE-A is recovered at $K_{\mathrm{tr}}=1$ and
$\gamma=1$.
As $\gamma\to\infty$, class zero disappears and the
remaining class probabilities reduce to those of
NRE-B.

\section{PRACTICAL VARIANTS AND EXTENSIONS}
\label{app:extensions}

\subsection{Variants used in our experiments}

\subsubsection{Beyond i.i.d. observations}

The presentation assumes independent observations. Extensions are possible whenever, under $H_{\theta}$, the observed next datum and the pseudo-observations are conditionally exchangeable given the past. For Markov or state-space simulators, this requires simulating pseudo-observations conditionally on the same observed history or latent-filtering information.

The simplest such case, used in \cref{app:microlensing}, is inference conditional on observed covariates. Suppose the pairs $(X_i,Y_i)$ are i.i.d., and that under $H_\theta$ the conditional law of $Y_i$ given $X_i=x$ is $P_\theta(\cdot\mid x)$, while the law of $X_i$ is left unspecified. Let $\mathcal G_{i-1}=\sigma(\mathcal F_{i-1},X_i)$ add the current covariate to the past.

\begin{corollary}[Conditional validity]
\label{cor:conditional-validity}
Fix $\theta$. At each $i$, let the score $T_i^{(\theta)}$ be positive and $\mathcal G_{i-1}$-measurable, and draw $\widetilde Y_{i1}^{(\theta)},\ldots,\widetilde Y_{iK}^{(\theta)}$ i.i.d.\ from $P_\theta(\cdot\mid X_i)$, independently of $Y_i$ given $\mathcal G_{i-1}$. Then $(M_n^{(\theta)})_{n\ge0}$ with $E_i^{(\theta)}$ as in \eqref{eq:main-evalue} is a test martingale with respect to $(\mathcal G_n)$ under $H_\theta$, whatever the distribution of the covariates. Consequently, \eqref{eq:likelihood-free-cs} is a $(1-\alpha)$ confidence sequence.
\end{corollary}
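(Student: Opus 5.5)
The plan is to repeat the argument of \cref{prop:exact-martingale}, using the enlarged filtration $(\mathcal G_n)$ and conditioning on $\mathcal G_{i-1}$ rather than $\mathcal F_{i-1}$. First I will fix the filtration precisely: $\mathcal G_{i}=\sigma(\mathcal F_i,X_{i+1})$. Here $\mathcal F_i$ contains $X_1,\ldots,X_i$, $Y_1,\ldots,Y_i$, and all reference batches through time $i$. I must check that this is indeed a filtration ($\mathcal G_{i-1}\subseteq\mathcal G_i$) and that $E_i^{(\theta)}$ is $\mathcal G_i$-measurable. Both are immediate, since $T_i^{(\theta)}$ is $\mathcal G_{i-1}$-measurable and $Y_i$ and the batch are in $\mathcal F_i\subseteq\mathcal G_i$.

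The key step is conditional exchangeability. Under $H_\theta$ I need the conditional law of $Y_i$ given $\mathcal G_{i-1}$ to be $P_\theta(\cdot\mid X_i)$. I will argue this from the i.i.d.\ structure of the pairs: $(X_i,Y_i)$ is independent of $\mathcal F_{i-1}$, which contains past pairs and auxiliary randomness that I assume independent of the current pair. Hence the conditional law of $Y_i$ given $(\mathcal F_{i-1},X_i)$ equals its law given $X_i$, which is $P_\theta(\cdot\mid X_i)$ under the null. By assumption, the references are i.i.d.\ $P_\theta(\cdot\mid X_i)$ given $\mathcal G_{i-1}$ and independent of $Y_i$ given $\mathcal G_{i-1}$. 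So $(Y_i,\widetilde Y_{i1}^{(\theta)},\ldots,\widetilde Y_{iK}^{(\theta)})$ is conditionally i.i.d., hence exchangeable, given $\mathcal G_{i-1}$. The law of $X_i$ never enters, which gives the claim ``whatever the distribution of the covariates''.

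With this in hand, the normalized weights $W_{ij}=T_i^{(\theta)}(Z_{ij})/\sum_\ell T_i^{(\theta)}(Z_{i\ell})$ are defined from a score fixed given $\mathcal G_{i-1}$. They are bounded in $[0,1]$, have equal conditional expectations by exchangeability, and sum to one. Therefore $\EE_\theta[E_i^{(\theta)}\mid\mathcal G_{i-1}]=(K+1)\EE_\theta[W_{i0}\mid\mathcal G_{i-1}]=1$. Because $\mathcal G_{i-1}$ contains $\mathcal F_{i-1}$, the process $M_{i-1}^{(\theta)}$ is $\mathcal G_{i-1}$-measurable, so the product is a test martingale with respect to $(\mathcal G_n)$. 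Integrability follows from $0\le E_i^{(\theta)}\le K+1$.

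Finally, Ville's inequality applies to test martingales with respect to any filtration. The coverage argument of \cref{app:coverage} then goes through verbatim and yields that \eqref{eq:likelihood-free-cs} is a $(1-\alpha)$ confidence sequence. The main obstacle is the measure-theoretic bookkeeping in the second paragraph. There I must justify that conditioning on the extra covariate $X_i$ does not disturb the independence of $Y_i$ from the past, and state cleanly the assumption that the auxiliary randomness in $\mathcal F_{i-1}$ is independent of the current pair. I would make that independence explicit as the analogue of the paper's standing assumption that $Y_i$ is independent of $\mathcal F_{i-1}$.
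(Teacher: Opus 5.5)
Your proposal is correct and follows the paper's proof. Like the paper, you condition on $\mathcal G_{i-1}=\sigma(\mathcal F_{i-1},X_i)$, note that $Y_i$ and its references are conditionally i.i.d.\ from $P_\theta(\cdot\mid X_i)$ with $T_i^{(\theta)}$ fixed, rerun the normalized-score exchangeability argument of \cref{prop:exact-martingale} to get $\EE_\theta[E_i^{(\theta)}\mid\mathcal G_{i-1}]=1$, and conclude with Ville's inequality. Your extra steps only make explicit what the paper asserts without comment: that $(X_i,Y_i)$ must be independent of $\mathcal F_{i-1}$, and that $0\le E_i^{(\theta)}\le K+1$ gives integrability.
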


\begin{proof}
Under $H_\theta$, conditional on $\mathcal G_{i-1}$, the variables $Y_i,\widetilde Y_{i1}^{(\theta)},\ldots,\widetilde Y_{iK}^{(\theta)}$ are i.i.d.\ from $P_\theta(\cdot\mid X_i)$ and $T_i^{(\theta)}$ is fixed. The proof of \cref{prop:exact-martingale} then applies verbatim with $\mathcal G_{i-1}$ in place of $\mathcal F_{i-1}$, giving $\EE_\theta[E_i^{(\theta)}\mid\mathcal G_{i-1}]=1$; Ville's inequality gives the confidence sequence.
\end{proof}

Conditioning discards only the information about $\theta$ carried by the covariates themselves, and removes the need to model their distribution.

\subsubsection{Nuisance parameters and projections}
\label{app:nuisance-projections}

If the target is a lower-dimensional parameter $\psi=\psi(\theta)$, a valid projected confidence sequence is
\[
    C_{n,\alpha}^\psi=\{\psi(\theta):\theta\in C_{n,\alpha}\}.
\]
This projection can be conservative but preserves coverage. For testing a composite null $H_0:\theta\in\Theta_0$, the simple-null e-processes do not automatically combine by averaging, since $M_n^{(\eta)}$ is generally not an e-process under $P_\theta$ when $\eta\ne\theta$. A generic valid but often conservative is to consider the process
\[
    M_n^{\mathrm{comp}}(\Theta_0)=\inf_{\eta\in\Theta_0}M_n^{(\eta)}. 
\]
This process need not be a test martingale. Nevertheless, for every $\theta\in\Theta_0$,
$M_n^{\mathrm{comp}}(\Theta_0)\leq M_n^{(\theta)}$,
so Ville's inequality gives
\[
\begin{aligned}
    &\PP_\theta\!\left(
        \exists n\geq1:
        M_n^{\mathrm{comp}}(\Theta_0)\geq\alpha^{-1}
    \right)\\
    &\qquad\leq
    \PP_\theta\!\left(
        \sup_{n\geq0}M_n^{(\theta)}\geq\alpha^{-1}
    \right)
    \leq\alpha.
\end{aligned}
\]
Thus rejecting when $M_n^{\mathrm{comp}}(\Theta_0)$
crosses $\alpha^{-1}$ gives an anytime-valid
level-$\alpha$ test of the composite null.

\subsection{Further extensions}

\subsubsection{Averaging over independent reference batches}

To reduce variability due to the reference simulations, one may average conditional e-values obtained with independent reference batches. At each time $i$, generate
$B\geq 1$ mutually independent reference batches, each containing
$K$ independent draws from $P_\theta$, independently of
$(\calF_{i-1},Y_i)$.
Let $E_{i,b}^{(\theta)}$ denote the factor obtained from
batch $b$ using \eqref{eq:main-evalue}, with the same
observation $Y_i$ and predictable score $T_i^{(\theta)}$
for every batch. Define
\[
    \bar E_i^{(\theta)}
    =\frac1B\sum_{b=1}^B E_{i,b}^{(\theta)},
    \qquad
    \bar M_n^{(\theta)}
    =\prod_{i=1}^n\bar E_i^{(\theta)},
    \qquad
    \bar M_0^{(\theta)}=1.
\]
By \cref{prop:exact-martingale} and linearity of
conditional expectation,
\[
    \EE_\theta\!\left[
        \bar E_i^{(\theta)}\mid\calF_{i-1}
    \right]=1.
\]
Hence $(\bar M_n^{(\theta)})_{n\geq0}$ is a test
martingale under $\PP_\theta$ and can be inverted
to obtain a confidence sequence.
The factors share $Y_i$ but are independent conditional
on $(\calF_{i-1},Y_i)$; averaging therefore divides their
conditional variance by $B$, at the cost of $BK$
reference draws per observation and candidate parameter.

\subsubsection{Grid evaluation and continuous parameter spaces}
\label{app:continuous-parameters}

The theoretical confidence sequence in
\eqref{eq:likelihood-free-cs} is defined over all
$\theta\in\Theta$. Evaluating it on a finite grid gives
the stated coverage guarantee only for true parameters
on that grid. Continuous-parameter coverage can instead
be obtained by computing a \emph{certified outer
approximation}: an interval is excluded only when every
parameter it contains is certified to have crossed the
rejection threshold. We describe this extension for a
compact interval $\Theta\subset\mathbb R$.

\paragraph{A common random field over parameters.}
Suppose the simulator admits a jointly measurable
representation $S(\theta,U)$, where the distribution of
$U$ does not depend on $\theta$ and
$S(\theta,U)\sim P_\theta$. At each round $i$, after
fixing the positive predictable score family
$T_i^{(\theta)}$, generate independent base draws
$U_{i1},\ldots,U_{iK}$, independently of
$(\mathcal F_{i-1},Y_i)$, and set
\[
    \widetilde Y_{ij}^{(\theta)}
    =S(\theta,U_{ij}).
\]
Assume that the score family is jointly measurable in
$(\theta,y)$. These common random numbers define
$M_n^{(\theta)}$ simultaneously over $\Theta$, whether
or not a particular parameter has been evaluated
numerically. For each fixed $\theta$, the conditional
exchangeability argument of
\cref{prop:exact-martingale} remains unchanged.

Write
\[
    \ell_m(\theta)=\log M_m^{(\theta)},
    \qquad
    h_\alpha=\log(1/\alpha).
\]
The ideal continuous running intersection is
\[
    \mathcal C_{n,\alpha}
    =
    \bigcap_{m=1}^n C_{m,\alpha}
    =
    \left\{
        \theta\in\Theta:
        \max_{1\le m\le n}\ell_m(\theta)<h_\alpha
    \right\}.
\]

\paragraph{Certified outer approximation.}
At round $n$, let $\mathcal R_n$ be a finite,
measurably selected collection of intervals
$I\subseteq\Theta$. An interval may be included in
$\mathcal R_n$ only if, for some $m\le n$, a certified
lower bound satisfies
\begin{equation}
    h_\alpha
    \le \underline\ell_m(I)
    \le \inf_{\theta\in I}\ell_m(\theta).
    \label{eq:continuous-interval-certificate}
\end{equation}
Here, certification means that the bound holds
throughout the selected interval for every realized
history, or on a common probability-one event.
Define
\begin{equation}
    \widehat C_{n,\alpha}
    =
    \Theta\setminus
    \bigcup_{s=1}^n\ \bigcup_{I\in\mathcal R_s} I.
    \label{eq:continuous-outer-set}
\end{equation}

\begin{proposition}[Certified continuous-parameter inversion]
\label{prop:continuous-certified-inversion}
Suppose every excluded interval satisfies
\eqref{eq:continuous-interval-certificate}, and all
selection rules are measurable. Then
$(\widehat C_{n,\alpha})_{n\ge1}$ is nested and satisfies
\[
    \mathcal C_{n,\alpha}
    \subseteq \widehat C_{n,\alpha}
    \qquad\text{for every }n\ge1.
\]
Consequently, for every $\thetatrue\in\Theta$,
\[
    \PP_{\thetatrue}\!\left(
        \thetatrue\in\widehat C_{n,\alpha}
        \text{ for all }n\ge1
    \right)
    \ge 1-\alpha.
\]
\end{proposition}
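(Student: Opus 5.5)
The plan is to prove the two claims separately: first that the sets are nested and contain the ideal running intersection, then coverage as a corollary of \cref{prop:exact-martingale} and Ville's inequality. \emph{Nestedness} is immediate from \eqref{eq:continuous-outer-set}. The removed set $\bigcup_{s\le n}\bigcup_{I\in\mathcal R_s}I$ is nondecreasing in $n$, so $\widehat C_{n+1,\alpha}\subseteq\widehat C_{n,\alpha}$.

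For the \emph{inclusion} $\mathcal C_{n,\alpha}\subseteq\widehat C_{n,\alpha}$, I will argue by contraposition. Suppose $\theta\notin\widehat C_{n,\alpha}$. Then $\theta\in I$ for some $I\in\mathcal R_s$ with $s\le n$. By \eqref{eq:continuous-interval-certificate} there is some $m\le s\le n$ with $h_\alpha\le\underline\ell_m(I)\le\inf_{\vartheta\in I}\ell_m(\vartheta)\le\ell_m(\theta)$. Hence $\max_{1\le m\le n}\ell_m(\theta)\ge h_\alpha$, which means $\theta\notin\mathcal C_{n,\alpha}$. The step needing care is the index of the certificate: it must satisfy $m\le s$, which I read off from the requirement that intervals in $\mathcal R_s$ be certified at some round $m\le s$. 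The certificate is also required to hold for every realized history, or on a common probability-one event. So the inclusion holds pathwise, or almost surely, simultaneously for all $n$ and all $\theta$.

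For \emph{coverage}, fix $\thetatrue$. The common-random-number construction leaves the conditional exchangeability argument unchanged for each fixed parameter. So \cref{prop:exact-martingale} applies and $(M_n^{(\thetatrue)})$ is a test martingale under $\PP_{\thetatrue}$. Using the inclusion,
\[
\{\exists n\ge1:\thetatrue\notin\widehat C_{n,\alpha}\}\subseteq\{\exists n\ge1:\thetatrue\notin\mathcal C_{n,\alpha}\}=\{\exists m\ge1:M_m^{(\thetatrue)}\ge\alpha^{-1}\},
\]
up to a null set. Ville's inequality \eqref{eq:ville} then bounds the probability of the right-hand event by $\alpha$, exactly as in \cref{app:coverage}.

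The main obstacle is \emph{measurability}. The event $\{\thetatrue\notin\widehat C_{n,\alpha}\}$ must be an event for probability statements to make sense. I plan to handle this by noting that $\mathcal R_n$ is a finite, measurably selected collection of intervals, so membership of the fixed point $\thetatrue$ in their union is a measurable event. The union over $n$ is countable. Moreover, only the upper bound by the measurable Ville event is needed; an outer-probability argument would suffice even without this.
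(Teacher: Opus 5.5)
Your proposal is correct and follows the same route as the paper's proof. Containment comes from showing that any point of an excluded interval has $\ell_m(\theta)\ge h_\alpha$ for some $m\le n$, nesting follows directly from the definition of $\widehat C_{n,\alpha}$, and coverage follows from Ville's inequality applied to $M^{(\thetatrue)}$. You also add details the paper leaves implicit: the certificate index $m\le s$, the almost-sure qualifier, and the measurability of the exclusion event.
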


\begin{proof}
If $\theta$ belongs to an interval excluded by round
$n$, then \eqref{eq:continuous-interval-certificate}
implies $\ell_m(\theta)\ge h_\alpha$ for some $m\le n$.
Thus $\theta\notin\mathcal C_{n,\alpha}$, establishing
the containment. Nesting follows from
\eqref{eq:continuous-outer-set}, and the coverage
guarantee follows from Ville's inequality applied to
$M^{(\thetatrue)}$.
\end{proof}

Intervals may be selected and subdivided adaptively
after observing the current data. No multiplicity
adjustment is needed for the number of intervals or
refinement steps: the numerical procedure only
enlarges the ideal confidence sequence. In particular,
a computation may stop at any prescribed or
data-dependent budget, provided all unresolved
regions are retained.

\paragraph{Constructing interval bounds.}
Define
\begin{equation*}
    d_{ij}(\theta)
    =
    \log T_i^{(\theta)}\!\left(S(\theta,U_{ij})\right)
    -
    \log T_i^{(\theta)}(Y_i).
\end{equation*}
Dividing the numerator and denominator of
\eqref{eq:main-evalue} by the observed score gives
\begin{equation}
\begin{aligned}
    \ell_m(\theta)
    ={}&m\log(K+1)\\
    &-\sum_{i=1}^m
    \log\!\left(
        1+\sum_{j=1}^K e^{d_{ij}(\theta)}
    \right).
\end{aligned}
\label{eq:continuous-log-evidence}
\end{equation}
Suppose verified calculations provide upper bounds
\[
    d_{ij}(\theta)\le\overline d_{ij}(I)
    \qquad\text{for all }\theta\in I.
\]
Monotonicity of log-sum-exp yields the valid lower bound
\begin{equation}
\begin{aligned}
    \underline\ell_m(I)
    ={}&m\log(K+1)\\
    &-\sum_{i=1}^m
    \log\!\left(
        1+\sum_{j=1}^K e^{\overline d_{ij}(I)}
    \right).
\end{aligned}
\label{eq:continuous-score-envelope}
\end{equation}
Hence $I$ may be excluded whenever
$\max_{m\le n}\underline\ell_m(I)\ge h_\alpha$.
Lower bounds on $d_{ij}$ analogously give upper bounds
on $\ell_m$. If these upper bounds remain below
$h_\alpha$ for every $m\le n$, the entire interval is
certified to belong to $\mathcal C_{n,\alpha}$.

The required bounds may be obtained by propagating
interval enclosures through the simulator and score
network. An alternative is a verified local
Lipschitz bound. If $I=[c_I-r_I,c_I+r_I]$ and
$\ell_m$ is $L_m(I)$-Lipschitz throughout $I$, then
\begin{equation}
    \underline\ell_m(I)
    =
    \ell_m(c_I)-L_m(I)r_I
    \label{eq:continuous-lipschitz-envelope}
\end{equation}
is valid. The constant must control the full
theta-dependent calculation, including the reference
simulations; gradients evaluated only at finitely
many points do not establish such a bound.

\paragraph{Location simulators and ReLU scores.}
The one-dimensional experimental models in \cref{app:simulators}
(Gaussian and $g$-and-$k$) have the form
\[
    S(\theta,U)=\theta+q(U).
\]
A sharper certificate is available when the score
network is composed of affine maps and ReLU
activations, with fixed input standardization and
frozen batch-normalization statistics during
evaluation. Residual additions preserve this
structure.

For the neural score in \eqref{eq:neural-score},
write $q_{ij}=q(U_{ij})$. Conditional on the realized
history and reference draws,
\begin{align*}
    d_{ij}(\theta)
    ={}&
    s_{\phi_{i-1}}(
        \widehat\theta_{i-1},\theta+q_{ij})\\
    &-s_{\phi_{i-1}}(\theta,\theta+q_{ij})\\
    &-s_{\phi_{i-1}}(\widehat\theta_{i-1},Y_i)
      +s_{\phi_{i-1}}(\theta,Y_i).
\end{align*}
Each nonconstant network input follows an affine
line as $\theta$ varies. The network restricted to
such a line is continuous and piecewise affine.
Consequently, there is a finite partition of
$\Theta$ into intervals on which every
$d_{ij}$, for $i\le m$ and $j\le K$, is affine.

Consider one such interval $I=[a,b]$, and write
$d_{ij}(\theta)=u_{ij}\theta+v_{ij}$ on $I$.
Equation~\eqref{eq:continuous-log-evidence} becomes
\[
\begin{aligned}
    \ell_m(\theta)
    ={}&m\log(K+1)\\
    &-\sum_{i=1}^m
    \log\!\left(
        1+\sum_{j=1}^K
        e^{u_{ij}\theta+v_{ij}}
    \right).
\end{aligned}
\]
Since log-sum-exp of affine functions is convex,
$\ell_m$ is concave on $I$. Therefore,
\begin{equation}
    \inf_{\theta\in[a,b]}\ell_m(\theta)
    =
    \min\{\ell_m(a),\ell_m(b)\}.
    \label{eq:continuous-relu-certificate}
\end{equation}
In particular, rejection of both endpoints at the
same time $m$ certifies rejection of the entire
interval. More generally, the chord inequality
\[
\begin{aligned}
    \ell_m((1-t)a+tb)
    \ge{}&(1-t)\ell_m(a)+t\ell_m(b),\\
    &\hspace{3em}0\le t\le1,
\end{aligned}
\]
provides a lower bound throughout the interval.

The partition must account for activation changes
in all relevant historical network evaluations.
Such partitions can be constructed by propagating
affine representations through the layers and
splitting intervals at zeros of ReLU preactivations;
see \citet{SotoudehThakur2019}.
Within each resulting interval, the fixed-time
rejection region $\{\theta:\ell_m(\theta)\ge h_\alpha\}$
is an interval, possibly empty. Its boundaries can
be enclosed using certified one-dimensional
optimization and root finding. Unresolved boundary
regions are retained.

Two qualifications are necessary.
First, arbitrary grid intervals may contain
activation breakpoints, so
\eqref{eq:continuous-relu-certificate} cannot be
applied without verifying the affine partition.
Second, $\max_{m\le n}\ell_m$ need not be concave:
endpoint crossings at different times do not
establish rejection throughout the interval.
The certificate is applied separately at each time,
with certified exclusions retained thereafter.

\paragraph{Adaptive evaluation and numerical rigor.}
A practical implementation can first apply coarse
interval bounds, subdivide unresolved intervals,
and exploit \eqref{eq:continuous-relu-certificate}
where activation patterns are certified.
Full activation-partition enumeration is therefore
optional, and may be computationally expensive.

When a new parameter or subinterval is evaluated,
its historical factors must use the original
reference draws and the original score states,
including the historical alternatives and network
parameters. Refinement changes the numerical
representation of the confidence set, not the
underlying martingale. The next alternative may use
all information available from the completed round;
an exact minimizer over $\Theta$ is unnecessary
for validity.

Formal numerical certification requires enclosures
that account for rounding error in network
evaluations, activation boundaries, and threshold
comparisons. In particular, ordinary floating-point
breakpoint calculations alone do not establish
\eqref{eq:continuous-interval-certificate}.
With sound deterministic bounds, no additional
statistical error allowance is required.
The accuracy of the learned ratio remains irrelevant
to coverage: the certificates concern the actual
chosen score, rather than its unknown population
target.

\section{IMPLEMENTATION AND COMPUTE}
\label{app:implementation}

All experiments are implemented in \texttt{PyTorch} and \texttt{NumPy}. Randomness is controlled throughout by a single base seed expanded via NumPy \texttt{SeedSequence} spawning into independent per-component and per-repetition streams, so every training run, reference simulation, and repetition is individually reproducible. All experiments were run on an Apple Silicon M4 Pro with 24 GB of memory, using PyTorch's Metal (MPS) backend for GPU acceleration.

\paragraph{Cost.} Let $G=|\Theta_{\mathrm{eval}}|$ be the number of
candidates and $K$ the number of references per candidate. At each step the
procedure draws $K$ references for every candidate, so $GK$ simulator calls,
and evaluates the score $(K+1)G$ times; updating the wealth and choosing the
minimum-evidence alternative costs $O(G)$. Over $n$ steps this gives
$O(nGK)$ simulator calls and score evaluations, while memory stays $O(GK)$,
since only the current wealth of each candidate and the current step's
references must be kept. The network is trained once before monitoring, a
fixed cost that does not grow with $n$. The procedure needs no sample size
fixed in advance: by \cref{prop:exact-martingale} the confidence sequence
is valid at any stopping time.

\section{EXPERIMENTAL DETAILS FOR THE SYNTHETIC EXPERIMENTS}
\label{app:experimental-details}

\subsection{Simulators}
\label{app:simulators}

\paragraph{Gaussian.} The Gaussian simulator returns $Y = \theta + Z$ with
$Z \sim \mathcal{N}(0, 1)$ and $\theta \in \Theta = [-4, 4]$. Its likelihood
is available in closed form, so it serves as a tractable benchmark where every
oracle quantity can be computed exactly.

\paragraph{$g$-and-$k$.} The $g$-and-$k$ simulator
\citep{rayner2002numerical} returns $Y = Q_\theta(Z)$ with
$Z \sim \mathcal{N}(0,1)$ and
\begin{equation*}
    Q_\theta(z) = \theta + b\,\bigl(1 + c \tanh(gz/2)\bigr)\, z\, \bigl(1 + z^2\bigr)^k,
\end{equation*}
where $(b, g, k, c) = (1, 2, 0.5, 0.8)$ as is standard. Sampling is a single
standard-normal draw pushed through $Q_\theta$, but $Q_\theta$ admits no
closed-form inverse, so the density can only be evaluated by numerical root
solving, making this a canonical likelihood-free benchmark.

\paragraph{SLCP.} The SLCP (simple likelihood, complex posterior) simulator
\citep{papamakarios2019sequential}, as specified in \texttt{sbibm}
\citep{lueckmann2021benchmarking}, has $\theta\in\Theta=[-3,3]^5$ and returns
$Y=(Y^{(1)},\ldots,Y^{(4)})\in\mathbb{R}^8$ with
$Y^{(j)}\overset{\mathrm{i.i.d.}}{\sim}\mathcal{N}(m_\theta,S_\theta)$,
\begin{equation*}
    m_\theta=(\theta_1,\theta_2),\qquad
    S_\theta=\begin{pmatrix} s_1^2+\epsilon & \rho s_1 s_2\\
    \rho s_1 s_2 & s_2^2+\epsilon \end{pmatrix},
\end{equation*}
where $s_1=\theta_3^2$, $s_2=\theta_4^2$, $\rho=\tanh(\theta_5)$, and
$\epsilon=10^{-6}$ ensures positive definiteness. The likelihood is available
in closed form, but since $\theta_3$ and $\theta_4$ enter only through their
squares, $(\theta_1,\theta_2,\pm\theta_3,\pm\theta_4,\theta_5)$ define the
same distribution and the posterior has four symmetric modes.

\subsection{Training the neural ratio score}
\label{app:training}

\paragraph{Objective and architecture.} We train the likelihood-to-evidence
network $s_\phi(\theta,y)$ of \cref{sec:backgroundNRE} with the NRE-C
objective (\cref{app:nre-losses}), with cardinality $K_{\mathrm{tr}} = 20$ and
$\gamma = 1$. The network is a residual multilayer perceptron
with three residual blocks of width 128, ReLU activations,
with batch normalization. Inputs $(\theta, y)$ are
standardized using statistics estimated from an initial batch of
$5\times10^4$ simulator draws.

\paragraph{Training data and optimization.} Training is simulation-based and
streaming: at every optimizer step we draw a fresh minibatch of
1024 pairs $\theta\sim q$, $Y\sim P_\theta$, with $q$ uniform on
$\Theta$, for a total of $1.024\times10^7$ simulated
pairs per model. We use Adam with learning rate
$5\times10^{-4}$ and gradient-norm clipping at 5. The network is
trained once per model before monitoring begins and kept fixed throughout;
updating $\phi$ during monitoring is permitted provided every update at time
$i$ uses only $\mathcal F_{i-1}$, and affects efficiency but not validity.

\paragraph{Monitoring training quality via expected log growth.} The natural
figure of merit for a score used in the test martingale is its expected
one-step log growth \eqref{eq:finite-k-growth}. For a score
$T(y)=\exp\{h(y)\}$ with $X_0\sim P_\vartheta$ and
$X_1,\ldots,X_K\overset{\mathrm{i.i.d.}}{\sim}P_\theta$, define
\begin{equation}
\mathcal L_K(h;\theta,\vartheta)
=
-\EE\!\left[
\log
\frac{\exp\{h(X_0)\}}
{\sum_{j=0}^K\exp\{h(X_j)\}}
\right],
\label{eq:finite-k-contrastive-loss}
\end{equation}
so that
\begin{equation}
\Gamma_K(T;\theta,\vartheta)
=
\log(K+1)-\mathcal L_K(h;\theta,\vartheta),
\label{eq:growth-contrastive-equivalence}
\end{equation}
where $\Gamma_K(T;\theta,\vartheta)$ denotes \eqref{eq:finite-k-growth} with
$P_{\thetatrue}$ replaced by $P_\vartheta$. We therefore monitor training on
a fixed held-out bank of 4{,}096 tuples
$(\theta,\vartheta,X_{0:K})$ with $K=15$, sampled half with
independent prior draws $(\theta,\vartheta)$ and half with local
perturbations $\vartheta=\theta+\varepsilon$,
$\varepsilon\sim\mathcal N(0,0.25^2)$ clipped to $\Theta$, and report the estimated negative expected log growth
$-\Gamma_K = \mathcal L_K - \log(K+1)$ of the trained score against that of
the exact (Gaussian) or numerical ($g$-and-$k$)
likelihood ratio on the same bank.
Since \eqref{eq:growth-contrastive-equivalence} shows that minimizing
$\mathcal L_K$ maximizes the same finite-$K$ growth used by the test
martingale, $\mathcal L_K$ could also serve directly as a training
objective; we do not pursue this here.

\subsection{Running the confidence sequence}
\label{app:running-cs}

For the Gaussian and $g$-and-$k$ models, the martingale is evaluated over a
fixed grid $\Theta_{\mathrm{eval}}$ of 81 equispaced candidates on
$\Theta = [-4,4]$ (see \cref{app:extensions} for coverage at off-grid
values). For SLCP, $\Theta_{\mathrm{eval}}$ consists of 5{,}120 points of a
scrambled Sobol sequence \citep{sobol1967distribution} on $\Theta=[-3,3]^5$, in
which $\thetatrue=(0,0,2.2,2.2,0)$ and its three mirror images replace their
nearest points; together these are the four parameters
$(0,0,\pm2.2,\pm2.2,0)$, which define the same distribution because
$\theta_3$ and $\theta_4$ enter the model only through their squares
(\cref{app:simulators}). In all models we monitor $n = 100$ observations over
$1{,}000$ repetitions at level $\alpha = 0.1$ with $K = 15$ null simulations
per candidate per step. Throughout, we report the nested (running-intersection) version of the
confidence sequence, $C_{i,\alpha} \gets C_{i-1,\alpha} \cap
\{\theta \in \Theta_{\mathrm{eval}} : M_i^{(\theta)} < \alpha^{-1}\}$: a
candidate whose martingale has ever crossed $\alpha^{-1}$ stays excluded.
Since Ville's inequality bounds the probability that $M^{(\thetatrue)}$
\emph{ever} exceeds $\alpha^{-1}$, this intersection preserves
anytime-validity while making the sets monotone.

\paragraph{Predictable alternative.} Any estimator based only on past
information can be used for $\widehat\theta_{i-1}$. For example, it may be a
point estimator based on $Y_{1:i-1}$, a simulator-based posterior summary,
or the minimum-evidence estimator
\begin{equation}
\widehat\theta_{i-1}
\in
\argmin_{\vartheta\in\Theta_{\mathrm{eval}}}
M_{i-1}^{(\vartheta)},
\label{eq:minimum-e-estimator}
\end{equation}
related to ME-estimation and e-posterior methods
\citep{Wang2026,Gruenwald2023}. These choices affect only efficiency, not
validity. We use \eqref{eq:minimum-e-estimator}; \cref{app:ablations}
compares it with model-based running point estimates.

\paragraph{Discarding the first observations (SLCP).} The predictable
alternative must be initialized before any data are observed, and we use a
fixed default, $\widehat\theta_0=(0,0,1,1,0)$ for SLCP. In this problem the
default is far from $\thetatrue$, so the first one-step e-values sharply
decrease the wealth of many candidates, which then takes many observations to
recover. For SLCP we therefore discard the first two e-values and build the
sets from $M_i^{(\theta)}/M_2^{(\theta)}$ for $i\ge3$; the first two
observations are still used to choose $\widehat\theta_2$, so the guarantee
holds from $i=3$ onwards. More generally, discarding a few initial e-values
reduces the dependence on the starting value at the cost of a few
observations.

\paragraph{Reference simulations.} At each step, $K$ fresh pseudo-observations
are generated for every candidate, from random numbers shared across candidates.

\emph{Common random numbers.}
Draws are coupled across candidates by common random numbers: if the simulator is
written as $Y=S(\theta,U)$ with $U$ a vector of standard normals, we use the same
$U_{ij}$ across candidates and set $\widetilde Y_{ij}^{(\theta)}=S(\theta,U_{ij})$
(a single batch of standard normals per step, shifted by $\theta$ for the
Gaussian, pushed through $Q_\theta$ for $g$-and-$k$, or mapped to
$m_\theta+L_\theta z$ for SLCP, with $L_\theta$ the Cholesky factor of
$S_\theta$). This smooths the random field
$\theta\mapsto M_n(\theta)$ over the grid without affecting coverage: for the true
$\thetatrue$ the pseudo-observations still have distribution $P_{\thetatrue}$
and are independent across time, so the coverage proof is unaffected, and
dependence of the auxiliary simulations across candidates does not compromise
pointwise confidence-sequence validity.

\subsection{Confidence-sequence diagnostics}
\label{app:eval}

\emph{Cumulative miscoverage.} Cumulative miscoverage at time $n$ is
the fraction of repetitions in which the confidence sequence has excluded
the true parameter $\thetatrue$ at some time $i \le n$; it is therefore
nondecreasing in $n$, and anytime-validity requires it to stay below
$\alpha$ uniformly. Figures show it with pointwise Wilson $95\%$ intervals;
the table columns headed ``mis.''\ report it at the final time and should be
read against the nominal $\alpha = 0.1$.

\emph{Mean hull width.} For the Gaussian and $g$-and-$k$ models, hull width
measures how tightly the procedure localises $\thetatrue$. At each time the
confidence set is the collection of grid candidates not yet rejected, and we
record the width of its convex hull, i.e.\ the largest retained candidate
minus the smallest, averaged over the repetitions with a nonempty set.
Figures trace it over time; tables report it early, at $n = 10$ ($W_{10}$),
and at the final time ($W_{100}$).

\emph{Final candidate retention.} For the Gaussian and $g$-and-$k$ models,
retention of a grid candidate $\theta$ is the fraction of repetitions in which
$\theta$ remains in the confidence set at the final time, plotted against
$\theta - \thetatrue$; concentration around zero indicates that false
candidates are excluded reliably.

\emph{Mean retained fraction.} For SLCP, whose confidence set is a subset of
a five-dimensional candidate set and so has no single width, we instead
report the fraction of the candidates in $\Theta_{\mathrm{eval}}$ that remain
in the confidence set, averaged over all repetitions.

\section{ADDITIONAL RESULTS FOR THE SYNTHETIC EXPERIMENTS}
\label{app:synthetic-results}

\subsection{Neural ratio quality (Gaussian and $g$-and-$k$)}
\label{app:nre-eval}

For the Gaussian and $g$-and-$k$ models, we assess the baseline network of
Section~\ref{app:training}, separately for the contrastive (NRE-C) and binary
(NRE-A) objectives. The score enters the martingale only through the log-ratio
$s_\phi(\theta_1,y)-s_\phi(\theta_0,y)$, so we measure how well this recovers
the true log-ratio and the log-growth it delivers.
On all ordered pairs $(\theta_0,\theta_1)$ from a $21$-point equispaced
subgrid of $\Theta$ (every fourth candidate of $\Theta_{\mathrm{eval}}$) we
report three static, pair-level measures. \emph{Excess BCE} is the network's
binary cross-entropy on discriminating $f_{\theta_1}$ from $f_{\theta_0}$, minus
that of the optimal classifier (whose logit is the true log-ratio); this
difference is zero exactly when the score recovers the log-ratio, and positive
otherwise. \emph{Log-ratio RMSE} and \emph{bias} are the root-mean-square and
mean signed error of $s_\phi(\theta_1,\cdot)-s_\phi(\theta_0,\cdot)$ against the
exact (Gaussian) or numerical ($g$-and-$k$) log-ratio. We complement these with
the dynamic \emph{log-growth validation} of Section~\ref{app:training}: along training we track the negative expected one-step log-growth
$-\Gamma_K$ of the current score against that of the oracle likelihood
ratio on the held-out bank, reported
on three subsets: \emph{broad} pairs, with $\theta$ and $\vartheta$ drawn
independently from the prior; \emph{local} pairs, with $\vartheta=\theta+\varepsilon$
a small perturbation of $\theta$; and \emph{overall}, the equal ($50/50$) mixture
of the two that makes up the full bank. A score that recovers the oracle drives
every panel down to the oracle line.

These diagnostics explain why the neural confidence sequences in
\cref{fig:main-experiment,fig:cs-baseline} are essentially indistinguishable from the
exact-likelihood-ratio sequences: at baseline capacity both scores achieve
near-oracle log-growth, even where they do
not recover the log-ratio itself.

\paragraph{Pair-level accuracy.}
Excess BCE is small everywhere for both scores and both models, confirming the
ratio classifiers have converged. The log-ratio RMSE and bias are near zero on and around the diagonal
$\theta_0\approx\theta_1$ and grow toward the off-diagonal corners, where NRE-C is
generally the weaker of the two: its off-diagonal error exceeds NRE-A's on
$g$-and-$k$ in both RMSE and bias (Fig.~\ref{fig:heat-gandk}) and on the Gaussian
in bias (Fig.~\ref{fig:heat-gauss}). These off-diagonal errors do not,
however, reflect in the confidence sequences, which are near-identical across
scores (\cref{fig:cs-baseline}).  This is because efficiency is
governed not by log-ratio accuracy but by the expected log-growth, and for
well-separated laws the growth saturates: once a candidate is clearly distinct
from the truth, any roughly correct score already achieves near-maximal growth,
so the large off-diagonal errors leave efficiency essentially untouched. Pair-level accuracy is
therefore the wrong lens for efficiency; the relevant object is the log-growth
itself, which we examine next.

\paragraph{Log-growth validation.}
Tracking expected one-step log-growth along training
(Figs.~\ref{fig:val-gauss-nrea}--\ref{fig:val-gandk-nrec}), both scores descend
onto the oracle likelihood-ratio line in all three subsets (broad, local, and
overall) for both models. Since
expected log-growth is exactly the efficiency criterion of
Proposition~\ref{prop:optimal-score}, this convergence directly explains why the neural
sequences in \cref{fig:cs-baseline} match the exact-likelihood-ratio sequences
in width and retention.

\begin{figure}[!htbp]
  \centering
  \includegraphics[width=0.85\linewidth]{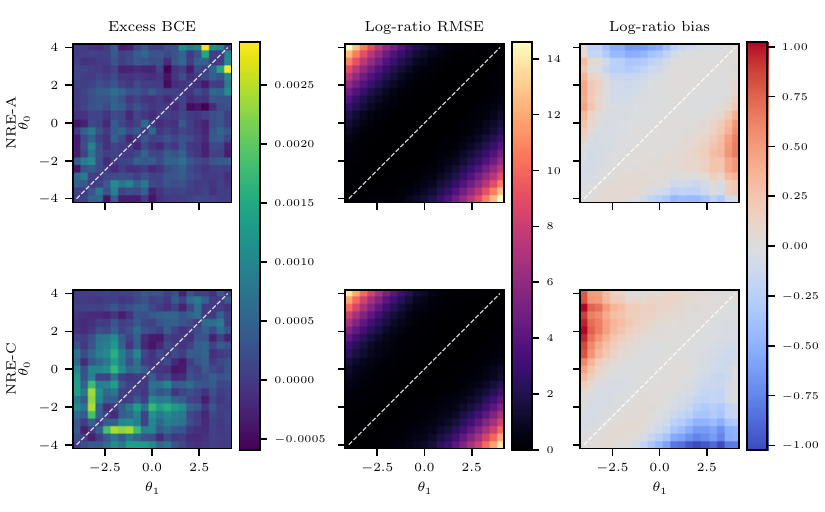}
  \caption{Gaussian model, baseline network. Per parameter-pair
    $(\theta_0,\theta_1)$: excess BCE (left), log-ratio RMSE (center), and
    log-ratio bias (right), for NRE-A (top) and NRE-C (bottom).}
  \label{fig:heat-gauss}
\end{figure}

\begin{figure}[!htbp]
  \centering
  \includegraphics[width=0.85\linewidth]{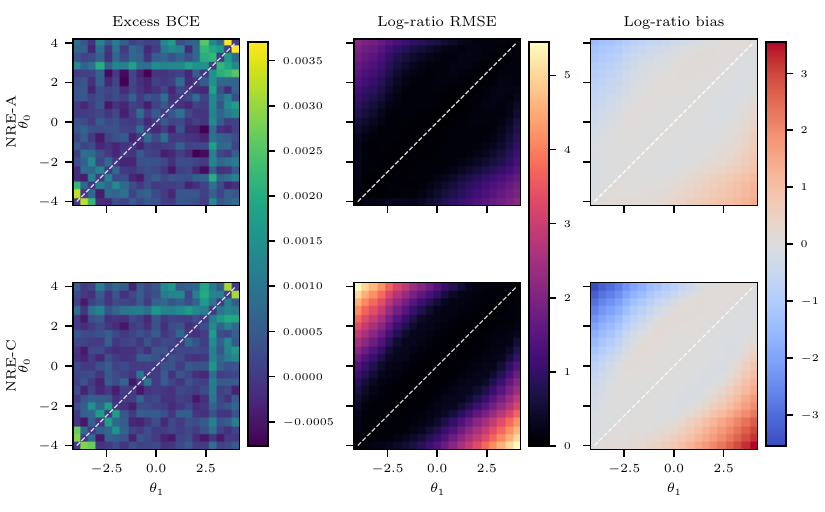}
  \caption{As Fig.~\ref{fig:heat-gauss}, for the $g$-and-$k$ model.}
  \label{fig:heat-gandk}
\end{figure}

\begin{figure}[!htbp]
  \centering
  \begin{subfigure}{0.49\linewidth}
    \centering
    \includegraphics[width=\linewidth]{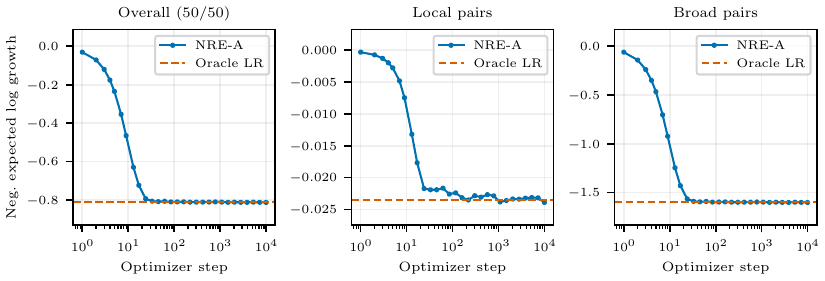}
    \caption{Gaussian}
    \label{fig:val-gauss-nrea}
  \end{subfigure}\hfill
  \begin{subfigure}{0.49\linewidth}
    \centering
    \includegraphics[width=\linewidth]{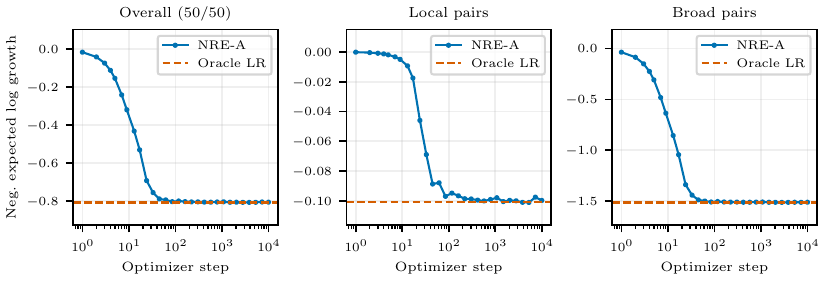}
    \caption{$g$-and-$k$}
    \label{fig:val-gandk-nrea}
  \end{subfigure}
  \caption{NRE-A log-growth validation along training, baseline network.
    Negative expected one-step log-growth of the trained score (blue) against the
    oracle likelihood ratio (dashed), on broad, local, and overall pairs.}
  \label{fig:val-nrea}
\end{figure}

\begin{figure}[!htbp]
  \centering
  \begin{subfigure}{0.49\linewidth}
    \centering
    \includegraphics[width=\linewidth]{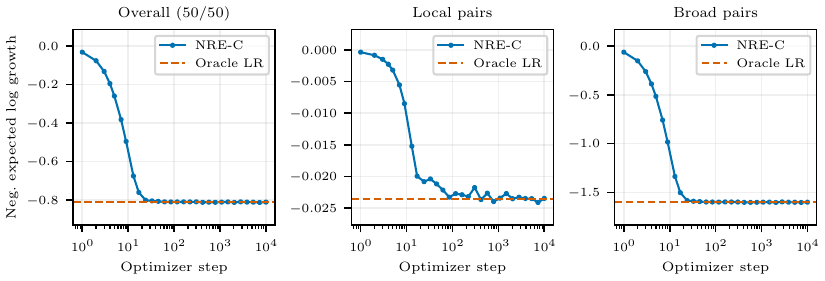}
    \caption{Gaussian}
    \label{fig:val-gauss-nrec}
  \end{subfigure}\hfill
  \begin{subfigure}{0.49\linewidth}
    \centering
    \includegraphics[width=\linewidth]{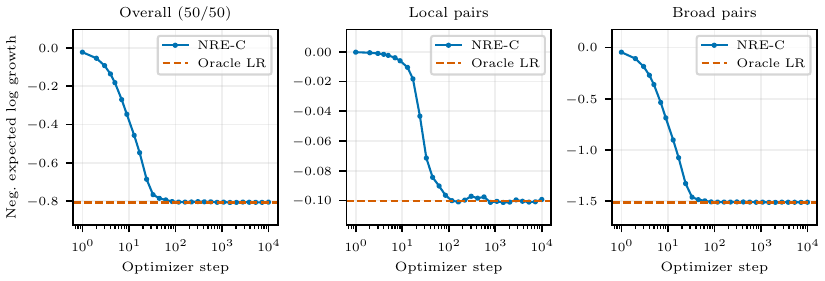}
    \caption{$g$-and-$k$}
    \label{fig:val-gandk-nrec}
  \end{subfigure}
  \caption{As Fig.~\ref{fig:val-nrea}, for NRE-C.}
  \label{fig:val-nrec}
\end{figure}

\subsubsection{Small architecture}

We repeat these diagnostics for the small architecture of the ablations
(\cref{app:ablations}), a single hidden layer of two units.

\paragraph{Pair-level accuracy.}
Under the deliberately underparameterized network the two objectives separate
sharply, reversing their baseline ordering. NRE-C's excess BCE stays
near zero and its bias small on both models; its log-ratio RMSE stays low on
$g$-and-$k$ (Fig.~\ref{fig:heat-gandk-small}), but on the Gaussian it grows in the
far-apart corners about as much as NRE-A's (Fig.~\ref{fig:heat-gauss-small}).
NRE-A degrades markedly on both: its excess BCE is now visibly elevated, and its
log-ratio RMSE and bias grow several-fold. The degradation is far
worse on $g$-and-$k$, where NRE-A's RMSE is large across almost the entire grid.
Crucially, this error is no longer confined to the far-apart, saturated corners
that were harmless at baseline: on $g$-and-$k$ it reaches the near-diagonal band,
which drives localization and where growth is \emph{not} saturated. This is what
makes the small-architecture failure of NRE-A costly rather than cosmetic, as the
log-growth confirms next.

\paragraph{Log-growth validation.}
The one-step log-growth (Figs.~\ref{fig:val-gauss-nrea-small}--\ref{fig:val-gandk-nrec-small})
makes the separation explicit. NRE-C moves toward the oracle likelihood-ratio
line in all three subsets on both models: it reaches it on $g$-and-$k$, while
on the Gaussian it is still approaching it when training stops, most visibly on
the local pairs. NRE-A does not. On the Gaussian
it plateaus visibly above the oracle in every subset, a persistent but bounded
log-growth deficit. On $g$-and-$k$ it collapses, remaining almost flat and
capturing a small fraction of the available growth (its
overall curve settles near $-0.11$ against an oracle near $-0.81$). Because this
shortfall reaches the contested near-diagonal region rather than the saturated
tails, it translates directly into wider, less concentrated confidence sequences
under NRE-A. This is severe on $g$-and-$k$, where its sets fill much of the grid
and barely localize $\theta_*$ (Table~\ref{tab:ablations}). Coverage
nonetheless remains valid throughout: a misspecified score forfeits power,
never validity, and the contrastive objective is markedly more robust to the
misspecification.

\begin{figure}[!htbp]
  \centering
  \includegraphics[width=0.85\linewidth]{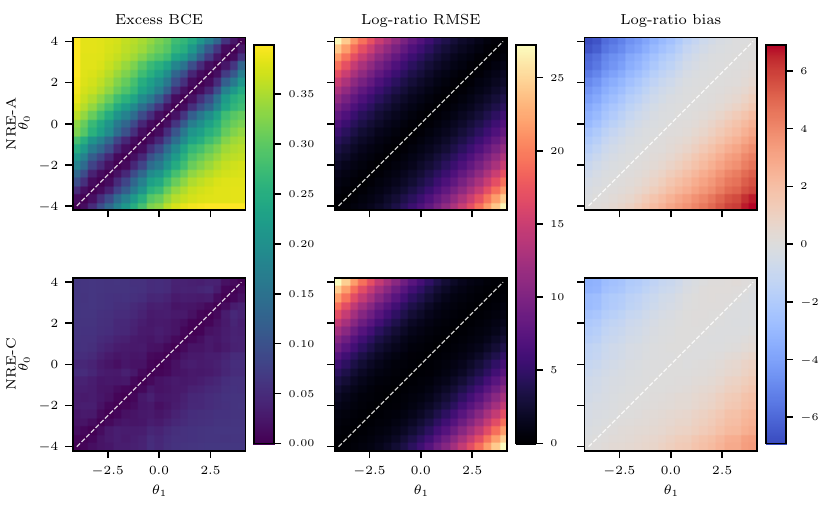}
  \caption{Gaussian model, small (underparameterized) network. Per parameter-pair
    $(\theta_0,\theta_1)$: excess BCE (left), log-ratio RMSE (center), and
    log-ratio bias (right), for NRE-A (top) and NRE-C (bottom).}
  \label{fig:heat-gauss-small}
\end{figure}

\begin{figure}[!htbp]
  \centering
  \includegraphics[width=0.85\linewidth]{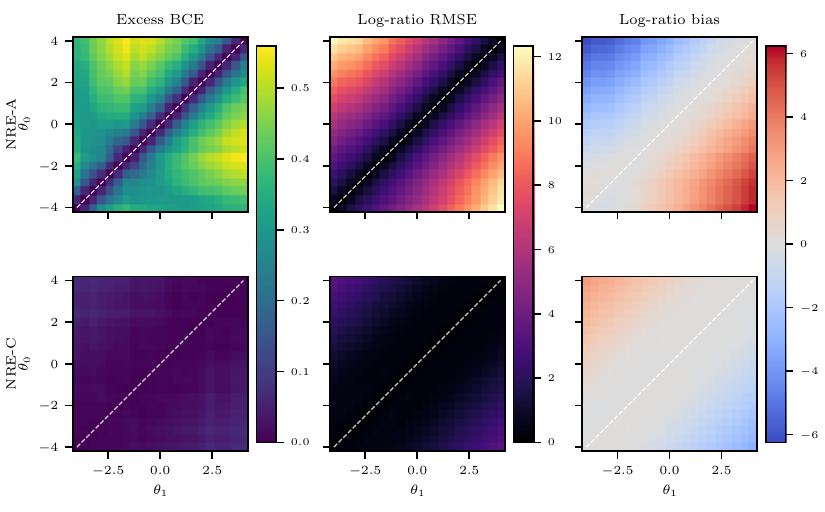}
  \caption{As Fig.~\ref{fig:heat-gauss-small}, for the $g$-and-$k$ model.}
  \label{fig:heat-gandk-small}
\end{figure}

\begin{figure}[!htbp]
  \centering
  \begin{subfigure}{0.49\linewidth}
    \centering
    \includegraphics[width=\linewidth]{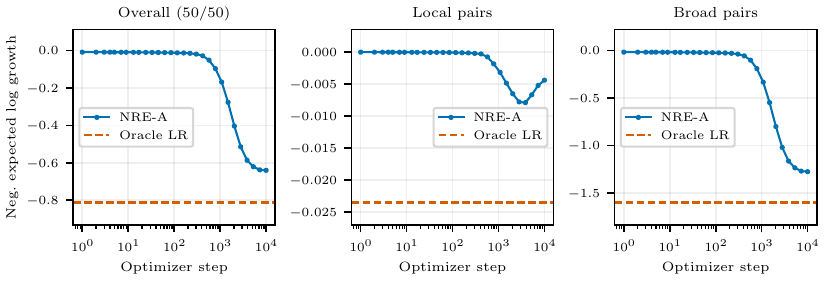}
    \caption{Gaussian}
    \label{fig:val-gauss-nrea-small}
  \end{subfigure}\hfill
  \begin{subfigure}{0.49\linewidth}
    \centering
    \includegraphics[width=\linewidth]{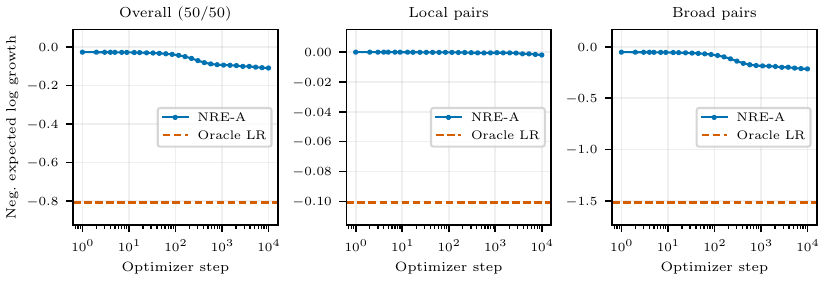}
    \caption{$g$-and-$k$}
    \label{fig:val-gandk-nrea-small}
  \end{subfigure}
  \caption{NRE-A log-growth validation along training, small (underparameterized)
    network. Negative expected one-step log-growth of the trained score (blue)
    against the oracle likelihood ratio (dashed), on broad, local, and overall
    pairs.}
  \label{fig:val-nrea-small}
\end{figure}

\begin{figure}[!htbp]
  \centering
  \begin{subfigure}{0.49\linewidth}
    \centering
    \includegraphics[width=\linewidth]{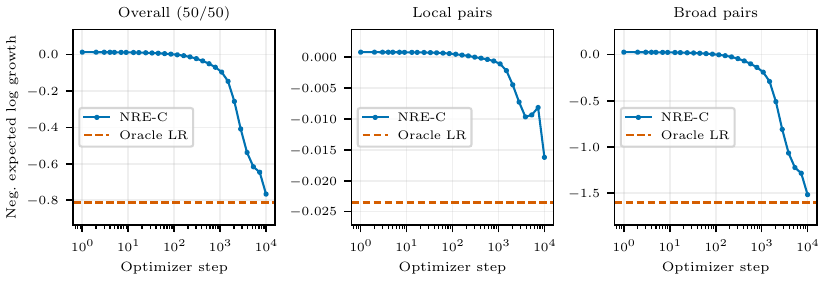}
    \caption{Gaussian}
    \label{fig:val-gauss-nrec-small}
  \end{subfigure}\hfill
  \begin{subfigure}{0.49\linewidth}
    \centering
    \includegraphics[width=\linewidth]{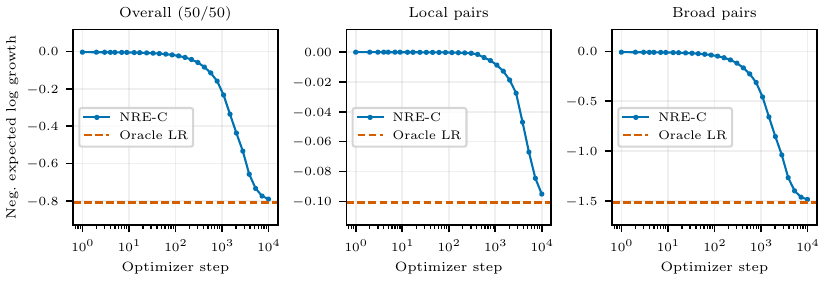}
    \caption{$g$-and-$k$}
    \label{fig:val-gandk-nrec-small}
  \end{subfigure}
  \caption{As Fig.~\ref{fig:val-nrea-small}, for NRE-C.}
  \label{fig:val-nrec-small}
\end{figure}

\subsection{Ablations (Gaussian and $g$-and-$k$)}
\label{app:ablations}

\subsubsection{Set-up}

Because coverage is guaranteed pointwise at the true parameter, we
repeat every ablation at three truths $\thetatrue\in\{-2,0,2\}$, one
central and two off-center in $\Theta=[-4,4]$, to check that
behavior does not depend on where the truth sits. Over $1{,}000$
repetitions each, we vary one component of the baseline procedure at a
time (minimum-evidence alternative, $K=15$, the score network of
\cref{app:training}):

\paragraph{Score.}
The score $T_i^{(\theta)}$ determines efficiency but not validity, so we
compare four choices of increasing practicality. The \emph{direct
exact-LR} martingale uses the true likelihood ratio without
conformalization; its \emph{conformalized} counterpart applies the
soft-rank normalization of \eqref{eq:main-evalue} to the same exact
ratio, isolating the cost of conformalization alone. The two neural
scores replace the likelihood ratio with a learned estimate:
NRE-C, the contrastive estimator of the main text, and
NRE-A, the binary-classification ratio estimator
(\citealt{Hermans2020}) (see \cref{app:nre-losses}). Comparing all four separates the price of
conformalization from the price of learning the ratio.

\paragraph{$K$.}
$K$ is the number of fresh null simulations drawn per candidate at each
step, forming the exchangeable reference batch against which the
observation is scored. For the oracle score, the finite-$K$ gap in expected log-growth vanishes as $K\to\infty$ (\cref{prop:optimal-score}). We contrast the
baseline $K=15$ with the minimal $K=1$ to examine how strongly the
simulation budget affects the sequence.

\paragraph{$\widehat\theta_{i-1}$.}
The predictable alternative $\widehat\theta_{i-1}$ is the reference
parameter against which each candidate $\theta$ is scored, and must
depend only on $\mathcal F_{i-1}$. The baseline chooses it by minimum
evidence, $\widehat\theta_{i-1}\in\argmin_{\vartheta}M_{i-1}^{(\vartheta)}$, the candidate the accumulated evidence currently favors. We compare
this against a simpler running point estimate of the observations so far
that does not require evaluating the martingale over the grid: the
running mean for the Gaussian model and the running median for
$g$-and-$k$.

\paragraph{Architecture.}
The neural scores are trained networks, so their quality depends on
capacity. To probe robustness to a poorly estimated ratio, we replace
the baseline network with a deliberately underparameterized ``small''
architecture i.e., a single hidden layer of two units, far too small to
represent the true log-ratio, and ask whether coverage survives when
the score is badly misspecified.

\subsubsection{Aggregated results}

\Cref{tab:ablations} reports, for every ablation block at all three truths,
cumulative miscoverage at $n=100$ and the mean hull width at an early and the
final time, $W_{10}$ and $W_{100}$. Since the widths are conditioned on
nonempty sets, \cref{tab:nonempty} gives the corresponding rates: they never
fall below $96\%$, so the conditioning is a negligible correction almost
everywhere. The one exception is the underparameterized NRE-A on $g$-and-$k$ at
$\thetatrue=0$ and $2$, whose sets never empty ($100\%$) because it rejects almost nothing: with
$W_{100} = 5.5$ on a width-$8$ grid, it retains most candidates and has
essentially failed to localize $\thetatrue$.

\begin{table}[!tp]
  \centering
  \small
  \setlength{\tabcolsep}{4pt}
  \caption{Ablations at $\alpha=0.1$ over $1{,}000$ repetitions, $n=100$ monitored observations. Each ablation block lists only the estimators it affects. Columns report cumulative miscoverage at $n=100$, and mean hull width at $n=10$ and $n=100$.}
  \label{tab:ablations}
  \begin{tabular}{l ccc ccc ccc}
    \toprule
    & \multicolumn{3}{c}{$\thetatrue=-2$} & \multicolumn{3}{c}{$\thetatrue=0$} & \multicolumn{3}{c}{$\thetatrue=2$} \\
    \cmidrule(lr){2-4}\cmidrule(lr){5-7}\cmidrule(lr){8-10}
    Estimator & mis.\ & $W_{10}$ & $W_{100}$ & mis.\ & $W_{10}$ & $W_{100}$ & mis.\ & $W_{10}$ & $W_{100}$ \\
    \midrule
    \multicolumn{10}{l}{\textbf{\boldmath \emph{Gaussian}}} \\
    \multicolumn{10}{l}{\quad \textbf{\boldmath Baseline (min.-evidence, $K=15$)}} \\
    \qquad Direct LR & 0.049 & 1.885 & 0.584 & 0.043 & 1.486 & 0.458 & 0.034 & 1.922 & 0.593 \\
    \qquad Conformal LR & 0.054 & 1.757 & 0.556 & 0.044 & 1.493 & 0.467 & 0.050 & 1.797 & 0.568 \\
    \qquad Conformal NRE-A & 0.051 & 1.755 & 0.556 & 0.053 & 1.491 & 0.464 & 0.051 & 1.796 & 0.562 \\
    \qquad Conformal NRE-C & 0.055 & 1.773 & 0.574 & 0.049 & 1.496 & 0.470 & 0.052 & 1.799 & 0.571 \\
    \addlinespace
    \multicolumn{10}{l}{\quad \textbf{\boldmath $K=1$ null simulation}} \\
    \qquad Conformal LR & 0.060 & 2.407 & 0.718 & 0.065 & 2.055 & 0.612 & 0.076 & 2.392 & 0.710 \\
    \qquad Conformal NRE-A & 0.062 & 2.398 & 0.717 & 0.066 & 2.055 & 0.612 & 0.078 & 2.395 & 0.710 \\
    \qquad Conformal NRE-C & 0.062 & 2.410 & 0.733 & 0.068 & 2.044 & 0.614 & 0.076 & 2.388 & 0.713 \\
    \addlinespace
    \multicolumn{10}{l}{\quad \textbf{\boldmath Model-based plug-in alternative}} \\
    \qquad Direct LR & 0.048 & 1.886 & 0.581 & 0.045 & 1.484 & 0.454 & 0.033 & 1.922 & 0.594 \\
    \qquad Conformal LR & 0.051 & 1.768 & 0.557 & 0.055 & 1.492 & 0.458 & 0.054 & 1.807 & 0.569 \\
    \qquad Conformal NRE-A & 0.051 & 1.767 & 0.558 & 0.056 & 1.492 & 0.457 & 0.052 & 1.805 & 0.568 \\
    \qquad Conformal NRE-C & 0.049 & 1.790 & 0.576 & 0.055 & 1.492 & 0.462 & 0.051 & 1.808 & 0.575 \\
    \addlinespace
    \multicolumn{10}{l}{\quad \textbf{\boldmath Small architecture}} \\
    \qquad Conformal NRE-A & 0.078 & 2.161 & 0.752 & 0.043 & 2.229 & 0.843 & 0.064 & 2.470 & 1.038 \\
    \qquad Conformal NRE-C & 0.058 & 1.782 & 0.637 & 0.058 & 1.625 & 0.584 & 0.055 & 1.872 & 0.677 \\
    \addlinespace
    \midrule
    \multicolumn{10}{l}{\textbf{\boldmath \emph{$g$-and-$k$}}} \\
    \multicolumn{10}{l}{\quad \textbf{\boldmath Baseline (min.-evidence, $K=15$)}} \\
    \qquad Direct LR & 0.040 & 1.968 & 0.384 & 0.061 & 1.303 & 0.252 & 0.027 & 1.400 & 0.284 \\
    \qquad Conformal LR & 0.045 & 1.949 & 0.380 & 0.064 & 1.219 & 0.245 & 0.066 & 1.180 & 0.240 \\
    \qquad Conformal NRE-A & 0.047 & 1.960 & 0.384 & 0.061 & 1.223 & 0.248 & 0.067 & 1.185 & 0.249 \\
    \qquad Conformal NRE-C & 0.045 & 1.967 & 0.387 & 0.063 & 1.212 & 0.244 & 0.067 & 1.189 & 0.247 \\
    \addlinespace
    \multicolumn{10}{l}{\quad \textbf{\boldmath $K=1$ null simulation}} \\
    \qquad Conformal LR & 0.058 & 2.523 & 0.471 & 0.065 & 1.910 & 0.333 & 0.075 & 2.380 & 0.335 \\
    \qquad Conformal NRE-A & 0.054 & 2.439 & 0.460 & 0.071 & 1.830 & 0.324 & 0.074 & 2.510 & 0.327 \\
    \qquad Conformal NRE-C & 0.056 & 2.419 & 0.460 & 0.071 & 1.835 & 0.322 & 0.077 & 2.460 & 0.329 \\
    \addlinespace
    \multicolumn{10}{l}{\quad \textbf{\boldmath Model-based plug-in alternative}} \\
    \qquad Direct LR & 0.042 & 2.300 & 0.564 & 0.041 & 1.825 & 0.456 & 0.023 & 1.932 & 0.484 \\
    \qquad Conformal LR & 0.045 & 2.348 & 0.558 & 0.056 & 1.775 & 0.451 & 0.059 & 1.747 & 0.450 \\
    \qquad Conformal NRE-A & 0.048 & 2.347 & 0.558 & 0.056 & 1.766 & 0.455 & 0.061 & 1.750 & 0.462 \\
    \qquad Conformal NRE-C & 0.049 & 2.334 & 0.554 & 0.056 & 1.736 & 0.449 & 0.060 & 1.764 & 0.458 \\
    \addlinespace
    \multicolumn{10}{l}{\quad \textbf{\boldmath Small architecture}} \\
    \qquad Conformal NRE-A & 0.001 & 6.739 & 1.067 & 0.000 & 6.157 & 5.501 & 0.000 & 6.147 & 5.500 \\
    \qquad Conformal NRE-C & 0.052 & 2.001 & 0.389 & 0.055 & 1.290 & 0.248 & 0.055 & 1.314 & 0.250 \\
    \bottomrule
  \end{tabular}
\end{table}

Across the baseline block, the three conformal estimators agree to within
Monte Carlo error in both miscoverage and width, at all three truths and on
both models: neither neural score is distinguishable from the conformalized
exact likelihood ratio. The direct exact-LR martingale departs from them in
some cases, mainly at $\thetatrue=\pm2$, where it is more conservative and its
sets are wider. Baseline miscoverage stays below $0.07$, comfortably below the
nominal $0.1$, the conservativeness expected of Ville's inequality. Replacing
minimum evidence with a model-based running estimate leaves the Gaussian
results essentially unchanged, but widens the $g$-and-$k$ sets substantially.

Reducing the null budget to $K=1$ preserves validity but widens the sets, by
roughly $20$--$40\%$ at $n=100$ on both models, while miscoverage rises toward
the nominal level. Both follow from the
added Monte Carlo noise in the single-draw e-value, false candidates
are excluded more slowly, widening the sets, while the truth's own
martingale fluctuates more and approaches the Ville bound. Neither
effect breaches validity.

The small-architecture block is the clearest separation of validity from
efficiency. Deliberately underparameterizing the score leaves NRE-C
close to its well-specified baseline but degrades NRE-A, and the
degradation is model-dependent: mild on the Gaussian ($W_{100}$ up to
$1.04$ against a baseline near $0.56$), but a collapse on $g$-and-$k$ at
$\thetatrue=0$ and $2$, where NRE-A's miscoverage falls to essentially zero
and $W_{100}$ inflates to $5.5$ on a grid of width $8$. In every case coverage
remains valid; a misspecified score forfeits power,
never validity, and NRE-C's contrastive objective is markedly more
robust to this misspecification.

\begin{table}[t]
  \centering
  \small
  \setlength{\tabcolsep}{4pt}
  \caption{Percentage of nonempty confidence sets at $n=100$, over $1{,}000$ repetitions. Mean hull widths in Table~\ref{tab:ablations} are averaged over these repetitions only. Rows follow the same blocks as Table~\ref{tab:ablations}.}
  \label{tab:nonempty}
  \begin{tabular}{l ccc ccc}
    \toprule
    & \multicolumn{3}{c}{Gaussian} & \multicolumn{3}{c}{$g$-and-$k$} \\
    \cmidrule(lr){2-4}\cmidrule(lr){5-7}
    Estimator & $\thetatrue=-2$ & $\thetatrue=0$ & $\thetatrue=2$ & $\thetatrue=-2$ & $\thetatrue=0$ & $\thetatrue=2$ \\
    \midrule
    \multicolumn{7}{l}{\textbf{\boldmath Baseline (min.-evidence, $K=15$)}} \\
    \quad Direct LR & 98.1 & 99.9 & 98.8 & 96.6 & 98.0 & 98.8 \\
    \quad Conformal LR & 98.6 & 99.8 & 98.4 & 96.4 & 97.8 & 97.6 \\
    \quad Conformal NRE-A & 98.6 & 99.8 & 98.4 & 96.3 & 97.9 & 97.6 \\
    \quad Conformal NRE-C & 98.6 & 99.9 & 98.4 & 96.4 & 98.2 & 97.7 \\
    \addlinespace
    \multicolumn{7}{l}{\textbf{\boldmath $K=1$ null simulation}} \\
    \quad Conformal LR & 99.3 & 100.0 & 99.1 & 97.5 & 99.1 & 98.5 \\
    \quad Conformal NRE-A & 99.3 & 100.0 & 99.0 & 97.7 & 99.2 & 98.3 \\
    \quad Conformal NRE-C & 99.3 & 100.0 & 98.9 & 97.6 & 99.1 & 98.2 \\
    \addlinespace
    \multicolumn{7}{l}{\textbf{\boldmath Model-based plug-in alternative}} \\
    \quad Direct LR & 98.3 & 99.9 & 98.6 & 96.7 & 98.5 & 99.2 \\
    \quad Conformal LR & 98.7 & 99.9 & 98.0 & 97.0 & 98.0 & 97.5 \\
    \quad Conformal NRE-A & 98.7 & 99.9 & 98.0 & 97.1 & 97.9 & 97.8 \\
    \quad Conformal NRE-C & 98.5 & 99.8 & 97.9 & 97.1 & 98.0 & 97.8 \\
    \addlinespace
    \multicolumn{7}{l}{\textbf{\boldmath Small architecture}} \\
    \quad Conformal NRE-A & 99.7 & 100.0 & 99.8 & 100.0 & 100.0 & 100.0 \\
    \quad Conformal NRE-C & 98.8 & 100.0 & 98.8 & 96.5 & 97.9 & 97.9 \\
    \bottomrule
  \end{tabular}
\end{table}

\subsubsection{Per-ablation confidence sequences ($\thetatrue = 0$)}

\label{app:cs-theta0}

\begin{figure}[t]
  \centering
  \includegraphics[width=\linewidth]{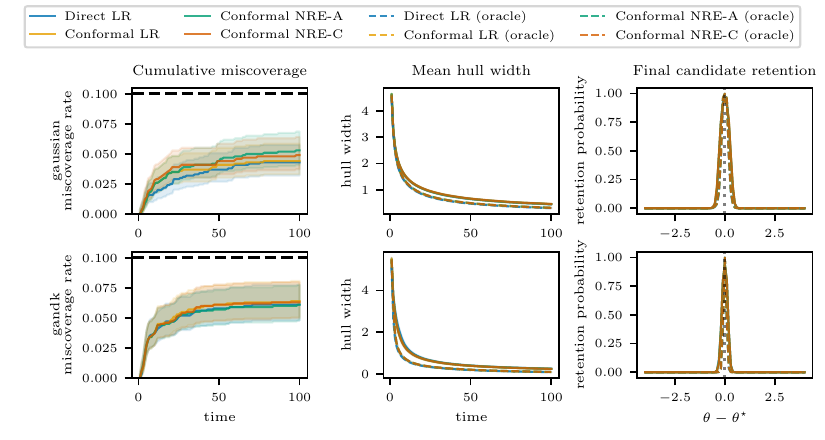}
  \caption{Sequential confidence sequences at $\thetatrue=0$ for the
  Gaussian (top) and $g$-and-$k$ (bottom) models under the baseline
  configuration (minimum-evidence alternative, $K=15$). Left: cumulative
  miscoverage (dashed $\alpha=0.1$). Center: mean hull width of nonempty
  sets. Right: final-time retention per candidate. Solid: feasible
  procedures; dashed: infeasible oracle using $\thetatrue$ as the
  alternative.}
  \label{fig:cs-baseline}
\end{figure}

\begin{figure}[t]
  \centering
  \includegraphics[width=\linewidth]{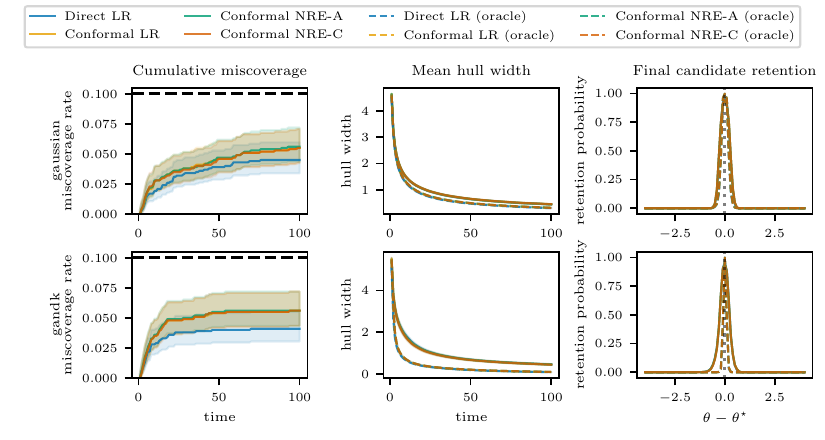}
  \caption{As \cref{fig:cs-baseline}, with the model-based plug-in
  (running mean / running median) in place of minimum evidence.}
  \label{fig:cs-plugin}
\end{figure}

\begin{figure}[t]
  \centering
  \includegraphics[width=\linewidth]{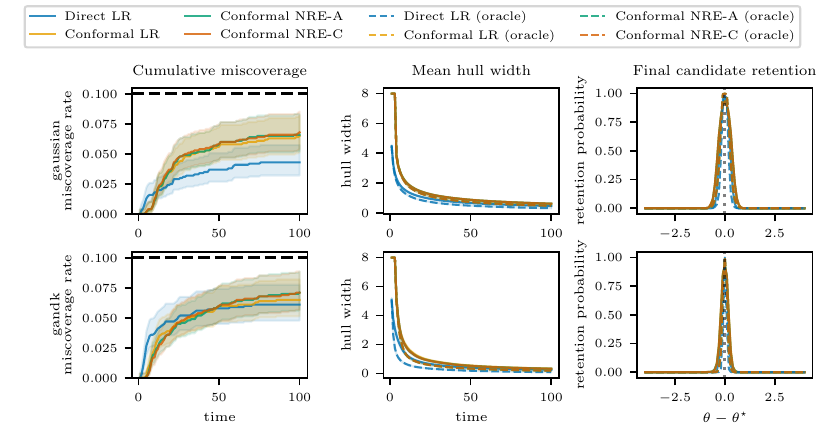}
  \caption{As \cref{fig:cs-baseline}, with $K=1$ null simulation per
  step.}
  \label{fig:cs-k1}
\end{figure}

\begin{figure}[t]
  \centering
  \includegraphics[width=\linewidth]{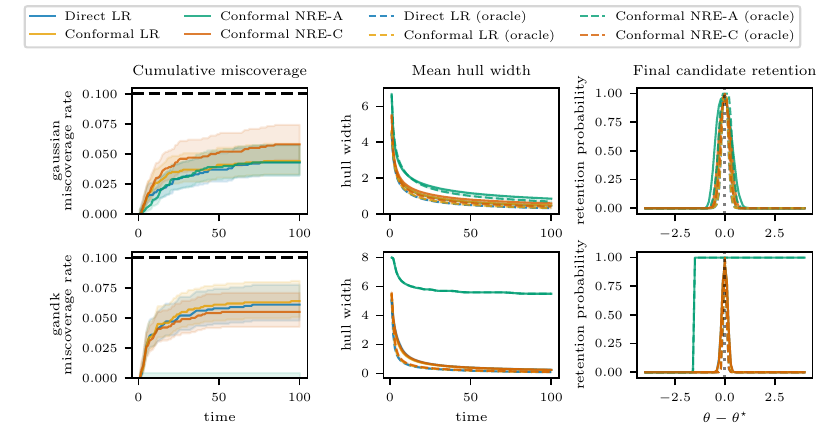}
  \caption{As \cref{fig:cs-baseline}, with a deliberately
  underparameterized score network (single hidden layer of two units).}
  \label{fig:cs-small}
\end{figure}

\Cref{tab:ablations} records each procedure at two time points; here we
show the full sequential trajectories at $\thetatrue=0$, for both models,
with the infeasible oracle variant of every estimator (which uses
$\thetatrue$ as the alternative) shown dashed. \Cref{fig:cs-baseline}
shows the baseline: the four feasible curves (exact-LR and neural) are
essentially superimposed in every panel, with miscoverage plateauing well
below $\alpha$, coincident hull widths, and final-time retention
concentrating similarly at $\thetatrue$, so the choice of score is invisible
on these models. The oracle variants never miscover and give narrower sets at
every time, so the extra width of the feasible sets comes from estimating the
alternative rather than from the score.

Replacing minimum evidence with the model-based running estimate
(\cref{fig:cs-plugin}) separates the two models: the Gaussian trajectories
are indistinguishable from the baseline, while on $g$-and-$k$ the feasible
width curves sit visibly above both the baseline and the oracle, with
final-time widths roughly twice the baseline (\cref{tab:ablations});
retention concentrates correspondingly less sharply. Miscoverage is
unaffected in both cases, as validity does not depend on the alternative.

Reducing the null budget to $K=1$ (\cref{fig:cs-k1}) changes only the
conformal sequences, which use the $K$ reference draws; the direct sequences
are unchanged. The conformal widths inflate (note the rescaled axis), the three
scores degrading together, and the miscoverage of the feasible conformal
sequences rises toward the nominal line. Validity is retained throughout.

Underparameterizing the score network (\cref{fig:cs-small}) degrades
NRE-A on both models while leaving NRE-C close to its baseline. The
degradation is mild on the Gaussian (NRE-A's sets widen but still
concentrate) and severe on $g$-and-$k$, where its width stays close to
the full grid and its retention never concentrates. Miscoverage
nonetheless stays below $\alpha$ for every curve: the underparameterized
score forfeits power, sharply so on $g$-and-$k$, without ever
compromising coverage. 
\subsection{SLCP}
\label{app:slcp}

\Cref{fig:slcp-all} adds conformal NRE-A and the oracle versions of the
conformal exact-LR, NRE-A and NRE-C sequences to \cref{fig:slcp-main}; the
set-up is described in \cref{app:experimental-details}.

Cumulative miscoverage stays below $\alpha$ for every procedure. The three
conformal sequences shrink at similar rates, NRE-A slightly more slowly, and
direct LR shrinks fastest. At $n=100$ they retain between $4.1$ and $8.2$
candidates on average, against about $4$ for the oracles: $\thetatrue$ and its
three mirror images, which give the same distribution of the data.
\Cref{tab:slcp-modes} shows which candidates remain. The exact-LR methods
retain or exclude the four equivalent values together, since their likelihood
ratios and shared reference draws yield identical evidence processes; with
$\thetatrue$ as the alternative, they therefore never exclude any of the four.
Neural scores need not preserve this symmetry: conformal NRE-C retains all
four in $93.4\%$ of repetitions, close to its coverage of $\thetatrue$
($94.6\%$), but its final set consists of exactly these four in only $28.2\%$,
with $2.56$ other candidates retained on average, and the neural oracle
variants exclude at least one of the three other values in $13$--$17\%$ of
repetitions while always retaining $\thetatrue$. The coverage guarantee
applies to each parameter value taken as the truth, not to all four jointly.

\begin{table}[t]
  \centering
  \small
  \caption{SLCP candidate retention at $n=100$ over $1{,}000$ repetitions,
  discarding the first two e-values (\cref{app:running-cs}). All four:
  percentage of repetitions retaining all four equivalent values
  $(0,0,\pm2.2,\pm2.2,0)$. Other candidates: mean number of retained
  candidates outside these four, over all repetitions. Exactly four:
  percentage of repetitions whose final set consists of precisely these
  four values.}
  \label{tab:slcp-modes}
  \begin{tabular}{lrrr}
    \toprule
    Method & All four (\%) & Other candidates & Exactly four (\%) \\
    \midrule
    Direct LR                  & 96.1  & 0.23 & 93.3 \\
    Conformal LR               & 93.2  & 1.78 & 53.5 \\
    Conformal NRE-A            & 92.9  & 4.38 & 20.8 \\
    Conformal NRE-C            & 93.4  & 2.56 & 28.2 \\
    Direct LR (oracle)         & 100.0 & 0.00 & 100.0 \\
    Conformal LR (oracle)      & 100.0 & 0.00 & 100.0 \\
    Conformal NRE-A (oracle)   & 83.5  & 0.28 & 67.0 \\
    Conformal NRE-C (oracle)   & 87.1  & 0.03 & 84.4 \\
    \bottomrule
  \end{tabular}
\end{table}

\begin{figure}[t]
  \centering
  \includegraphics[width=\linewidth]{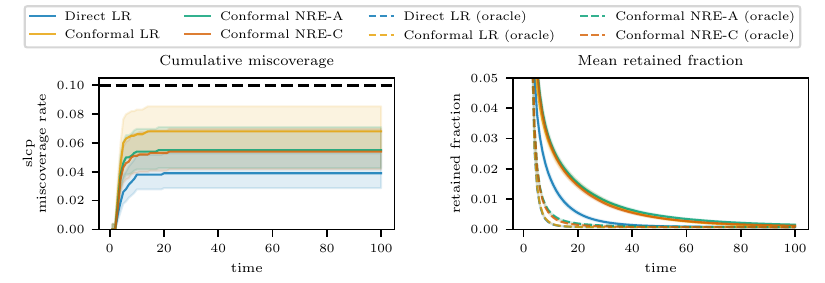}
  \caption{Sequential confidence sequences for the SLCP model, all procedures.
  Left: cumulative miscoverage (Wilson $95\%$ bands; dashed $\alpha=0.1$;
  oracles omitted). Right: mean fraction of candidates retained. Solid:
  feasible procedures; dashed: infeasible oracle using $\thetatrue$ as the
  alternative.}
  \label{fig:slcp-all}
\end{figure}

\section{APPLICATION: AUDITING MODEL ROUTING IN AN LLM API}
\label{app:routing}

\subsection{Setting and data}
\label{app:routing-data}

\paragraph{The audit problem.} An API that advertises a language model $P$ may
serve a different one without saying so \citep{Gao2025,Cai2025}, for example by
answering a fraction $\theta$ of the queries with a cheaper model $R$
\citep{Cai2025,Zhang2026IRIS}. An auditor who can only send
queries and read the replies wants a confidence sequence for $\theta$.

Our setup follows IRIS \citep{Zhang2026IRIS}, a recent black-box auditor for
this problem. It sends prompts that ask for random output, marks the replies
that look atypical for the advertised model, and detects routing and estimates
$\theta$ from how often such replies occur.
The approach hinges on the fact that language models are poor random number generators, and each model
is poor in its own reproducible way.
We use IRIS's prompts, its parser and its request settings, and in
\cref{app:routing-results} we compare with its fixed-time confidence interval for
$\theta$ and with a confidence-sequence version of its sequential test.

\paragraph{Probes.} We use two of IRIS's prompts as probes to detect routing
anomalies. The prompt \texttt{short\_digits\_8} asks ``Generate exactly 8
independent uniformly random digits from 0 to 9. Output only the compact digit
string, no separators.'', and \texttt{single\_number\_100} asks ``Choose one
uniformly random integer between 1 and 100 inclusive. Output only the
integer.'' IRIS's parser reduces each reply to a symbol $y$: the string of
digits for the first prompt, and the first integer in the reply for the second.

\paragraph{Data.} We take $P=\texttt{qwen3:8b}$ and $R=\texttt{qwen3:1.7b}$
\citep{Yang2025Qwen3}, served locally by Ollama\footnote{\url{https://ollama.com}, version 0.34.4.} with IRIS's request settings:
reasoning off, temperature~$1$, top-$p$~$1$, and IRIS's limit on reply length.
We collected $28{,}000$ replies per model and prompt. Each query is a separate
call with the same prompt and no conversation history, so the replies of one
model are i.i.d.\ draws from its output distribution given the prompt.

On both prompts the two models' replies barely overlap: each model's most
frequent reply is one that the other model almost never gives
(\cref{tab:routing-fingerprints}), so a single reply is strong evidence of which
model answered. On the number prompt the two models are separated by the
wording of the reply rather than by the number chosen: \texttt{qwen3:1.7b}
restates the task before answering, so the parser records $1$. The prompts also
differ in how many distinct replies occur in our collections: $1{,}008$ for the
digit prompt and $20$ for the number prompt.

\begin{table}[h]
\centering
\caption{The most frequent parsed reply of each model, and how often each model
gives it, out of $28{,}000$ calls per model and prompt.}
\label{tab:routing-fingerprints}
\small
\begin{tabular}{llcc}
\toprule
Prompt & Parsed reply & \texttt{qwen3:8b} ($P$) & \texttt{qwen3:1.7b} ($R$)\\
\midrule
8 random digits & \texttt{73951842} & $29\%$ & never\\
8 random digits & \texttt{0123456789}\footnotemark & never & $95\%$\\
a number from 1 to 100 & $42$ & $92\%$ & $9$ replies\\
a number from 1 to 100 & $1$\footnotemark & never & $99.9\%$\\
\bottomrule
\end{tabular}

\end{table}

\addtocounter{footnote}{-1}\footnotetext{Ten digits, although the prompt asks for eight.}
\stepcounter{footnote}\footnotetext{\texttt{qwen3:1.7b} opens its reply with ``Let's choose a random integer between 1 and 100 \ldots'', and the parser keeps the first integer in
the reply.}

\paragraph{Semi-synthetic design.} As in IRIS's routing experiments, we build every audited stream ourselves by simulating the routing while using real LLM responses.
Given $\thetatrue \in [0,1]$, each query is answered with a reply from $R$ or $P$ by sampling without replacement from a held-out set of the models' collected replies with probabilities $\thetatrue$ and $(1 - \thetatrue)$, respectively.
The replies are real outputs of the two models; only the routing is simulated.
Because $\thetatrue$ is known, this lets us measure the coverage of every method.

\subsection{Simulator}
\label{app:routing-model}

Let $p$ and $r$ be the pmf of the parsed reply $y$ under $P$ and $R$, respectively.
At routing fraction $\theta$, the simulator sends a query to $R$ with
probability $\theta$ and to $P$ otherwise, so its replies are i.i.d.\ with
pmf
\[
    f_\theta(y)=(1-\theta)\,p(y)+\theta\, r(y),
    \qquad
    \theta\in[0,1].
\]
In practical applications, the likelihood $f_\theta$ has no closed form: $p$ and $r$ are the models'
output pmfs over parsed replies, which are not exposed by APIs in the audit setting. The
auditor can only sample from them by querying the models.

For any $\theta_0,\theta_1\in[0,1]$, the likelihood ratio depends on
the reply only through $\rho(y)=r(y)/p(y)$:
\begin{equation}
\label{eq:routing-ratio}
    \frac{f_{\theta_1}(y)}{f_{\theta_0}(y)}
    =\frac{(1-\theta_1)+\theta_1\rho(y)}{(1-\theta_0)+\theta_0\rho(y)}.
\end{equation}
Hence an estimate of $\rho$, learned from replies of the two models, gives an estimated likelihood ratio for every pair of candidates $\theta_0, \theta_1$.
Our audited streams are built from held-out replies of the two models, with
routing that we apply ourselves.

\subsection{Computing the scores}
\label{app:routing-scores}

Of the $28{,}000$ replies collected per model and prompt, we set aside
$10{,}000$ of $P$'s and $10{,}000$ of $R$'s to estimate the scores and use
them for nothing else; we call these the score pools. The remaining replies are
used to build the audited streams and their references
(\cref{app:routing-experiments}). From the score pools we estimate the ratio
$\rho(y)=r(y)/p(y)$ in one of two ways.

\paragraph{Count ratio.} Since the number of unique responses given by the LLMs are quite low for this experiment, it is feasible to estimate $p(y)$ and $r(y)$ by the share of each
score pool's replies equal to $y$, with half a count added to every possible
reply. The latter smoothing keeps $\hat\rho$ finite and positive, and a reply seen in
neither pool gets $\hat\rho=1$, so it carries no evidence.

\paragraph{NRE.} A binary NRE is trained to
tell $P$'s replies from $R$'s. The network has two hidden layers of $64$ units.
Its inputs are features of the reply string: its length, the frequency of each
character, the first and last characters, how often consecutive characters
repeat or step by one, and indicators for the $64$ most frequent replies. It is
trained with a class-balanced logistic loss, so its output estimates
$\log\rho(y)$. An amortised NRE, as in the other experiments, is not needed
here, because $\theta$ enters only through \cref{eq:routing-ratio}.

\paragraph{Score.} With either estimate $\hat\rho$, plugging it into
\cref{eq:routing-ratio} gives an estimate of the likelihood ratio between any
two candidates. The score of candidate $\theta$ at step $i$ is
\[
    T_i^{(\theta)}(y)=\frac{\hat f_{\hat\theta}(y)}{\hat f_{\theta}(y)},
\]
where $\hat\theta=\hat\theta_{i-1}$ is the minimum-evidence
alternative: the candidate with the least accumulated evidence against it after
the first $i-1$ observations.

\paragraph{Defensive score.} We use $(1-\lambda)\,T_i^{(\theta)}(y)+\lambda$
with $\lambda=0.3$ in place of $T_i^{(\theta)}(y)$. Mixing every score with
the same constant brings the scores of different replies closer together, so
each step changes a candidate's wealth less. This matters early in the
sequence. The alternative $\hat\theta$ starts at $0$ and moves towards the
true $\thetatrue$ only as evidence accumulates, and while it is still far
away, every candidate between it and the truth loses wealth, and those that
should eventually be excluded must later win it back. The
mixed score is still positive and depends only on past observations, so
validity holds for any fixed $\lambda$; only efficiency depends
on it. We chose $\lambda$ on held-out development streams, by a
rule fixed before running them: with the count-ratio score, both prompts,
$\lambda\in\{0,0.05,0.1,0.2,0.3,0.5\}$ and
$\thetatrue\in\{0,0.05,0.1,0.2,0.3,0.5\}$, take the value with the smallest
mean hull width at $n=300$ among those whose cumulative miscoverage is at most
$\alpha$ up to its Wilson band and whose median detection time is at most one query
slower than with $\lambda=0$, at every $\thetatrue$. The chosen
$\lambda=0.3$ is then used unchanged for every score and prompt, and every
result we report comes from fresh test streams, drawn independently of the
development streams. On the test streams (\cref{tab:routing-lambda}), $\lambda=0.2$ and $0.3$ narrow the final sets by about $30\%$ relative to
$\lambda=0$ without slowing detection; $\lambda=0.5$ can slow detection when
routing is rare. Discarding the first few observations instead
narrows the sets only at large $\thetatrue$ and delays detection, and the
average of the $\lambda=0,0.2,0.5$ e-processes, which involves no tuning, comes
within a few percent of the chosen value.

\begin{table}[t]
  \centering
  \small
  \setlength{\tabcolsep}{3pt}
  \caption{Sensitivity of the routing confidence sequence (count-ratio score) to the defensive floor $\lambda$, on fresh test streams disjoint from those used to choose it ($200$ repetitions per $\thetatrue\in\{0,0.05,0.1,0.2,0.3,0.5\}$, $n=300$, $\alpha=0.1$). Miscov.\ is the largest time-uniform miscoverage over $\thetatrue$; width is the mean hull width at $n=300$ over nonempty sets; detect is the median number of queries until $\theta=0$ leaves the set. ``Skip $k$'' discards the first $k$ observations, so its guarantee holds from step $k+1$; the mixture averages the $\lambda=0,0.2,0.5$ e-processes and involves no tuning. Every miscoverage is at most $\alpha$ up to its Wilson $95\%$ band; $\lambda=0.3$ is the value used elsewhere, chosen on separate development streams.}
  \label{tab:routing-lambda}
  \begin{tabular}{l c ccccc ccccc}
    \toprule
    & & \multicolumn{5}{c}{width at $\thetatrue=$} & \multicolumn{5}{c}{detect at $\thetatrue=$} \\
    \cmidrule(lr){3-7}\cmidrule(lr){8-12}
    Variant & miscov. & $0.05$ & $0.1$ & $0.2$ & $0.3$ & $0.5$ & $0.05$ & $0.1$ & $0.2$ & $0.3$ & $0.5$ \\
    \midrule
    \multicolumn{12}{l}{\textbf{\emph{8 random digits}}} \\
    \quad $\lambda=0$ & 0.105 & 0.087 & 0.129 & 0.197 & 0.230 & 0.278 & 32 & 15 & 8 & 6 & 3 \\
    \quad $\lambda=0$, skip 2 & 0.105 & 0.083 & 0.116 & 0.169 & 0.179 & 0.188 & 33 & 15 & 8 & 6 & 4 \\
    \quad $\lambda=0$, skip 5 & 0.085 & 0.076 & 0.104 & 0.142 & 0.149 & 0.163 & 32 & 17 & 10 & 8 & 7 \\
    \quad $\lambda=0.2$ & 0.080 & 0.061 & 0.091 & 0.138 & 0.163 & 0.194 & 30 & 15 & 8 & 5 & 3 \\
    \quad $\lambda=0.3$ & 0.080 & 0.059 & 0.089 & 0.133 & 0.156 & 0.186 & 30 & 15 & 8 & 5 & 3 \\
    \quad $\lambda=0.5$ & 0.055 & 0.062 & 0.092 & 0.135 & 0.160 & 0.184 & 42 & 18 & 10 & 6 & 3 \\
    \quad mixture of $\lambda\in\{0,0.2,0.5\}$ & 0.080 & 0.064 & 0.093 & 0.140 & 0.164 & 0.193 & 32 & 15 & 8 & 6 & 3 \\
    \addlinespace
    \multicolumn{12}{l}{\textbf{\emph{a number from 1 to 100}}} \\
    \quad $\lambda=0$ & 0.100 & 0.087 & 0.134 & 0.205 & 0.243 & 0.289 & 30 & 15 & 8 & 6 & 3 \\
    \quad $\lambda=0$, skip 2 & 0.090 & 0.083 & 0.123 & 0.174 & 0.185 & 0.193 & 30 & 15 & 8 & 6 & 4 \\
    \quad $\lambda=0$, skip 5 & 0.085 & 0.077 & 0.107 & 0.142 & 0.150 & 0.163 & 30 & 16 & 10 & 8 & 7 \\
    \quad $\lambda=0.2$ & 0.095 & 0.059 & 0.091 & 0.140 & 0.163 & 0.197 & 30 & 15 & 8 & 6 & 3 \\
    \quad $\lambda=0.3$ & 0.080 & 0.058 & 0.089 & 0.134 & 0.160 & 0.187 & 30 & 15 & 8 & 6 & 3 \\
    \quad $\lambda=0.5$ & 0.050 & 0.060 & 0.090 & 0.134 & 0.161 & 0.184 & 30 & 15 & 8 & 6 & 3 \\
    \quad mixture of $\lambda\in\{0,0.2,0.5\}$ & 0.080 & 0.061 & 0.094 & 0.141 & 0.165 & 0.195 & 30 & 15 & 8 & 6 & 3 \\
    \bottomrule
  \end{tabular}
\end{table}

\subsection{Experiments}
\label{app:routing-experiments}

The $28{,}000$ replies of each model per prompt are split into disjoint pools:
$10{,}000$ form the score pool (\cref{app:routing-scores}), $3{,}000$ an
observation pool from which the streams are built, and the remaining $15{,}000$
five reference pools of $3{,}000$ from which the references are drawn.

At step $i$, the confidence sequence takes $K$ fresh replies
$R_{i1},\ldots,R_{iK}$ of $R$ and $P_{i1},\ldots,P_{iK}$ of $P$ from a reference
pool, together with uniform draws $U_{i1},\ldots,U_{iK}$, and sets, for every
candidate $\theta$,
\begin{equation}
\label{eq:routing-references}
    \widetilde Y_{ij}^{(\theta)}=
    \begin{cases}R_{ij}&U_{ij}<\theta,\\ P_{ij}&\text{otherwise.}\end{cases}
\end{equation}
Since $\theta$ only decides which model answers, these are draws from
$f_\theta$ for every candidate at once. All candidates share the same $2K$
replies and $K$ uniform draws (common random numbers), so the number of model
calls does not grow with the number of candidates.

A stream of $n=300$ steps uses $K=10$ replies of each model per step, so exactly
one reference pool, and no reply is used twice within a repetition.
As is usual in
semi-synthetic designs, replies are reused across repetitions: each reference
pool serves $40$ of the $200$ repetitions, reshuffled each time, and all streams
draw from the same observation pool.

For each prompt and each true routing fraction
$\thetatrue\in\{0,0.1,0.3\}$, we run $200$ repetitions. We compare three confidence sequences:
\begin{itemize}
\item \emph{Conformal count ratio.} Count-ratio score, minimum-evidence alternative.
\item \emph{Conformal NRE.} NRE score, minimum-evidence alternative.
\item \emph{Conformal count ratio (oracle).} Count-ratio score, with the true $\thetatrue$ as the alternative.
\end{itemize}

For each method we report the following quantities:
\begin{itemize}
\item \emph{Cumulative miscoverage.} The fraction of repetitions in which the
confidence sequence has excluded the true $\thetatrue$ at some step
$i\le n$. Figures show it with Wilson $95\%$ intervals.
\item \emph{Mean hull width.} The largest retained candidate minus the smallest,
averaged over the repetitions with a nonempty set.
\item \emph{Detection time.} The number of steps until $\theta=0$ leaves the set;
we report its median, because it is infinite in repetitions where
$\theta=0$ is never excluded.
\item \emph{Confidence sets over time.} For every $\thetatrue$ we also show the
sets themselves. At each step $n$ we take the lower and upper ends of each
repetition's accepted candidate set, over the repetitions with a nonempty set, and plot the mean
of the lower ends and the mean of the upper ends, with $95\%$ confidence
intervals for the means.
\end{itemize}

\subsection{Baselines}
\label{app:routing-baselines}

All baselines are computed on the same streams as the conformal confidence
sequences, over the same $101$ candidates.

\paragraph{Fixed-$n$ baseline: IRIS interval.} IRIS \citep{Zhang2026IRIS}
reduces each reply to a single bit, a \emph{tell}, which marks the reply as
atypical for the advertised model $P$. In IRIS, a reply $y$ is a tell when
$-\log\hat{\mathbb{P}}(P\mid y)$, computed from a classifier that predicts
which model produced $y$, exceeds a high quantile of the same score on $P$'s
replies. We apply the same rule to our count-ratio estimate: a reply is a tell
if $\log\hat\rho(y)$ exceeds a threshold chosen so that at most $5\%$ of $P$'s
replies in the score pool are tells. Let $a_1$ and $q$ be the shares of tells
among $P$'s and $R$'s replies in the score pools (with half a count added).
As in IRIS, the threshold, $a_1$ and $q$ are fixed before the audit starts;
during the audit, each reply is only checked against the threshold.
Under routing fraction $\theta$, a reply is a tell with probability
$(1-\theta)a_1+\theta q$. If $f_n$ is the share of tells among the first $n$
replies, IRIS's estimate is $\hat\theta_n=(f_n-a_1)/(q-a_1)$, clipped to
$[0,1]$, and its delta-method interval accounts for the randomness in $f_n$
and for the estimation of $a_1$ and $q$ on the score pools. The interval is
built for a single $n$ fixed in advance, and even there its coverage is
nominal only asymptotically, because the delta method relies on a normal
approximation. With very few replies the interval is degenerate. We therefore
count it as covering $\thetatrue$ for $n<10$, which favours it, and in the
figures we plot its lower and upper ends only from $n=2$.

\paragraph{Anytime-valid baseline: Betting IRIS.} IRIS \citep{Zhang2026IRIS}
gives a confidence interval for $\theta$ only at a fixed $n$, and uses a
betting test on the tells only to decide whether the endpoint is honest. We
extend this test to a confidence sequence for $\theta$ \citep{WaudbySmith2024}.
Let $X_i$ be the tell indicator of reply $i$, defined as in the IRIS interval
above. Under routing fraction $\theta$, the tell rate is
$p_\theta=(1-\theta)a_1+\theta q$, so
\[
    Z_i^{(\theta)} = X_i - p_\theta \in [-1, 1]
\]
has mean zero when $\theta$ is the true routing fraction. For each candidate
$\theta$ we track the wealth
\[
    W_n^{(\theta)} = \tfrac12\prod_{i=1}^n\bigl(1+\lambda_i^{+}Z_i^{(\theta)}\bigr)
    + \tfrac12\prod_{i=1}^n\bigl(1-\lambda_i^{-}Z_i^{(\theta)}\bigr)
\]
of two bettors, one on a positive and one on a negative mean of
$Z_i^{(\theta)}$, who start with half the wealth each. The bets
$\lambda_i^{\pm}\in[0,0.9]$ are chosen from $Z_1^{(\theta)},\ldots,
Z_{i-1}^{(\theta)}$ by the aGRAPA rule of \citet{WaudbySmith2024}, and $\theta$
is excluded from the set once $W_n^{(\theta)}$ reaches $1/\alpha$. The
sequence is only approximately valid, because $a_1$ and $q$ are estimated on
the score pools, so $p_\theta$ is not exactly the tell rate under $\theta$.

\subsection{Results}
\label{app:routing-results}

\paragraph{Validity.} The conformal sequences and Betting IRIS keep cumulative
miscoverage below $\alpha=0.1$ throughout, at every $\thetatrue$
(\cref{tab:routing}; \cref{fig:routing-theta0,fig:routing-theta01,fig:routing-theta03}, panels a and c).
The IRIS interval, checked at every step, does not: its cumulative miscoverage
exceeds $\alpha$ at $\thetatrue=0.1$ and $0.3$ on both prompts, and at
$\thetatrue=0$ on the digits prompt.

\begin{table}[t]
  \centering
  \small
  \setlength{\tabcolsep}{3pt}
  \caption{LLM routing audit on semi-synthetic test streams, drawn independently of the development streams used to choose $\lambda=0.3$: $200$ repetitions for each $\thetatrue\in\{0,0.1,0.3\}$, $n=300$ steps, $\alpha=0.1$. Miscoverage is the cumulative miscoverage at $n=300$, taking the largest of the three values of $\thetatrue$; for the IRIS interval it counts misses from $n=10$ onwards. Width is the mean hull width at $n=300$ over repetitions with a nonempty set, at $\thetatrue=0$, $0.1$ and $0.3$. Detect is the median number of steps until $\theta=0$ leaves the set, at $\thetatrue=0.1$ and $0.3$.}
  \label{tab:routing}
  \begin{tabular}{l cccccc@{\hspace{8pt}}cccccc}
    \toprule
    & \multicolumn{6}{c}{8 random digits} & \multicolumn{6}{c}{a number from 1 to 100} \\
    \cmidrule(lr){2-7}\cmidrule(lr){8-13}
    & & \multicolumn{3}{c}{width} & \multicolumn{2}{c}{detect} & & \multicolumn{3}{c}{width} & \multicolumn{2}{c}{detect} \\
    Method & miscov. & $0$ & $0.1$ & $0.3$ & $0.1$ & $0.3$ & miscov. & $0$ & $0.1$ & $0.3$ & $0.1$ & $0.3$ \\
    \midrule
    \multicolumn{13}{l}{\textbf{\emph{Conformal confidence sequences}}} \\
    \quad Conformal count ratio & 0.08 & 0.01 & 0.09 & 0.16 & 16 & 5 & 0.09 & 0.01 & 0.09 & 0.16 & 16 & 5 \\
    \quad Conformal NRE & 0.08 & 0.01 & 0.09 & 0.16 & 17 & 5 & 0.09 & 0.01 & 0.09 & 0.16 & 17 & 5 \\
    \quad Conformal count ratio (oracle) & 0.00 & 0.01 & 0.07 & 0.11 & 7 & 2 & 0.00 & 0.01 & 0.07 & 0.11 & 7 & 2 \\
    \addlinespace
    \multicolumn{13}{l}{\textbf{\emph{Anytime-valid baseline}}} \\
    \quad Betting IRIS & 0.06 & 0.05 & 0.11 & 0.14 & 73 & 21 & 0.07 & 0.03 & 0.10 & 0.13 & 81 & 23 \\
    \addlinespace
    \multicolumn{13}{l}{\textbf{\emph{Fixed-$n$ baseline}}} \\
    \quad IRIS interval & 0.70 & 0.03 & 0.06 & 0.08 & 21 & 4 & 0.66 & 0.02 & 0.06 & 0.08 & 24 & 4 \\
    \bottomrule
  \end{tabular}
\end{table}

\begin{figure*}[t]
  \centering
  \includegraphics[width=\textwidth]{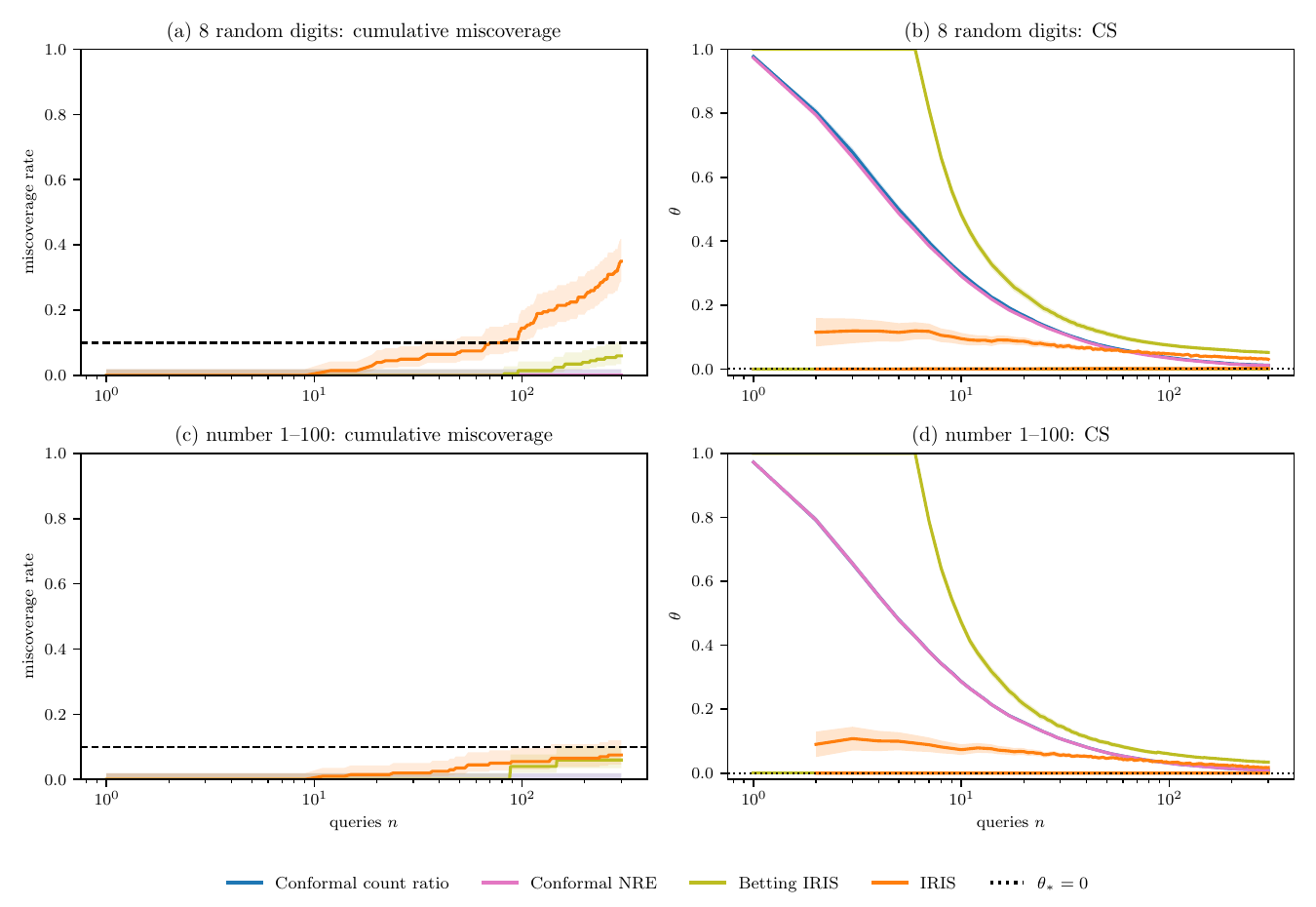}
  \caption{LLM routing audit at $\thetatrue=0$ (an honest API), one row per prompt.
(a, c) Cumulative miscoverage (Wilson $95\%$ bands; dashed $\alpha=0.1$); the
fixed-time IRIS interval is counted as covering $\thetatrue$ for $n<10$. (b, d) Confidence
sequences: mean lower and upper ends over the $200$ repetitions with a
nonempty set, with $95\%$ intervals for the means; IRIS is shown from $n=2$.}
  \label{fig:routing-theta0}
\end{figure*}

\begin{figure*}[t]
  \centering
  \includegraphics[width=\textwidth]{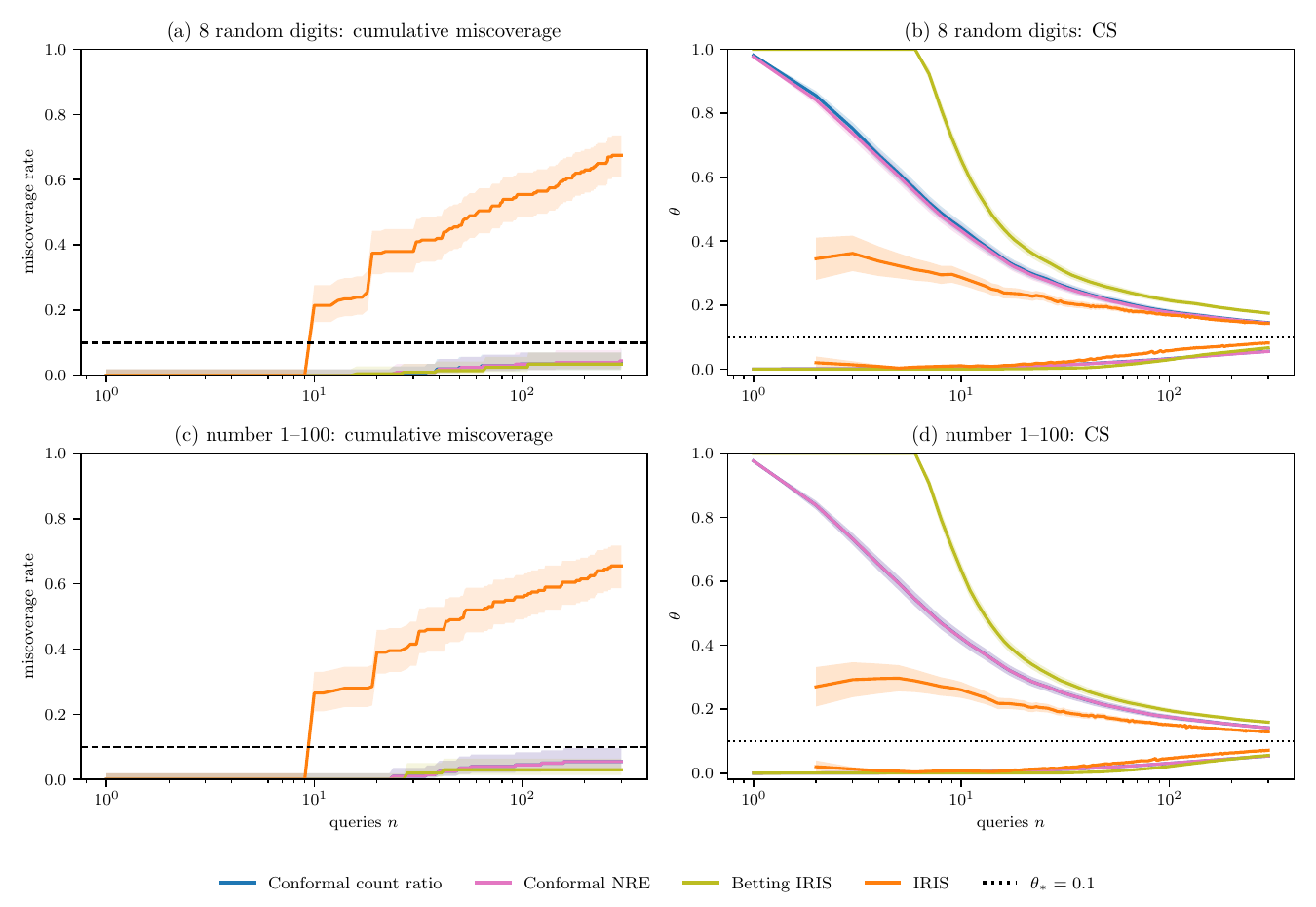}
  \caption{LLM routing audit at $\thetatrue=0.1$, one row per prompt.
(a, c) Cumulative miscoverage (Wilson $95\%$ bands; dashed $\alpha=0.1$); the
fixed-time IRIS interval is counted as covering $\thetatrue$ for $n<10$. (b, d) Confidence
sequences: mean lower and upper ends over the $200$ repetitions with a
nonempty set, with $95\%$ intervals for the means; IRIS is shown from $n=2$.}
  \label{fig:routing-theta01}
\end{figure*}

\begin{figure*}[t]
  \centering
  \includegraphics[width=\textwidth]{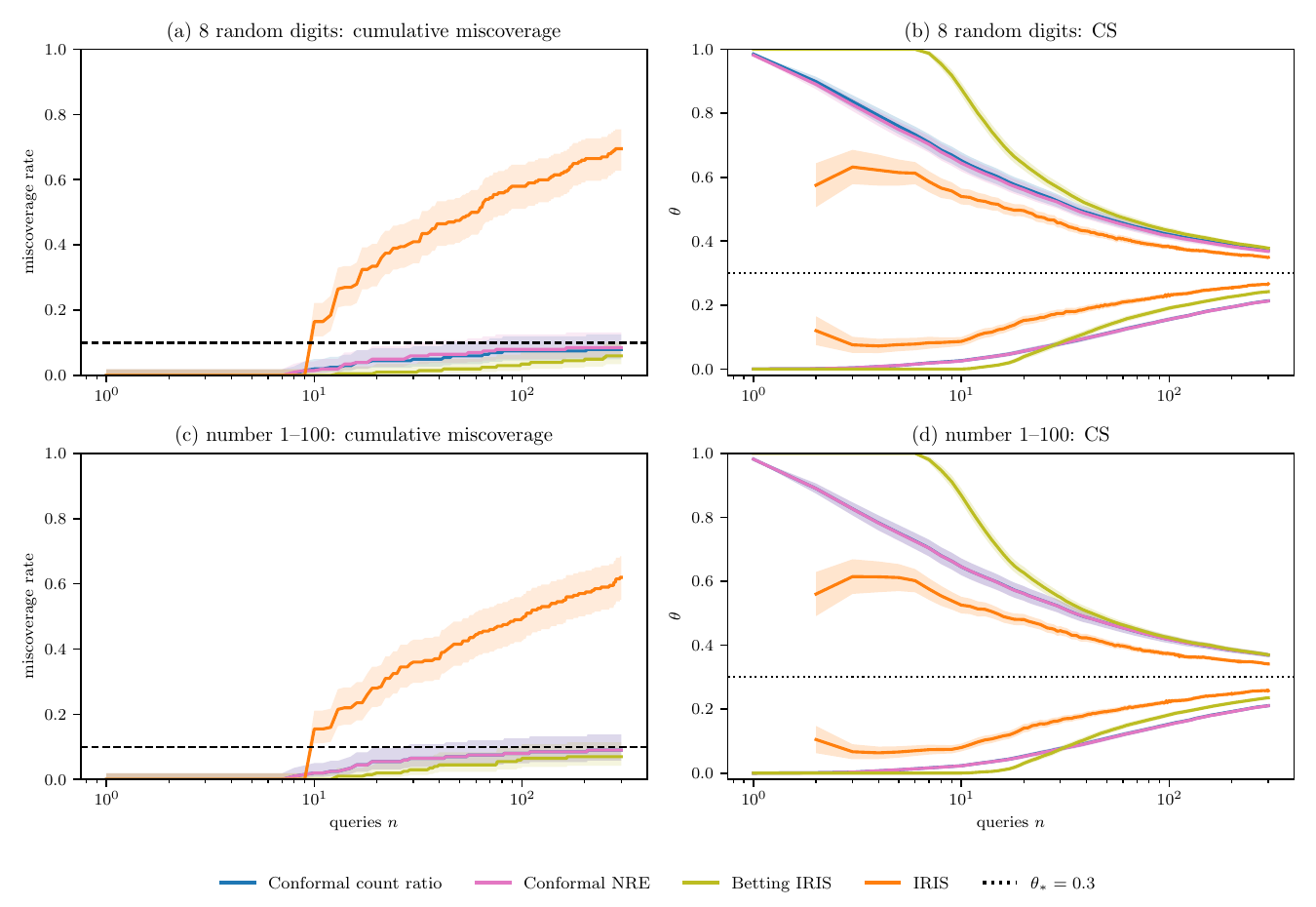}
  \caption{LLM routing audit at $\thetatrue=0.3$, one row per prompt.
(a, c) Cumulative miscoverage (Wilson $95\%$ bands; dashed $\alpha=0.1$); the
fixed-time IRIS interval is counted as covering $\thetatrue$ for $n<10$. (b, d) Confidence
sequences: mean lower and upper ends over the $200$ repetitions with a
nonempty set, with $95\%$ intervals for the means; IRIS is shown from $n=2$.}
  \label{fig:routing-theta03}
\end{figure*}

\paragraph{Width.} The conformal sequences are narrower than Betting IRIS
throughout at $\thetatrue=0$ and $0.1$, and early on at $\thetatrue=0.3$,
where Betting IRIS ends slightly narrower, with the NRE matching the count
ratio (\cref{tab:routing}; \cref{fig:routing-theta0,fig:routing-theta01,fig:routing-theta03}, panels b and d).
The sets are nonempty in almost every repetition (\cref{tab:routing-nonempty}).
The conformal sequence with the oracle alternative $\thetatrue$ achieves the
smallest width among the confidence sequences, suggesting that further gains
may be obtained by tuning $\hat\theta$.

\begin{table}[t]
  \centering
  \small
  \setlength{\tabcolsep}{4pt}
  \caption{Percentage of nonempty confidence sets at $n=300$, over $200$ repetitions. Mean hull widths in \cref{tab:routing} are averaged over these repetitions only.}
  \label{tab:routing-nonempty}
  \begin{tabular}{l ccc ccc}
    \toprule
    & \multicolumn{3}{c}{8 random digits} & \multicolumn{3}{c}{a number from 1 to 100} \\
    \cmidrule(lr){2-4}\cmidrule(lr){5-7}
    Method & $\thetatrue=0$ & $\thetatrue=0.1$ & $\thetatrue=0.3$ & $\thetatrue=0$ & $\thetatrue=0.1$ & $\thetatrue=0.3$ \\
    \midrule
    \multicolumn{7}{l}{\textbf{\emph{Conformal confidence sequences}}} \\
    \quad Conformal count ratio & 100.0 & 100.0 & 100.0 & 100.0 & 100.0 & 100.0 \\
    \quad Conformal NRE & 100.0 & 100.0 & 100.0 & 100.0 & 100.0 & 100.0 \\
    \quad Conformal count ratio (oracle) & 100.0 & 100.0 & 100.0 & 100.0 & 100.0 & 100.0 \\
    \addlinespace
    \multicolumn{7}{l}{\textbf{\emph{Anytime-valid baseline}}} \\
    \quad Betting IRIS & 99.0 & 100.0 & 100.0 & 94.0 & 100.0 & 100.0 \\
    \addlinespace
    \multicolumn{7}{l}{\textbf{\emph{Fixed-$n$ baseline}}} \\
    \quad IRIS interval & 100.0 & 100.0 & 100.0 & 100.0 & 100.0 & 100.0 \\
    \bottomrule
  \end{tabular}
\end{table}

\paragraph{Detection.} The conformal sequences with the count ratio and the
NRE rule out honest service, $\theta=0$, four to five times sooner than
Betting IRIS, at both $\thetatrue=0.1$ and $0.3$ (\cref{tab:routing}). With
the oracle alternative, the count-ratio sequence is more than twice as fast
again, so here too a
better choice of $\hat\theta$ could help.

\subsection{Limitations}
\label{app:routing-limitations}

The simulator assumes that every reply comes from $P$ or $R$, run with the
settings used to collect our pools, and that each query is routed
independently. If the API departs from this, for example by also using a third
model, our scores have little power against it: they are likelihood ratios
between members of the model $\{f_\theta\}$, so they only compare values of
$\theta$. Detecting such departures would need a score that targets them.

\section{APPLICATION: THE LENS MASS FUNCTION FROM MICROLENSING TIMESCALES}
\label{app:microlensing}

\subsection{Setting and data}
\label{app:microlensing-data}

\paragraph{Scientific target.} We infer the mass function of stars and brown
dwarfs toward the centre of the Milky Way, that is, how common objects of each
mass are. It is modelled as a power law $\xi(M)\propto M^{\alpha}$ whose slope
changes at a few fixed masses. The parameter of interest is
$\theta=(\theta_1,\theta_2)$: the slope $\theta_1$ for low-mass stars and the slope $\theta_2$ for brown dwarfs. 
These objects are faint, and brown dwarfs emit almost
no light, so they are hard to count directly. Microlensing detects them through
their gravity instead, whether or not they shine.

\paragraph{Microlensing events.} When such an object (the \emph{lens}) passes
almost exactly in front of a more distant star (the \emph{source}), its gravity
bends the source's light, and the source brightens and then fades back over a
few days to a few months. This brightening is a microlensing event
\citep{Paczynski1986}. OGLE (the Optical Gravitational Lensing Experiment), a
long-running survey with a dedicated telescope in Chile, detects such events by
repeatedly photographing the same patches of sky toward the centre of the Milky
Way and recording the brightness of millions of stars.

The length of an event, its Einstein timescale $t_E$, is longer for heavier
lenses. Because $t_E$ also depends on the lens's and source's distances and
speeds, one event says little about the lens's mass. The distribution of $t_E$
over many events, however, depends on how common lenses of each mass are, and
hence on $\theta$. \Cref{app:microlensing-simulator} gives this distribution for each
$\theta$, given the event's field and measurement error.

\paragraph{Data.} We use the catalogue of \citet{Mroz2019}: $5{,}790$ events
detected in the low-cadence fields of the fourth phase of OGLE (OGLE-IV) over
eight yearly seasons (2010--2017). The survey divides the sky into fixed patches called \emph{fields}.
For each event the catalogue gives its field, its estimated timescale
$\hat t_{E,i}$, and error bars $e_i^+$ and $e_i^-$ on that estimate. For each
field $f$ it also gives the detection efficiency $\varepsilon_f(t_E)$, the
probability that an event of timescale $t_E$ in that field is detected; short
events are easily missed between photographs, so $\varepsilon_f$ is small at
short $t_E$. We keep the $3{,}514$ events in the $70$ fields with Galactic
coordinates $|\ell|\le10^\circ$ and $3^\circ\le|b|\le10^\circ$, roughly the
region covered by the simulator.

\paragraph{Observations and covariates.} Each event $i$ gives an observation
$Y_i=\log_{10}\hat t_{E,i}$ and covariates $X_i=(\ell_{f_i},b_{f_i},\sigma_i)$:
the coordinates of the centre of its field and its measurement error on the log
scale, $\sigma_i=\tfrac12[\log(1+e_i^+/\bar t_{E,i})-\log(1-e_i^-/\bar t_{E,i})]$,
where $e_i^\pm\ge0$ are the catalogue's error bars about its median timescale
$\bar t_{E,i}$. We model $Y_i$ given $X_i$: for a given field, the simulator
(\cref{app:microlensing-simulator}) produces the event's true timescale $t$, and
the measured timescale is $\hat t_{E,i}=t\,e^{\sigma_i\eta}$ with
$\eta\sim\mathcal N(0,1)$. This symmetric log-normal error, with $\sigma_i$
treated as known, approximates the catalogue's asymmetric error bars. The references for event $i$ are
therefore simulated in the same field and with the same error
(\cref{app:microlensing-simulator}).

\paragraph{Independence.} Microlensing events occur independently of one
another, at random times and places, and each involves a different pair of lens
and source. The survey detects each event with a probability that depends only
on its field and timescale. Given their number, the detected events are
therefore independent draws from one distribution, so the pairs $(X_i,Y_i)$ are
i.i.d.; in particular, given the covariates, the timescales are independent and
each $Y_i$ depends only on its own $X_i$, which is what the conditional
construction requires (\cref{cor:conditional-validity}). Each repetition of the
confidence sequence processes the events in a fresh random order.

\paragraph{Held-out and analysed events.} We split the events by season. The
$1{,}716$ events from even seasons (2010, 2012, 2014 and 2016) are held out and
used only to check the model (\cref{app:microlensing-fit-check}) and to choose the
truth of the semi-synthetic runs. The $1{,}798$ events from odd seasons (2011,
2013, 2015 and 2017) are the data on which the confidence sequence is run,
and on which the fitted model is tested.

\subsection{Simulator}
\label{app:microlensing-simulator}

\paragraph{Galactic model.} We use the Galactic model of \citet{Koshimoto2021},
fitted to Gaia, VVV, BRAVA, and OGLE data, as implemented in \texttt{genulens} \citep{KoshimotoRanc2021genulens}. 
It describes the Galaxy as different groups of
stars, or \emph{components}, each with its own spatial density, velocities, and
age; lenses and sources are drawn from these components.

\paragraph{Parameter.} The parameter $\theta=(\theta_1,\theta_2)$ enters 
the simulator through the initial mass
function $\xi_\theta(M)$, the number of objects per unit mass with mass $M$ at
birth. It is a broken power law. The mass range from $0.001$ to $120\,M_\odot$
is divided at $0.01$, $0.08$, and $0.86\,M_\odot$ into four intervals, 
and on the $k$-th interval $\xi_\theta(M)\propto M^{\alpha_k}$, where
$\alpha_2=\theta_1$ and $\alpha_3=\theta_2$ are the free slopes and $\alpha_1$
and $\alpha_4$ are fixed; on
a log--log plot, $\xi_\theta$ is a straight line within each interval, and the
pieces join at the break masses. From heaviest to lightest, the intervals and
their slopes are:
\begin{center}
\begin{tabular}{lll}
\toprule
Mass range ($M_\odot$) & Objects & Slope \\
\midrule
$0.86$--$120$ & heavier stars & $\alpha_1=-2.32$ (fixed) \\
$0.08$--$0.86$ & low-mass stars & $\theta_1$ \\
$0.01$--$0.08$ & brown dwarfs & $\theta_2$ \\
$0.001$--$0.01$ & planetary-mass objects & $\alpha_4=-0.18$ (fixed) \\
\bottomrule
\end{tabular}
\end{center}
The parameter $\theta$ is restricted to the box $[-2,0]\times[-2,10]$. The break masses and the fixed slopes are the defaults of
\texttt{genulens}, from the fit of \citet{Koshimoto2021}. As $\theta_2$ grows,
brown dwarfs disappear and the model converges to a limit with no lenses below
$0.08\,M_\odot$. Before running the confidence sequences we fixed the upper end
of $\theta_2$ as the value at which the distribution of detected timescales is
within about $1\%$ total variation of this limit at every $\theta_1$ in the box, which gives $10$; the box therefore
contains every model that the data could distinguish from the limit.

\paragraph{Generative model.} For a field $f$ and parameter $\theta$,
\texttt{genulens} (version 2.0.0a4) generates a detected event in five steps:
\begin{enumerate}
    \item a source: a Galactic component $c_S$ (an age bin of the thin disk, the
    thick disk, the bar, or another component of the model) and a distance
    $D_S$, with probability proportional to $s_{c_S}(\theta)\,\sigma_{c_S}(D_S)$;
    \item a lens: a component $c_L$ and a distance $D_L<D_S$, with probability
    proportional to $n_{c_L}(\theta)\,\rho_{c_L}(D_L)$;
    \item the lens's initial mass $M\sim\xi_\theta$;
    \item given these, with no further dependence on $\theta$: the lens's
    evolution over its component's age (into a white dwarf, neutron star, or
    black hole, or not at all), velocities, and the timescale $t$;
    \item an event-rate weight $\Gamma(z)$ and detection with probability
    $\varepsilon_f(t)$.
\end{enumerate}
Here $\sigma_c$ and $\rho_c$ are fixed spatial profiles of the fitted Galactic
model. The factors $s_c(\theta)$ and $n_c(\theta)$ turn each component's fitted
\emph{mass} density into the \emph{number} of source stars and of lenses; they
depend on $\theta$ because the mean mass of a component's objects, and its split
into stars and remnants, are integrals over $\xi_\theta$ at the component's
age. The parameter therefore enters in three places, not only through the lens
mass. Writing $z$ for all latent variables, a detected event has density
\[
    p_\theta(z\mid f)=\frac{1}{Z_\theta}\,
    s_{c_S}(\theta)\,\sigma_{c_S}(D_S)\,
    n_{c_L}(\theta)\,\rho_{c_L}(D_L)\,
    \xi_\theta(M)\,
    k(z_{\rm rest}\mid c_S,c_L,D_S,D_L,M)\,
    \Gamma(z)\,\varepsilon_f(t),
\]
with an unknown normalizing constant $Z_\theta$ and a $\theta$-free kernel $k$
for step 4. We write $q_\theta(\cdot\mid f)$ for the induced distribution of
$t$ \citep{Koshimoto2021}; the measured timescale adds the error of
\cref{app:microlensing-data}. The density of $q_\theta$ integrates over
distances, masses, and velocities and has no closed form, so the model can be
simulated but its likelihood is intractable.

\subsection{Exact simulation by reweighting}
\label{app:microlensing-reweighting}

\paragraph{Why reweight.} \texttt{genulens} costs a few microseconds per event
but about $0.2$\,s per call, because each call rebuilds the Galaxy for the given
$\theta$. The confidence sequence needs fresh references at every candidate and
every step, far too many calls. We therefore simulate, once per field, a
\emph{bank} of detected events at a proposal $\theta_{\rm prop}=(-1.2,-1.6)$,
and turn bank entries into exact draws at any $\theta$ by rejection sampling.

\paragraph{Banks.} \texttt{genulens} does not draw events in proportion to how
often they occur: it returns raw events $z$ with importance weights $\Gamma(z)$.
We keep each raw event with probability $\min\{\Gamma(z)/c,1\}$, $c=50$, and
then with probability $\varepsilon_f(t)$. If $\Gamma\le c$, the kept events are
i.i.d.\ draws from $p_{\theta_{\rm prop}}(\cdot\mid f)$; otherwise their total
variation distance from it is at most $\mathbb E[(\Gamma-c)_+]/\mathbb E[\Gamma]$.
The weights have mean $0.71$, and none of the $1.4\times10^{10}$ raw events
behind our banks exceeds $c$, so the kept events are
exact draws from the model with weights truncated at $c$, and the estimate of
this bound, computed for every bank, is zero. An
entry stores $t$, the lens's current mass and remnant type, and the components
$c_L$ and $c_S$.

\paragraph{Density ratio.} Every factor of $p_\theta$ that does not involve
$\theta$ cancels:
\[
    \frac{p_\theta(z\mid f)}{p_{\theta_{\rm prop}}(z\mid f)}
    =\frac{Z_{\theta_{\rm prop}}}{Z_\theta}\,w_\theta(z),
    \qquad
    w_\theta(z)=
    \frac{\xi_\theta(M)}{\xi_{\theta_{\rm prop}}(M)}\cdot
    \frac{n_{c_L}(\theta)}{n_{c_L}(\theta_{\rm prop})}\cdot
    \frac{s_{c_S}(\theta)}{s_{c_S}(\theta_{\rm prop})}.
\]
The weight depends on an entry only through $(M,c_L,c_S)$, and each factor is
computed exactly. The mass term uses the gridded density from which
\texttt{genulens} samples ($1{,}000$ log-mass bins, linear within a bin). $M$ is
the lens's mass for stars and brown dwarfs and follows from \texttt{genulens}'s
deterministic initial--final mass relation for white dwarfs; neutron stars and
black holes descend from $M>9\,M_\odot$, where the mass term is constant. The
component terms come from \texttt{genulens} itself. Because the profiles do not
depend on $\theta$, each ratio is the same at every distance and on every
sightline (to within $2\times10^{-6}$), so one call per candidate tabulates
them. A component
that is switched off in the model has zero density at every $\theta$, is never
drawn, and is excluded.

\paragraph{Rejection.} Let $B_\theta\ge\sup_z w_\theta(z)$ be the product of the
three factors' maxima (the mass term peaks at a grid point, the component terms
over the components present). Accepting a bank entry with probability
$w_\theta/B_\theta$ yields an entry with density proportional to
$p_{\theta_{\rm prop}}w_\theta\propto p_\theta$, that is, an exact draw from
the model at $\theta$; the constant $Z_\theta$ is never needed.

\paragraph{References.} All candidates draw their references from the same
banks. For each field, the bank is put in one random order, shared by all
candidates together with the uniform draws of the rejection step, and each
candidate keeps the entries it accepts. Since candidates accept entries at
different rates, we align them by step: at each event of the field, the shared
order is scanned until every candidate has accepted $K$ entries, each candidate
takes its first $K$, and the next event starts after the last entry scanned.
Every candidate therefore draws its references from entries that no candidate
has used at an earlier step, and never uses an entry twice. Its references are
i.i.d.\ draws from its own model, independent of the observations and of all
earlier references, as \cref{prop:exact-martingale} requires for validity.

\paragraph{Bank sets.} We use eleven pairs of independently simulated bank
sets. In each pair, one set supplies references ($1.2\times10^7$ raw events per
field) and the other supplies simulated observations ($6\times10^6$), so
observations never share entries with references. One pair serves the real
data; its simulation set also supplies the NRE training data and the KDE score
(\cref{app:microlensing-cs}), and is independent of the other ten pairs, which
are rotated over the semi-synthetic runs. The rotation matters because a bank
is itself a random sample: references drawn from it are exact draws from the
model only on average over the bank, and the coverage guarantee includes this
randomness.

\paragraph{Validation.} At nine values of $\theta$, including all six corners of
$\{-2,0\}\times\{-2,1.2,10\}$ and the published fit, and in four fields spanning
the survey, we compared reweighted entries of a production bank set with
independent direct \texttt{genulens} simulations: Kolmogorov--Smirnov tests on
$\log t$ and chi-square tests on the lens component, the source component, and
the lens class. None of the $144$ tests has $p<0.01$ (the smallest is $0.019$),
and median timescales agree to $0.9\%$. With the mass term alone, $35$ of $80$
comparable tests have $p<10^{-4}$.

\subsection{Model checking}
\label{app:microlensing-fit-check}

To assess whether the model is well specified, we fit $\theta$ on the held-out
even seasons and test the fitted model on the odd seasons, which are the ones
analysed. The test compares the observed distribution of $\log_{10}\hat t_E$
with the one the model predicts: we bin the observed values at their quantiles,
simulate the expected count in each bin conditionally on every event's
covariates, and compute Pearson's $\chi^2$ statistic. On synthetic data from the model, this
statistic follows its nominal $\chi^2$ distribution. Fitting
$(\theta_1,\theta_2)$ on the even seasons gives $(-1.1,1.0)$, and testing it on
the odd seasons gives $p=0.09$ with 20 bins and $p=0.19$ with 40 bins, so the
data show no evidence against the model. The fitted $\theta_1\approx-1.1$ agrees
with the published $-1.13$; $\theta_2$ is weakly identified, possibly due to
low-cadence fields detecting few short events.

\subsection{Confidence sequences and baselines}
\label{app:microlensing-cs}

We use $G=2{,}048$ scrambled Sobol candidates on the box, $K=15$ references,
$\alpha=0.1$, the minimum-evidence alternative, and an NRE-C score trained as in
\cref{app:training} on $(\theta,X,Y)$ with covariates drawn from the empirical
pool of all $3{,}514$ events independently of $\theta$, so that the network
estimates the conditional ratio (\emph{conformal NRE-C CS}).

\paragraph{Conformal KDE.} Because $Y$ is one-dimensional given $X$, its
conditional density $\hat p_\theta(Y_i\mid X_i)$ can be estimated accurately by
a kernel density estimate (KDE) from the simulation set of the real-data pair,
reweighted to $\theta$, with a kernel width that includes the event's
measurement error $\sigma_i$. Its log-ratio is a
second, near-exact conformal score, which gives the \emph{conformal KDE CS}.
Computed from a bank independent of the references
and of any bank that generated the observations, it is positive and predictable, so the sequence stays valid. The NRE and kernel scores are run together, sharing
observations, orderings, and references, so the comparison is paired. 

\paragraph{Baselines.} Recomputed at every $n$ on the same candidates and
orderings, the baseline is the Wilks set
$\{\theta:2[\max_\vartheta\hat\ell_n(\vartheta)-\hat\ell_n(\theta)]\le\chi^2_{2,0.9}\}$
with $\hat\ell_n=\sum_{i\le n}\log\hat p_\theta(Y_i\mid X_i)$ from the kernel
likelihood (\emph{Wilks KDE CI}); we also compare with the published values of \citet{Koshimoto2021}.

\paragraph{Semi-synthetic runs.} These use the even-season fit
$\thetatrue=(-1.1,1.0)$ as the truth. It replaces the nearest of the $2{,}048$
Sobol candidates, so that it is itself a candidate and coverage can be easily assessed. Each repetition takes the
covariates of the $1{,}798$ analysed events in a fresh random order and draws
each response from $q_{\thetatrue}(\cdot\mid X_i)$, with measurement error,
using distinct entries of a simulation bank, so the observations are exactly
i.i.d.\ given the covariates. 
Validity holds conditionally on the covariate
sequence (\cref{cor:conditional-validity}), so fixing the covariates loses
nothing. 
The $300$ repetitions are split into ten groups of $30$, each with its
own independent pair of banks. A repetition miscovers at time $n$ if $\thetatrue$ has been
excluded from the confidence set by then.

\paragraph{Real-data runs.} These run the confidence sequences on $200$ random
orderings of the analysed events. With no known truth, we report $\theta_1$ and
$\theta_2$ separately, through the projections of the confidence set onto each
(\cref{app:nuisance-projections}): for each endpoint of each projection we give its mean
over orderings with a normal $95\%$ interval for that mean.

\subsection{Results}
\label{app:microlensing-results}

\paragraph{Semi-synthetic runs.} Both the conformal NRE-C CS and the conformal
KDE CS keep their empirical cumulative miscoverage below $\alpha=0.1$ at every
$n$, consistent with the theory (\cref{tab:microlensing-semi});
\begin{table}[h]
\centering
\caption{OGLE-IV microlensing, semi-synthetic runs ($300$ repetitions at
$\thetatrue=(-1.1,1.0)$). \emph{Miscoverage}: cumulative miscoverage over
$n\le1{,}798$, with a Wilson $95\%$ interval. \emph{Retained}: mean fraction of the
candidates still in the set at $n=1{,}798$. \emph{$W_{1798}$ ($\theta_1$)}: mean width of
the projection onto $\theta_1$ at $n=1{,}798$, over repetitions with a nonempty
set (the $\theta_1$ range of the box has width $2$). Brackets give Wilson
$95\%$ intervals for miscoverage and normal $95\%$ intervals for the means. All
conformal sequences share observations and references.}
\label{tab:microlensing-semi}
\small
\setlength{\tabcolsep}{4pt}
\begin{tabular}{lccc}
\toprule
Set & Miscoverage & Retained & $W_{1798}$ ($\theta_1$)\\
\midrule
Conformal NRE-C CS (minimum evidence) & $0.080$ $[0.054,0.116]$ & $0.181$ $[0.174,0.188]$ & $0.69$ $[0.67,0.71]$ \\
Conformal KDE CS (minimum evidence) & $0.063$ $[0.041,0.097]$ & $0.178$ $[0.172,0.184]$ & $0.67$ $[0.65,0.68]$ \\
Conformal NRE-C CS (oracle alternative) & --- & $0.108$ $[0.105,0.111]$ & $0.49$ $[0.48,0.50]$ \\
Conformal KDE CS (oracle alternative) & --- & $0.104$ $[0.101,0.108]$ & $0.46$ $[0.45,0.46]$ \\
Wilks KDE CI, monitored & $0.53$ $[0.47,0.58]$ & $0.103$ $[0.100,0.105]$ & $0.43$ $[0.42,0.43]$ \\
\bottomrule
\end{tabular}
\end{table}
empty final sets, in $0.7\%$ of
repetitions with each score, count as miscoverage. The oracle versions use
$\thetatrue$ as the alternative, so they can never exclude it, and we report no
miscoverage for them. The Wilks KDE CI is valid at a fixed $n$ (miscoverage $0.09$ at
$n=1{,}798$) but not under continuous monitoring.

For $\theta_2$, whose true value is $1.0$, the confidence sequences show that it
is weakly identified: after all $1{,}798$ events they exclude only values below
about $-1$ (median lower end $-0.95$ with the NRE score), that is, mass
functions with many brown dwarfs, but still reach near the top of the box, where
brown dwarfs disappear: their upper end exceeds $9$ in $96\%$ of repetitions
with the NRE score and $99\%$ with the KDE score.
We therefore compare the
methods' efficiency through their width in $\theta_1$.

For $\theta_1$,  NRE-C score is almost as efficient as the near-exact KDE score: the two
paired sequences end up with nearly the same width, so learning the
likelihood ratio with a network has a low efficiency cost in this experiment. Replacing the minimum-evidence
alternative by the oracle shrinks the sequence substantially, showing that most
of the extra width comes from estimating the alternative; even with the oracle,
the sequence remains slightly wider than the fixed-$n$ Wilks KDE CI, reflecting the cost of anytime validity.

\paragraph{Real-data runs.} After all $1{,}798$ events, the conformal KDE CS
is nonempty in $99.5\%$ of the $200$ orderings and the conformal NRE-C CS in all of
them. For $\theta_1$, both sequences shrink steadily with $n$ and contain the
published value $\theta_1=-1.13$ \citep{Koshimoto2021} in $98\%$ (NRE-C) and $99\%$
(KDE) of orderings (and $\theta_2=-0.18$ in $98.5\%$ and $99\%$).
As in the semi-synthetic runs, the NRE-C score is almost as efficient as the KDE
score, and both sequences are wider than the Wilks KDE CI, as expected. As in the semi-synthetic runs, the confidence sequences learn little
about $\theta_2$: they exclude only low values, whereas the Wilks KDE CI also
excludes high ones, including the limit with no brown dwarfs.

\end{document}